\runningtitle{Efficient Rounding for the Noncommutative Grothendieck Inequality}
\runningauthor{Assaf Naor, Oded Regev, and Thomas Vidick}
\newcommand{\eqdef}{\stackrel{\mathrm{def}}{=}}
\newenvironment{step}
  {
    \begin{enumerate}

  }
  {\end{enumerate}}
\newenvironment{protocol*}[1]
  {
    \begin{center}
      \hrulefill\\
      \textbf{#1}
  }
  {
    \vspace{-1\baselineskip}
    \hrulefill
    \end{center}
  }
\renewcommand{\le}{\leqslant}
\renewcommand{\ge}{\geqslant}
\renewcommand{\leq}{\leqslant}
\renewcommand{\geq}{\geqslant}
\newcommand{\e}{\varepsilon}
\renewcommand{\d}{\delta}
\newcommand{\Bil}{\operatorname{Bil}_\R}
\newcommand{\poly}{{\mathrm{poly}}}
\newcommand{\Tr}{\mbox{\rm Tr}}
\newcommand{\opt}{\mathrm{Opt}}
\newcommand{\alg}{\mathrm{Alg}}
\newcommand{\nc}[1]{\|#1\|_{\operatorname{nc}}}
\newcommand{\SDP}{\mathrm{SDP}}
\newcommand{\eps}{\varepsilon}
\renewcommand{\O}{\mathcal{O}}
\newcommand{\U}{\mathcal{U}}
\newcommand{\E}{\mathbb{E}}
\newcommand{\Xz}{X_{{z}}}
\newcommand{\Yz}{Y_{{z}}}
\newcommand{\Xe}{X_{{\eps}}}
\newcommand{\Ye}{Y_{{\eps}}}
\begin{document}

\begin{frontmatter}[classification=text]
\title{Efficient Rounding for the\\ Noncommutative Grothendieck Inequality\titlefootnote{An extended abstract of this paper appeared in the 
Proceedings of the 45th Annual ACM Symposium on Theory of Computing, 
2013~\cite{NRV13stoc}.}}
\author[naor]{Assaf Naor\thanks{Supported by NSF grant CCF-0832795, BSF grant
 2010021, the Packard Foundation and the Simons Foundation. Part of this work was completed while A. N. was
visiting Universit\'e de Paris Est Marne-la-Vall\'ee.}}
\author[regev]{Oded Regev\thanks{Supported by a European Research Council (ERC)
    Starting Grant. Part of the work done while the author was with
    the CNRS, DI, ENS, Paris.}}
\author[vidick]{Thomas Vidick\thanks{Partially supported by the National Science Foundation under Grant No. 0844626 and by the Ministry of Education, Singapore under the Tier 3 grant MOE2012-T3-1-009.}}
\begin{abstract}
The classical Grothendieck inequality has applications to the design of approximation algorithms for \cclass{NP}-hard optimization problems. We show that an algorithmic  interpretation may also be given for a \emph{noncommutative} generalization of the Grothendieck inequality  due to Pisier and Haagerup. Our main result, an efficient rounding procedure for this inequality, leads to a 
polynomial-time 
constant-factor %
approximation algorithm for an optimization problem which  generalizes the Cut Norm problem of Frieze and Kannan, and is shown here to have additional applications to robust 
principal 
component analysis and the orthogonal Procrustes problem.
\end{abstract}

\tocacm{G.1.6}
\tocams{68W25}

\tockeywords{approximation algorithms, Grothendieck inequality, semidefinite programming, principal component analysis}
\end{frontmatter}

\newpage
\tableofcontents
\newpage
\section{Introduction}

 In what follows, the standard scalar product on $\C^n$ is denoted $\langle\cdot,\cdot\rangle$, \ie, $\langle x,y\rangle=\sum_{i=1}^n x_i\overline{y_i}$ for all $x,y\in \C^n$. We always think of $\R^n$ as canonically embedded in $\C^n$; in particular the restriction of $\langle\cdot,\cdot\rangle$ to $\R^n$ is the standard scalar product on $\R^n$. Given a set $S$, the space $M_n(S)$ stands for all the matrices $M=(M_{ij})_{i,j=1}^n$ with $M_{ij}\in S$ for all $i,j\in \{1,\ldots,n\}$. Thus,  $M_n(M_n(\R))$ is naturally identified with the $n^4$-dimensional space of all $4$-tensors $M=(M_{ijkl})_{i,j,k,l=1}^n$ with $M_{ijkl}\in \R$ for all $i,j,k,l\in \{1,\ldots,n\}$.  The set of all $n\times n$ orthogonal matrices is denoted $\O_n\subseteq M_n(\R)$, and the  set of all $n\times n$ unitary matrices is denoted $\mathcal{U}_n\subseteq M_n(\C)$.

 Given $M=(M_{ijkl})\in M_n(M_n(\R))$ denote
\[
 \mathrm{Opt}_\R(M)\eqdef \sup_{U,V\in \O_n}\sum_{i,j,k,l=1}^n M_{ijkl}U_{ij}V_{kl}
\]
and, similarly, for $M=(M_{ijkl})\in M_n(M_n(\C))$ denote
\[
 \opt_\C(M)\eqdef \sup_{U,V\in \U_n}\Big|\sum_{i,j,k,l=1}^n M_{ijkl}U_{ij}\overline{V_{kl}}\Big|\,.
\]

 \begin{theorem}\label{thm:alg}
 There exists a 
polynomial-time 
algorithm that takes as input  $M\in M_n(M_n(\R))$  and 
outputs  %
$U,V\in \O_n$  such that 
\[
\mathrm{Opt}_\R(M)\le O(1)
\sum_{i,j,k,l=1}^n M_{ijkl}U_{ij}V_{kl}\,.
\]
Analgously,
there exists a 
polynomial-time 
algorithm that takes as input  $M\in M_n(M_n(\C))$ and 
outputs  %
$U,V\in \U_n$ such that $$\opt_\C(M)\le O(1)\Big|\sum_{i,j,k,l=1}^n M_{ijkl}U_{ij}\overline{V_{kl}}\Big|\,.$$
 \end{theorem}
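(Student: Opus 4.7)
\medskip

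\noindent\textbf{Proof proposal.} The plan is to prove the two statements in parallel by following the standard ``SDP relaxation + rounding'' paradigm, where the role of the Grothendieck inequality is played by its noncommutative version due to Pisier and Haagerup.

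\medskip

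\noindent\emph{Step 1: the SDP relaxation.} First I would write down a semidefinite relaxation $\mathrm{SDP}(M)$ of $\opt_\R(M)$ (and of $\opt_\C(M)$). The natural relaxation replaces the entries $U_{ij},V_{kl}$ of orthogonal matrices by inner products $\langle x_{ij},y_{kl}\rangle$ of vectors in a Hilbert space $\HH$, subject to the Gram-matrix constraints coming from the identities $UU^t=U^tU=I$ and $VV^t=V^tV=I$. Concretely, one would insist that the vectors $(x_{ij})$ come from a family $(X_i)_{i=1}^n$ of operators in $\lin{\HH}$ satisfying $\sum_i X_iX_i^*=\sum_i X_i^*X_i=\Id$, and similarly for $(Y_k)$. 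These constraints are expressible as linear conditions on a large Gram matrix, so $\mathrm{SDP}(M)$ is a genuine semidefinite program.

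\medskip

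\noindent\emph{Step 2: relating the SDP to the optimum.} The trivial inequality $\opt_\R(M)\le \mathrm{SDP}(M)$ holds by taking $X_i,Y_k$ to be the rows of the optimal orthogonal matrices. The reverse inequality $\mathrm{SDP}(M)\le C\cdot\opt_\R(M)$, for a universal constant $C$, is precisely the content of the noncommutative Grothendieck inequality of Pisier--Haagerup: any bilinear form on $M_n(\R)$ whose value on pairs of operators with the appropriate ``$C^*$-norm'' is bounded by $1$ can be evaluated at a pair of orthogonal matrices at a cost of only a constant multiplicative factor. I would invoke this inequality as a black box, together with duality, to conclude the equivalence
\[
\opt_\R(M)\le \mathrm{SDP}(M)\le C\cdot \opt_\R(M),
\]
and likewise over $\C$ using the complex form of the inequality.

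\medskip

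\noindent\emph{Step 3: solving and rounding.} The SDP $\mathrm{SDP}(M)$ can be solved to arbitrary accuracy in polynomial time by the ellipsoid method, yielding an explicit Hilbert space realization $(X_i),(Y_k)$ that nearly attains $\mathrm{SDP}(M)$. The main work, and the expected principal obstacle, is to design an efficient rounding procedure that maps such a solution to orthogonal matrices $U,V\in\O_n$ (resp.\ unitaries in $\U_n$) achieving a constant fraction of $\mathrm{SDP}(M)$. The natural idea is to pick a standard real (resp.\ complex) Gaussian element $g$ of $\HH$ and form the random matrices
\[
\widetilde U_{ij}\;=\;\langle g, X_i^* X_j\, g\rangle,\qquad \widetilde V_{kl}\;=\;\langle g, Y_k^* Y_l\, g\rangle,
\]
or an analogous linearization of the SDP vectors, and then project each of $\widetilde U,\widetilde V$ onto the orthogonal (resp.\ unitary) group via polar decomposition, setting $U=\widetilde U(\widetilde U^t\widetilde U)^{-1/2}$ and similarly for $V$. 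One then estimates $\E[\sum_{ijkl}M_{ijkl}U_{ij}V_{kl}]$ and compares it to $\mathrm{SDP}(M)$.

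\medskip

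\noindent\emph{Step 4: analysis of the rounding.} The hard part will be controlling the effect of the polar-decomposition step, since polar decomposition is a highly nonlinear operation and $\widetilde U,\widetilde V$ are correlated with the SDP solution in a subtle way. I would attack this by expressing the polar factor via the integral representation $A(A^tA)^{-1/2}=\frac{2}{\pi}\int_0^\infty A(A^tA+t^2I)^{-1}\,dt$, using Gaussian concentration to show that with high probability the relevant singular values of $\widetilde U,\widetilde V$ are bounded away from $0$, and then exploiting rotational invariance of the Gaussian measure together with the identity $\E\langle g,Ag\rangle\langle g,Bg\rangle =\Tr(A)\Tr(B)+\Tr(AB)+\Tr(AB^t)$ (in the real case) to reduce the expected value of the rounded objective to a linear functional of the SDP solution that is within a constant factor of $\mathrm{SDP}(M)$. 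Derandomization, if required, should follow by standard techniques (conditional expectations or a suitable pessimistic estimator). The complex case proceeds along identical lines, with real Gaussians replaced by their complex counterparts throughout.
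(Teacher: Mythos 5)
Your Steps 1 and 2 set up the right framework, but Steps 3 and 4 miss the essential idea, and the analysis you sketch would not go through. The rounding you propose, namely random (Gaussian) projection followed by plain polar decomposition $\widetilde U\mapsto \widetilde U(\widetilde U^*\widetilde U)^{-1/2}$, is precisely the naive ``take the phase'' scheme that the paper explicitly declines to use: the key twist in the actual rounding (Figure~\ref{fig:rounding}) is that after computing the polar decomposition $X_z=U_z|X_z|$ one outputs $A=U_z|X_z|^{it}$ and $B=V_z|Y_z|^{-it}$, with the \emph{same} random $t$ drawn from the hyperbolic secant distribution for both factors. This correlated ``rotation at speed $\log$(modulus)'' is not a cosmetic variation: the whole analysis rests on the characteristic-function identity $\int_\R a^{it}\varphi(t)\,\mathrm{d}t = 2a/(1+a^2)$, rearranged as $\E_t[a^{it}]=2a-\E_t[a^{2+it}]$, which applied to the singular values yields the bootstrap identity $\E_{z,t}[M(A,B)]=M(X,Y)-\E_{z,t}[M(U_z|X_z|^{2+it},V_z|Y_z|^{2-it})]$, and the error term is then absorbed because $F(t)=\E_z[\cdot]$ again satisfies the SDP constraints (via Claim~\ref{claim:haag-id} / Lemma~\ref{lem:equiv-sdp}). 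Without this algebraic relation there is no mechanism to compare $\E[M(A,B)]$ to $M(X,Y)$; Gaussian concentration on singular values plus the integral representation of the polar factor gives you control of the \emph{output} matrices, but no way to relate the expected bilinear value back to the SDP value up to a constant.

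Two smaller points. Your candidate formula $\widetilde U_{ij}=\langle g, X_i^*X_j g\rangle$ is quadratic in $g$ and produces a Hermitian $\widetilde U$, whose polar factor would be a reflection rather than a general unitary; the correct projection is the \emph{linear} map $X_z=\frac{1}{\sqrt 2}\langle X,z\rangle$ applied entrywise to the vector-valued matrix. Also, the real case is not ``identical lines'': the paper deduces it from the complex/Hermitian case through a reduction (Theorem~\ref{thm:hermitian real equivalence}) combined with an algorithmic version of Krivine's two-dimensional rounding (Theorem~\ref{thm:ncgt-herm}), or alternatively through a separate truncation-based rounding modeled on Kaijser's proof (Section~\ref{sec:real}); a direct real analogue of the $|X_z|^{it}$ step does not make sense because $\O_n$ lacks the needed one-parameter scalar phase.
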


We will explain the ideas that go into the proof of
\expref{Theorem}{thm:alg} later, and it suffices to say at this
juncture that our algorithm is based on a rounding procedure for
semidefinite programs that is markedly different from rounding
algorithms that have been previously used in the optimization
literature, and as such it indicates the availability of techniques
that have thus far remained untapped for the purpose of algorithm
design. A non-constructive version of 
\expref{Theorem}{thm:alg}
was 
given
in earlier works, as explained below; our contribution here is to design an efficient rounding algorithm and to establish various applications of it.
Before
explaining the proof of \expref{Theorem}{thm:alg} we list
some of its applications as an indication of its usefulness.

 \begin{remark}\label{rem:constants}
 The implied constants in the $O(1)$ terms of \expref{Theorem}{thm:alg} can be taken to be any number greater than $2\sqrt{2}$ in the real case, and any number greater than $2$ in the complex case. There is no reason to believe that the factor $2\sqrt{2}$ in the real case is optimal, but the factor $2$ in the complex case is sharp in a certain natural sense that will become clear later. The main content of \expref{Theorem}{thm:alg} is the availability of a 
constant-factor 
algorithm rather than the value of the constant itself. In particular, the novelty of the applications to combinatorial optimization that are described below is the mere existence of a constant-factor approximation algorithm.
 \end{remark}

 \subsection{Applications of \expref{Theorem}{thm:alg}}\label{sec:apps intro}

We now describe some examples demonstrating the usefulness of \expref{Theorem}{thm:alg}. The first example does not lead to a new result, and is meant to put \expref{Theorem}{thm:alg} in context. All the other examples lead to new algorithmic results.
Many of the applications below follow from a more versatile reformulation of \expref{Theorem}{thm:alg} that is presented in \expref{Section}{sec:applications} (see \expref{Proposition}{prop:usefulform}).

 \subsubsection{The Grothendieck problem}\label{sec:gro alg intro}  The Grothendieck optimization problem takes as input a matrix $A\in M_n(\R)$ and aims to efficiently compute (or estimate) the quantity
 \begin{equation}\label{eq:grothendieck problem formulation}
 \max_{\e,\d\in \{-1,1\}^n}\sum_{i,j=1}^n A_{ij}\e_i\d_j\,.
 \end{equation}
 This problem falls into the framework of \expref{Theorem}{thm:alg} by considering the $4$-tensor $M\in M_n(M_n(\R))$ given by $M_{iijj}=A_{ij}$ and $M_{ijkl}=0$ if either $i\neq j$ or $k\neq l$. Indeed,
 $$
 \opt_\R(M)=\max_{U,V\in \O_n}\sum_{i,j=1}^n A_{ij} U_{ii}V_{jj}= 
\max_{\e,\d\in \{-1,1\}^n}\sum_{i,j=1}^n A_{ij}\e_i\d_j\,.
 $$
Here we used the diagonal structure of the tensor $M$ to argue that one may without loss of generality restrict the variables $U,V\in\O_n$ to be diagonal $\pm 1$ matrices.
While diagonal matrices always commute, in general $U,V\in O_n(\R)$ will not, and it is in this sense that we may think of $\opt_\R(\cdot)$ as a non-commutative generalization of~\eqref{eq:grothendieck problem formulation}. 

A 
polynomial-time 
constant-factor %
approximation algorithm for the Grothendieck problem was designed in~\cite{AN04}, where it was also shown that it is \cclass{NP}-hard to approximate this problem within a factor less that $1+\e_0$ for some $\e_0\in (0,1)$. A simple transformation~\cite{AN04} relates the Grothendieck problem to the Frieze-Kannan Cut Norm problem~\cite{FK99} (this transformation can be made to have no loss in the approximation guarantee~\cite[Sec.~2.1]{KhotN11survey}), and as such the constant-factor approximation algorithm for the Grothendieck problem has found a variety of applications in combinatorial optimization; see the survey~\cite{KhotN11survey} for much more on this topic. In another direction, based on important work of Tsirelson~\cite{Tsirelson:85b}, the Grothendieck problem has found applications to quantum information theory~\cite{CHTW04}.
Since the problem of computing $\opt_\R(\cdot)$ contains the Grothendieck problem as a special case, \expref{Theorem}{thm:alg} encompasses all of these applications, albeit with the approximation factor being a larger constant.

 \subsubsection{Robust PCA}\label{sec:PCA intro}

The input to the classical principal component analysis (PCA) problem is $K,n\in \N$  a set of points $a_1,\ldots,a_N \in \R^n$. The goal is to  find a $K$-dimensional subspace maximizing the sum of the squared $\ell_2$ norms of the projections of the $a_i$ on
the subspace. Equivalently, the problem is to find the maximizing vectors in
\begin{equation}\label{eq:classical PCA}
 \max_{\substack{y_1,\ldots,y_K \in \R^n\\ \langle y_i,y_j\rangle = \delta_{ij}}} \,\sum_{i=1}^N \sum_{j=1}^K \,|\langle a_i, y_j \rangle|^2\,,
\end{equation}
where here, and in what follows, $\d_{ij}$ is the Kronecker delta.
This question has a closed-form solution in terms of the singular values of the $N\times n$ matrix whose $i$-th row contains the coefficients of the point $a_i$.

The fact that the quantity appearing in~\eqref{eq:classical PCA} is the maximum of the sum of the \emph{squared} norms of the projected points makes it somewhat non-robust to outliers,
in the sense that a single long vector can have a large effect on the maximum. Several more robust versions of PCA
were suggested in the literature. One variant, known  as ``R1-PCA,'' is due to Ding, Zhou, He, and Zha~\cite{dingetal},
and aims to maximize the sum of the Euclidean norms of the projected points, namely,
\begin{equation}\label{eq:R1PCA def}
 \max_{\substack{y_1,\ldots,y_K \in \R^n\\ \langle y_i,y_j\rangle = \delta_{ij}}}\, \sum_{i=1}^N \Big(\sum_{j=1}^K \,|\langle a_i, y_j \rangle|^2\Big)^{1/2}\,.
\end{equation}
We are not aware of any prior efficient algorithm for this problem that achieves a guaranteed approximation factor.
Another robust variant of PCA, known as ``L1-PCA,'' was suggested by Kwak~\cite{kwak}, and further
studied by McCoy and Tropp~\cite{MT12} (see Section~2.7 in~\cite{MT12} in particular). Here the goal
is to maximize the sum of the $\ell_1$ norms of the projected points, namely,
\begin{equation}\label{eq:L1PCA def}
 \max_{\substack{y_1,\ldots,y_K \in \R^n\\ \langle y_i,y_j\rangle = \delta_{ij}}}\, \sum_{i=1}^N \sum_{j=1}^K\, |\langle a_i, y_j \rangle|\,.
\end{equation}
In~\cite{MT12} a 
constant-factor 
approximation algorithm for the above problem is obtained for $K=1$ based on~\cite{AN04}, and for general $K$ an approximation algorithm with an approximation guarantee of $O(\log n)$ is obtained based on prior work by So~\cite{So11}.

In \expref{Section}{sec:app_pca} we show that both of the above robust versions of PCA can be cast as special cases of \expref{Theorem}{thm:alg}, thus yielding constant-factor approximation algorithms for both problems and all $K\in \{1,\ldots,n\}$. The key observation is that both~\eqref{eq:R1PCA def} and~\eqref{eq:L1PCA def} can be expressed as bilinear optimization problems over a pair of orthogonal matrices. For instance, in the case of~\eqref{eq:L1PCA def} one of the matrix variables will have the $y_i$ as its columns, while the other will be diagonal, with diagonal entries (in an optimal solution) matching the sign of $\langle a_i,y_j\rangle$.

 \subsubsection{The orthogonal Procrustes problem}\label{sec:procustes intro}

Let $n,d\geq 1$ and $K\geq 2$ be integers. Suppose that $S_1,\ldots,S_K\subseteq \R^d$ are $n$-point subsets of $\R^d$. The goal of the \emph{generalized orthogonal Procrustes problem} is to rotate each of the $S_k$ separately so as to best align them. Formally, write $S_k=\{x^k_1,x_2^k,\ldots,x_n^k\}$. The goal is to find $K$ orthogonal matrices $U_1,\ldots,U_K\in \O_d$ that maximize the quantity
\begin{equation}\label{eq:sum of averages}
\sum_{i=1}^n\Big\|\sum_{k=1}^K U_k x^k_i \Big\|_2^2\,.
\end{equation}

 If one focuses on a single summand appearing in~\eqref{eq:sum of
   averages}, say $\sum_{k=1}^K U_k x^k_1$,  then it is clear that in
 order to maximize its length one would want to rotate each of the
 $x_1^k$ so that they would all point in the same direction, \ie, they
 would all be positive multiples of the same vector. The above problem
 aims to achieve the best possible such alignment (in aggregate) for
 multiple summands of this type. We note that by expanding the squares
 one sees that $U_1,\ldots,U_K\in \O_d$ maximize the quantity
 appearing in~\eqref{eq:sum of averages} if and only if they minimize
 the quantity
\[
\sum_{i=1}^n\sum_{k,l=1}^K \bigl\|U_kx^k_i-U_l x^l_i\bigr\|_2^2\,.
\]

 The term ``generalized'' was used above because the \emph{orthogonal Procrustes problem} refers to the case $K=2$, which has a closed-form solution. The name ``Procustes'' is a (macabre) reference to Greek mythology (see \url{http://en.wikipedia.org/wiki/Procrustes}).

 The generalized orthogonal Procrustes problem has been extensively studied since the 1970s, initially in the psychometric literature (see, \eg,~\cite{CC71,Gow75,Ber77}), and more recent applications of it are to areas such as image and shape analysis, market research, %
 and biometric identification; see the books~\cite{GD04,DM98}, the lecture notes~\cite{SG02}, and~\cite{MP07} for much more information on this topic.

 The generalized orthogonal Procrustes problem is known to be intractable, and it has been investigated algorithmically in, \eg,~\cite{Ber77,Bou82,SB88}. A rigorous analysis of a polynomial-time approximation algorithm for this problem appears in the work of Nemirovski~\cite{Nem07}, where the generalized orthogonal Procrustes problem is treated as an important special case of a more general family of problems called ``quadratic optimization under orthogonality constraints,'' for which he obtains a $O(\sqrt[3]{n+d}+\log K)$ approximation algorithm. This was subsequently improved by So~\cite{So11} to $O(\log(n+d+K))$. In \expref{Section}{sec:app_procrustes} we use \expref{Theorem}{thm:alg} to improve the approximation guarantee for the generalized orthogonal Procrustes problem as defined above to a constant approximation factor.  See also \expref{Section}{sec:app_procrustes} for a more complete discussion of variants of this problem considered in~\cite{Nem07,So11} and how they compare to our work.

 \subsubsection{A  Frieze-Kannan decomposition for \texorpdfstring{$4$}{4}-tensors}

 In~\cite{FK99}  Frieze and Kannan designed an algorithm which decomposes every (appropriately defined) ``dense'' matrix into a sum of a few ``cut matrices'' 
plus an error matrix that has small cut-norm.  We refer to~\cite{FK99} and also Section~2.1.2 in the survey~\cite{KhotN11survey} for a precise formulation of this statement, as well as its extension, due to~\cite{AN04}, to an algorithm that allows sub-constant errors.  In \expref{Section}{sec:regularity} we apply \expref{Theorem}{thm:alg} to prove the following result, which can be viewed as a noncommutative variant of the Frieze-Kannan decomposition. For the purpose of the statement below it is convenient to identify the space $M_n(M_n(\C))$ of all $4$-tensors with $M_n(\C)\otimes M_n(\C)$. Also, for $M\in M_n(\C)\otimes M_n(\C)$ we denote from now on its Frobenius (Hilbert-Schmidt) norm by
 $$
 \|M\|_2\eqdef\sqrt{\sum_{i,j,k,l=1}^n |M_{ijkl}|^2}\,.
 $$

 \begin{theorem}\label{thm:regularity} There exists a universal constant $c\in (0,\infty)$ with the following property. Suppose that $M\in M_n(\C)\otimes M_n(\C)$ and $0<\eps\leq 1/2$, and let
 \begin{equation}\label{eq:def T}
 T \eqdef \left\lceil \frac{c n^2 \|M\|_2^2}{\eps^2  \opt_\C(M)^2}\right\rceil\,.
 \end{equation}
 One can compute in time $\poly(n,1/\eps)$ a decomposition
\begin{equation}\label{eq:M decomposed}
 M \,=\, \sum_{t=1}^T \,\alpha_t (A_t \otimes B_t) + E
 \end{equation}
such that $A_t,B_t \in \U_n$, the coefficients $\alpha_t \in \C$ satisfy $|\alpha_t| \leq \|M\|_2/n$, and $\opt_\C(E) \leq \eps \opt_\C(M)$. Moreover, if $M\in M_n(\R)\otimes M_n(\R)$  then one can replace $\opt_\C(M)$ in~\eqref{eq:def T} by $\opt_\R(M)$, take the coefficients $\alpha_t$ to be real, $A_t,B_t\in \O_n$ and $E$ such that $\opt_\R(E)\leq \eps\opt_\R(M)$.
\end{theorem}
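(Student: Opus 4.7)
The plan is to mimic the classical Frieze--Kannan greedy decomposition, but in the noncommutative ambient space $M_n(\C)\otimes M_n(\C)$ (equipped with its Hilbert--Schmidt inner product), using Theorem~\ref{thm:alg} as a constant-factor oracle for the ``cut'' step. Concretely, set $E_0:=M$. At iteration $t\ge 1$, I will run the algorithm of Theorem~\ref{thm:alg} on $E_{t-1}$ to obtain unitaries $U_t,V_t\in \U_n$ with
\[
 \Big|\sum_{i,j,k,l}(E_{t-1})_{ijkl}\,U_{t,ij}\,\overline{V_{t,kl}}\Big|\ \geq\ \frac{\opt_\C(E_{t-1})}{C},
\]
where $C$ is any constant exceeding the factor from Theorem~\ref{thm:alg} (so $C>2$ in the complex case, $C>2\sqrt{2}$ in the real case). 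Setting $A_t:=U_t$ and $B_t:=\overline{V_t}\in \U_n$, define $\alpha_t\in\C$ as the least-squares coefficient obtained by orthogonal projection of $E_{t-1}$ onto the line spanned by $A_t\otimes B_t$, and put $E_t:=E_{t-1}-\alpha_t(A_t\otimes B_t)$. After at most $T$ steps I will output the $A_t,B_t,\alpha_t$ and the current residual as $E$.

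The core calculation is the Pythagorean progress bound. Since $A_t,B_t\in\U_n$, one has $\|A_t\otimes B_t\|_2^2=\|A_t\|_2^2\,\|B_t\|_2^2=n^2$, so orthogonality of the projection gives
\[
 \|E_{t-1}\|_2^2-\|E_t\|_2^2\ =\ n^2|\alpha_t|^2\ =\ \frac{\big|\langle A_t\otimes B_t,\,E_{t-1}\rangle\big|^2}{n^2}\ \geq\ \frac{\opt_\C(E_{t-1})^2}{C^2 n^2}.
\]
As long as the loop has not halted, $\opt_\C(E_{t-1})>\eps\,\opt_\C(M)$, so each iteration decreases $\|E_t\|_2^2$ by at least $\eps^2\opt_\C(M)^2/(C^2 n^2)$. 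Since $\|E_0\|_2^2=\|M\|_2^2$ and the norm is nonnegative, the process must terminate after at most $T$ iterations with $c$ a constant multiple of $C^2$, matching the bound~\eqref{eq:def T}. The bound $|\alpha_t|=O(\|M\|_2/n)$ comes for free from Cauchy--Schwarz: $|\alpha_t|\leq \|A_t\otimes B_t\|_2\,\|E_{t-1}\|_2/n^2=\|E_{t-1}\|_2/n\leq \|M\|_2/n$, using that Pythagoras also shows $\|E_{t-1}\|_2\leq\|E_0\|_2=\|M\|_2$.

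The real case is carried out identically after replacing $\U_n$ by $\O_n$, dropping the complex conjugates, and setting $B_t:=V_t$; the only change is the larger constant $2\sqrt{2}$ from Theorem~\ref{thm:alg}. The main obstacle, which is more a matter of care than difficulty, is the stopping rule: since $\opt_\C(M)$ is unknown, I will use Theorem~\ref{thm:alg} itself as a $C$-approximate estimator for $\opt_\C(M)$ (by running it once on $M$) and for $\opt_\C(E_{t-1})$ (whose approximant is precisely the quantity above), so I can halt as soon as the latter falls below $\eps/(2C)$ times the former; absorbing the constants into a redefinition of $\eps$ yields the stated guarantee $\opt_\C(E)\leq \eps\,\opt_\C(M)$. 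Each iteration runs in $\poly(n)$ time (one call to Theorem~\ref{thm:alg} plus $O(n^4)$ arithmetic), and $T=\poly(n,1/\eps)$ because $\opt_\C(M)\geq \|M\|_2/n^{O(1)}$ whenever $M\neq 0$ (via standard random-unitary or trivial diagonal lower bounds), so the overall runtime is $\poly(n,1/\eps)$ as required.
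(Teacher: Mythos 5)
Your proposal is correct and follows the paper's iterative Frieze--Kannan ``energy decrement'' argument essentially step for step: at each stage round via Theorem~\ref{thm:alg}, project the residual onto the resulting rank-one tensor $A_t\otimes B_t$, and use the Pythagorean identity to show the Frobenius norm drops by at least $\Omega(\eps^2\opt_\C(M)^2/n^2)$ per iteration. The two places where you are more explicit than the paper --- replacing the halting test $\opt_\C(M_\tau)\leq\eps\opt_\C(M)$ (which as literally written would require computing $\opt_\C$ exactly) by a $C$-approximate computable proxy and absorbing the extra constant, and the remark that $\opt_\C(M)\geq\|M\|_2/n$ so that $T=\poly(n,1/\eps)$ unconditionally --- are both correct and fill in details the paper leaves implicit, but they do not change the underlying approach.
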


\expref{Theorem}{thm:regularity} contains as a special case its
commutative counterpart, as studied in~\cite{FK99,AN04}. Here we are
given $A\in M_n(\R)$ with $|a_{ij}|\le 1$ for all $i,j\in
\{1,\ldots,n\}$, and we aim for an error  $\e n^2$. Define
$M_{iijj}=a_{ij}$ and $M_{ijkl}=0$ if $i\neq j$ or $k\neq l$. Then
$\|M\|_2\le n$. An application of \expref{Theorem}{thm:regularity} (in
the real case) with $\e$ replaced by $\e n^2/\opt_\R(M)$ yields a
decomposition $A=\sum_{t=1}^T \alpha_t (a_t b_t^*) +E$ with
$a_t,b_t\in [-1,1]^n$ and $E\in M_n(\R)$ satisfying
\[
\sup_{\e,\d\in \{-1,1\}}\sum_{ij=1}^n E_{ij} \e_i\d_j\le \e n^2\,.
\]
Moreover, the number of terms is $T=O(1/\e^2)$.

\expref{Theorem}{thm:regularity} is proved in \expref{Section}{sec:regularity} via an iterative application of \expref{Theorem}{thm:alg}, following the ``energy decrement'' strategy as formulated by Lov{\'a}sz and Szegedy~\cite{LovaszS07} in the context of general weak regularity lemmas. Other than being a structural statement of interest in its own right, we show in \expref{Section}{sec:regularity} that \expref{Theorem}{thm:regularity} can be used to enhance the 
constant-factor 
approximation of \expref{Theorem}{thm:alg} to a PTAS for computing $\opt_\C(M)$ when $\opt_\C(M)=\Omega(n\|M\|_2)$. Specifically, if $\opt_\C(M)\ge \kappa n\|M\|_2$ then one can compute a $(1+\e)$-factor approximation to $\opt_\C(M)$ in time $2^{\poly(1/(\kappa\e))}\poly(n)$. This is reminiscent of the Frieze-Kannan algorithmic framework~\cite{FK99} for dense graph and matrix problems.

\subsubsection{Quantum XOR games}
As we already noted, the Grothendieck problem (recall \expref{Section}{sec:gro alg intro}) also has consequences in quantum information theory~\cite{CHTW04}, and more specifically to bounding the power of entanglement in so-called ``XOR games,'' which are two-player one-round games in which the players each answer with a bit and the referee bases her decision on the XOR of the two bits. As will be explained in detail in \expref{Section}{sec:gro ineq} below, the literature on the Grothendieck problem relies on a classical inequality of Grothendieck~\cite{Gro53}, while our work relies on a more recent yet by now classical noncommutative Grothendieck inequality of Pisier~\cite{Pisier78NCGT} (and its sharp form due to Haagerup~\cite{Haagerup85NCGT}). Even more recently, the Grothendieck inequality has been generalized to another setting, that of \emph{completely bounded} linear maps defined on operator spaces~\cite{PS02OSGT,HM08}. While we do not discuss the %
operator-space 
Grothendieck inequality here, we remark that in~\cite{RegevV12b} the 
operator-space 
Grothendieck inequality is proved by reducing it to the Pisier-Haagerup noncommutative Grothendieck inequality. Without going into details, we note that this reduction is also algorithmic. Combined with our results, it leads to an algorithmic proof of the 
operator-space 
Grothendieck inequality, together with an accompanying rounding procedure.

In~\cite{RegevV12a}, the last two named authors show that the noncommutative and 
operator-space 
Grothendieck inequalities have consequences in a setting that generalizes that of classical XOR games, called ``quantum XOR games'': in such games, the questions to the players may be quantum states (and the answers are still a single classical bit). 
In~\cite{RegevV12a},
efficient constant-factor approximation algorithms 
are derived %
for the maximum success probability of players in such a game, in three settings: players sharing an arbitrary quantum state, players sharing a maximally entangled state, and players not sharing any entanglement.  Using \expref{Theorem}{thm:alg}, one can efficiently 
compute
strategies achieving the same approximation guarantee. 
These matters are taken up in~\cite{RegevV12a} and will not be  discussed further here.

 \subsection{The noncommutative Grothendieck inequality}\label{sec:gro ineq}

 The natural semidefinite relaxation of~\eqref{eq:grothendieck problem formulation} is
 \begin{equation}\label{eq:gro SDP}
 \sup_{d\in \N} \sup_{x,y\in (S^{d-1})^n}\sum_{i,j=1}^n A_{ij} \langle x_i,y_j\rangle\,,
 \end{equation}
where $S^{d-1}$ is the unit sphere of $\R^d$. 
(The dimension $d$ can be taken to be at most the number of vector variables, $d\leq 2n$.)
Since, being  a semidefinite program (SDP), the quantity appearing in~\eqref{eq:gro SDP} can be computed in polynomial time with arbitrarily good precision (see~\cite{GLS93}), the fact that the Grothendieck optimization problem admits a 
polynomial-time 
constant-factor %
approximation algorithm follows from the following inequality, which is a classical inequality of Grothendieck of major importance to several mathematical disciplines (see Pisier's survey~\cite{PisierGT} and the references therein for much more on this topic; the formulation of the inequality as below is due to Lindenstrauss and Pe{\l}czy\'{n}ski~\cite{Linden68}):
\begin{equation}\label{eq:gro ineq first statement}
 \sup_{d \in \N} \sup_{x,y\in (S^{d-1})^n}\sum_{i,j=1}^n A_{ij} \langle x_i,y_j\rangle\le K_G \sup_{\e,\d\in \{-1,1\}^n}\sum_{i,j=1}^n A_{ij} \e_i\d_j\,.
\end{equation}
Here $K_G\in (0,\infty)$, which is understood to be the infimum over those constants for which~\eqref{eq:gro ineq first statement} holds true for all $n\in \N$ and all $A\in M_n(\R)$, is a universal constant known as the (real) Grothendieck constant. Its exact value remains unknown, the best available bounds~\cite{Ree91,BMMN11} being $1.676<K_G<1.783$. In order to actually find an assignment $\eps,\delta$ to~\eqref{eq:grothendieck problem formulation} that is within a constant factor of the optimum one needs to argue that a proof of~\eqref{eq:gro ineq first statement} can be turned into an efficient rounding algorithm; this is done in~\cite{AN04}.

If one wishes to mimic the above algorithmic success of the Grothendieck inequality in the context of efficient computation of $\opt_\R(\cdot)$, the following natural strategy presents itself: one  should replace real entries of matrices by vectors in $\ell_2$, \ie, consider elements of $M_n(\ell_2)$, and replace the orthogonality constraints underlying the inclusion $U\in\O_n$, namely,
$$
\forall\, i,j\in \{1,\ldots,n\},\quad \sum_{k=1}^n U_{ik}U_{jk}=\sum_{k=1}^n U_{ki}U_{kj}=\d_{ij}\,,
$$
by the corresponding constraints using scalar product. Specifically, given an $n\times n$ vector-valued matrix $X\in M_n(\ell_2)$ define two real matrices $XX^*, X^*X\in M_n(\R)$ by
\begin{equation}\label{eq:vector abs matrix}
\forall\, i,j\in \{1,\ldots,n\},\quad (XX^*)_{ij}\eqdef \sum_{k=1}^n \langle X_{ik},X_{jk}\rangle\qquad\mathrm{and}\qquad (X^*X)_{ij}\eqdef \sum_{k=1}^n \langle X_{ki},X_{kj}\rangle\,,
\end{equation}
and let the set of $d$-dimensional vector-valued orthogonal matrices be given by
\begin{equation}\label{eq:def: vector orthogonal group}
\O_n(\R^d)\eqdef\left\{X\in M_n(\R^d):\ XX^*=X^*X=I\right\}.
\end{equation}
One then considers the following quantity associated to every $M\in M_n(M_n(\R))$:
\begin{equation}\label{eq:def sdpR}
\SDP_\R(M)\eqdef \sup_{d\in\N}\sup_{X,Y\in \O_n(\R^d)} \sum_{i,j,k,l=1}^n M_{ijkl} \left\langle X_{ij},Y_{kl}\right\rangle.
\end{equation}
Since the constraints that underlie the inclusion $X,Y\in \O_n(\R^d)$ are linear equalities in the pairwise scalar products of the entries of $X$ and $Y$, the quantity $\SDP_\R(M)$ is a semidefinite program and can therefore be computed in polynomial time with arbitrarily good precision. One would therefore aim to prove the following noncommutative variant of the Grothendieck inequality~\eqref{eq:gro ineq first statement}:
\begin{equation}\label{eq:real noncommutative gro}
\forall\, n\in \N,\ \forall\, M\in M_n(M_n(\R)),\quad \SDP_\R(M)\le O(1)\cdot \opt_\R(M).
\end{equation}
The term ``noncommutative'' refers here to the fact that $\opt_\R(M)$ is an optimization problem over the noncommutative group $\O_n$, while the classical Grothendieck inequality addresses an optimization problem over the commutative group $\{-1,1\}^n$. In the same vein, noncommutativity is manifested by the fact that the classical Grothendieck inequality corresponds to the special case of ``diagonal'' $4$-tensors $M\in M_n(M_n(\R))$, \ie, those that satisfy $M_{ijkl}=0$ whenever $i\neq j$ or $k\neq l$.

Grothendieck conjectured~\cite{Gro53} the validity of~\eqref{eq:real noncommutative gro} in 1953, a conjecture that remained open until its 1978 affirmative solution by Pisier~\cite{Pisier78NCGT}. A simpler, yet still highly nontrivial proof of the noncommutative Grothendieck inequality~\eqref{eq:real noncommutative gro} was obtained by Kaijser~\cite{Kaijser83}. In \expref{Section}{sec:real} we design a rounding algorithm corresponding to~\eqref{eq:real noncommutative gro} based on Kaijser's approach. This settles the case of real $4$-tensors of \expref{Theorem}{thm:alg}, albeit with worse approximation guarantee than the one claimed in \expref{Remark}{rem:constants}. The algorithm modeled on Kaijser's proof is interesting in its own right, and seems to be versatile and applicable to other problems, such as possible non-bipartite extensions of the noncommutative  Grothendieck inequality in the spirit of~\cite{AMMN06}; we shall not pursue this direction here.

A better approximation guarantee, and arguably an even more striking rounding algorithm, arises from the work of Haagerup~\cite{Haagerup85NCGT} on the complex version of~\eqref{eq:real noncommutative gro}. In \expref{Section}{sec:realrounding}
 we show how the real case of \expref{Theorem}{thm:alg} follows formally from our results on its complex counterpart, so from now on we focus our attention on the complex case.

\subsubsection{The complex case}\label{sec:haa ineq}

In what follows we let $S^{d-1}_\C$ denote the unit sphere of $\C^d$ (thus $S^0_\C$ can be identified with the unit circle $S^1\subseteq \R^2$). The classical complex Grothendieck inequality~\cite{Gro53,Linden68} asserts that there exists $K\in (0,\infty)$ such that
\begin{equation}\label{eq:complex gro def}
\forall\, n\in \N,\ \forall\, A\in M_n(\C),\quad \sup_{x,y\in (S^{2n-1}_\C)^n}\Big|\sum_{i,j=1}^n A_{ij}\langle x_i,y_j\rangle\Big|\le O(1)\sup_{\alpha,\beta\in (S_\C^0)^n}\Big|\sum_{i,j=1}^n A_{ij} \alpha_i\overline{\beta_j}\Big|\,.
\end{equation}
Let $K_G^\C$ denote the infimum over those $K\in (0,\infty)$ for which~\eqref{eq:complex gro def} holds true. The exact value of $K_G^\C$ remains unknown, the best  available bounds being $1.338<K_G^\C<1.4049$ (the left inequality is due to unpublished work of Davie, and the right one is due to Haagerup~\cite{Haa87}).

For $M\in M_n(M_n(\C))$ we define
\begin{equation}\label{eq:SDPC}
\SDP_\C(M)\eqdef \sup_{d\in \N} \sup_{X,Y\in \U_n(\C^{d})} \Big|\sum_{i,j,k,l=1}^n M_{ijkl} \left\langle X_{ij},Y_{kl}\right\rangle \Big|\,,
\end{equation}
where analogously to~\eqref{eq:def: vector orthogonal group} we set
$$
\U_n(\C^{d})\eqdef\left\{X\in M_n(\C^d):\ XX^*=X^*X=I\right\}\,.
$$
Here for $X\in M_n(\C^d)$ the complex matrices $XX^*,X^*X\in M_n(\C)$ are defined exactly as in~\eqref{eq:vector abs matrix}, with the scalar product being the complex scalar product. Haagerup proved~\cite{Haagerup85NCGT} that
\begin{equation}\label{eq:complex sharp noncommutative gro}
\forall\, n\in \N,\ \forall\, M\in M_n(M_n(\C)),\quad \SDP_\C(M)\le 2\cdot \opt_\C(M)\,.
\end{equation}
Our main algorithm is an efficient rounding scheme corresponding to inequality~\eqref{eq:complex sharp noncommutative gro}.
The constant $2$ in~\eqref{eq:complex sharp noncommutative gro} is sharp, as shown in~\cite{HI95} (see also~\cite[Sec.~12]{PisierGT}).

We note that the noncommutative Grothendieck inequality, as it usually appears in the literature, involves a slightly more relaxed semidefinite program.
In order to describe it, we first remark that instead of maximizing over $X,Y\in \U_n(\C^{d})$ in~\eqref{eq:SDPC} we could equivalently maximize over $X,Y\in M_n(\C^d)$ satisfying $XX^*,X^*X,YY^*,Y^*Y \le I$, which is the same as the requirement $\|XX^*\|,\|X^*X\|,\|YY^*\|,\|Y^*Y\|\le 1$, where here and in what follows $\|\cdot\|$ denotes the operator norm of matrices. This fact is made formal in \expref{Lemma}{lem:equiv-sdp} below. By relaxing the constraints to $\|XX^*\|+\|X^*X\|\le 2$ and $\|YY^*\|+\|Y^*Y\|\le 2$, we obtain the following quantity, which can be shown to still be a semidefinite program: 
\begin{equation}\label{eq:ncnorm}
\nc{M} \eqdef
  \sup_{d\in \N}  \sup_{\substack{X,Y\in M_n(\C^{d})\\ \|XX^*\|+\|X^*X\|\le 2\\\|YY^*\|+\|Y^*Y\|\le 2}}\Big|\sum_{i,j,k,l=1}^n M_{ijkl} \left\langle X_{ij},Y_{kl}\right\rangle\Big|\,.
\end{equation}
 Clearly $\nc{M} \ge \SDP_\C(M)$ for all $M\in M_n(M_n(\C))$. Haagerup proved~\cite{Haagerup85NCGT}  that the following stronger inequality holds true for all $n\in\N$ and $ M\in M_n(M_n(\C))$:
\begin{equation}\label{eq:vect}
\nc{M} \le 2\cdot \opt_\C(M)\,.
\end{equation}
As our main focus is algorithmic, in the following discussion we will establish a rounding algorithm for the tightest relaxation~\eqref{eq:complex sharp noncommutative gro}. In \expref{Section}{sec:round nc} we show that the same rounding procedure can be used to obtain an algorithmic analogue of~\eqref{eq:vect} as well.

\subsubsection{The rounding algorithm}\label{sec:rounding intro}

Our main algorithm is an efficient rounding scheme corresponding to~\eqref{eq:complex sharp noncommutative gro}. In order to describe it, we first introduce the following notation. Let $\varphi:\R\to \R^+$ be given by
\begin{equation}\label{eq:def varphi}
 \varphi(t) \eqdef \frac{1}{2} \text{sech}\Big( \frac{\pi}{2} t\Big)\,=\, \frac{1}{e^{\pi t/2}+e^{-\pi t/2} }\,.
 \end{equation}
One computes that $\int_\R\varphi(t)\text{d} t=1$, so $\varphi$ is a density of a probability measure $\mu$ on $\R$, known as the \emph{hyperbolic secant distribution}. By~\cite[Sec.~23.11]{Joh95} we have
\begin{equation}\label{eq:characteristic}
\forall\, a\in (0,\infty),\quad \int_\R  a^{it}\varphi(t)\text{d}t =\frac{2a}{1+a^2}\,.
\end{equation}
It is possible to efficiently sample from $\mu$ using standard techniques; see, \eg,~\cite[Ch.~IX.7]{DevroyeBook}.

In what follows, given $X\in
M_n(\C^d)$ and $z\in \C^d$ we denote by $\langle X,z\rangle \in M_n(\C)$ the
matrix whose entries are $\langle X,z\rangle_{jk}=\langle X_{jk},z\rangle$.

\begin{figure}[ht]
\begin{protocol*}{Rounding procedure}
\begin{step}
\item Let $X,Y\in M_n(\C^d)$ be given as input. Choose $z\in \{1,-1,i,-i\}^d$ uniformly at random, and sample $t\in \R$ according to the hyperbolic secant distribution $\mu$.
\item Set $\displaystyle X_z \eqdef \frac{1}{\sqrt{2}}\langle
  X,z\rangle\in M_n(\C)$ and $\displaystyle Y_z\eqdef \frac{1}{\sqrt{2}}\left\langle Y,z\right\rangle\in M_n(\C)$.
\item Output the pair of matrices \[
(A,B)=(A(z,t),B(z,t))\eqdef(U_z |X_z|^{it},V_z|Y_z|^{-it})\in
\U_n\times \U_n\,,
\]
where 
$X_z = U_z|X_z|$ and $Y_z = V_z|Y_z|$ are the polar decompositions of $\Xz$ and $\Yz$, respectively.
\end{step}
\end{protocol*}
\caption{The rounding algorithm takes as input a pair of vector-valued matrices $X,Y\in M_n(\C^{d})$. It 
outputs %
two matrices $A,B\in \U_n(\C)$.}
\label{fig:rounding}
\end{figure}

\begin{theorem}\label{thm:roundingSDP} Fix $n,d\in \N$ and $\e\in (0,1)$. Suppose that $M\in M_n(M_n(\C))$ and that  $X,Y\in \U_n(\C^{d})$ are such that
\begin{equation}\label{eq:maximizer assumption}
\Big|\sum_{i,j,k,l=1}^n M_{ijkl} \langle X_{ij}, Y_{kl}\rangle \Big| \,\geq\, (1-\eps)\SDP_\C(M)\,,
\end{equation}
where $\SDP_\C(M)$ is given in~\eqref{eq:SDPC}. Then the rounding procedure described in \expref{Figure}{fig:rounding} 
outputs %
a pair of matrices $A,B\in \U_n$ such that
\begin{equation}\label{eq:goal expectation AB-SDP}
\E\Big[\,\Big|\sum_{i,j,k,l=1}^n M_{ijkl} \,A_{ij} \overline{B_{kl}}\Big|\,\Big] \,\geq\, \Big(\frac{1}{2}-\eps\Big)\SDP_\C(M)\,.
\end{equation}
Moreover, rounding can be performed in time polynomial in $n$ and $\log(1/\eps)$, and can be derandomized in time $\poly(n,1/\eps)$.
\end{theorem}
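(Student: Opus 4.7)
The plan is to show the stronger bound $|\E_{z,t}[\sum_{ijkl}M_{ijkl}A_{ij}\overline{B_{kl}}]|\ge (\tfrac12-\eps)\SDP_\C(M)$, which gives~\eqref{eq:goal expectation AB-SDP} via $\E[|Z|]\ge|\E[Z]|$. First, for $z$ uniform in $\{\pm 1,\pm i\}^d$, we have $\E_z[z_m\overline{z_{m'}}]=\delta_{mm'}$ and $\E_z[z_m z_{m'}]=0$, so a direct unfolding yields
\[
\E_z[(X_z)_{ij}\overline{(Y_z)_{kl}}]=\tfrac12\langle X_{ij},Y_{kl}\rangle.
\]
Summing against $M_{ijkl}$ and invoking~\eqref{eq:maximizer assumption} gives $\big|\E_z\sum M_{ijkl}(X_z)_{ij}\overline{(Y_z)_{kl}}\big|\ge\tfrac{1-\eps}{2}\SDP_\C(M)$, which accounts for the constant $\tfrac12$ in the statement.

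Next I would compute the $t$-average for fixed $z$. Using the SVDs $X_z=W_X\Sigma_X V_X^*$ and $Y_z=W_Y\Sigma_Y V_Y^*$, the rounded unitaries are $A=W_X\Sigma_X^{it}V_X^*$ and $B=W_Y\Sigma_Y^{-it}V_Y^*$; since $\overline{(\sigma_b^Y)^{-it}}=(\sigma_b^Y)^{it}$, identity~\eqref{eq:characteristic} applied to $a=\sigma_a^X\sigma_b^Y$ yields
\[
\E_t[A_{ij}\overline{B_{kl}}]=\sum_{a,b}\frac{2\sigma_a^X\sigma_b^Y}{1+(\sigma_a^X\sigma_b^Y)^2}(W_X)_{ia}\overline{(V_X)_{ja}}\,\overline{(W_Y)_{kb}}(V_Y)_{lb}.
\]
Writing $T_z=X_z\otimes\overline{Y_z}\in M_{n^2}(\C)$ and $f(s)=2s/(1+s^2)$, this is exactly $\E_t[A\otimes\overline B]=f(T_z)$ where $f$ is applied via singular-value functional calculus. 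Combining steps, $\E_{z,t}\sum M_{ijkl}A_{ij}\overline{B_{kl}}=\E_z\langle M,f(T_z)\rangle$, and the task reduces to comparing this with $\E_z\langle M,T_z\rangle=\tfrac12\sum M_{ijkl}\langle X_{ij},Y_{kl}\rangle$.

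The crux, and the main obstacle, is showing that $|\E_z\langle M,f(T_z)-T_z\rangle|\le O(\eps)\SDP_\C(M)$, despite the singular-value distortion $f(s)-s=s(1-s^2)/(1+s^2)$ being unbounded in $s$ (though it vanishes at $s=0$ and $s=1$). Following Haagerup's proof of the complex noncommutative Grothendieck inequality, I would use~\eqref{eq:characteristic} a second time to express $f(T_z)$ as the $\mu$-average of the unitaries $T_z^{it}=A(z,t)\otimes\overline{B(z,t)}$, and exploit the SDP feasibility $XX^*=X^*X=I$, $YY^*=Y^*Y=I$ — which in particular yields $\E_z[X_zX_z^*]=\E_z[X_z^*X_z]=\tfrac12 I$ and likewise for $Y_z$ — to bound the discrepancy by a suitable operator-valued norm inequality in the spirit of Haagerup's argument. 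This is the step at which the sharp constant $2$ of~\eqref{eq:complex sharp noncommutative gro} enters, reformulated here in rounding-algorithm form.

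Finally, the algorithmic claims are routine: sampling $t\sim\mu$ proceeds by inverting the CDF of $\varphi$, and the SVDs, polar decompositions, and matrix power $|X_z|^{it}$ are all computable in $\poly(n,\log(1/\eps))$ time. Derandomization replaces $\mu$ by a polynomial-size discretization and $z$ by a pairwise-independent family on $\{\pm 1,\pm i\}^d$, with the small resulting errors absorbed into $\eps$.
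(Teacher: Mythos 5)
Your setup is correct through the projection step and the $t$-average: indeed $\E_z[(X_z)_{ij}\overline{(Y_z)_{kl}}]=\tfrac12\langle X_{ij},Y_{kl}\rangle$ (using four-wise cancellations of $z\in\{\pm1,\pm i\}^d$), and $\E_t[A\otimes\overline B]=f(T_z)$ with $f(s)=2s/(1+s^2)$ in singular-value functional calculus. However, the reduction you then propose — showing $\big|\E_z\langle M,f(T_z)-T_z\rangle\big|\le O(\eps)\SDP_\C(M)$ — is not a correct statement of the crux, and in fact that quantity is \emph{not} small. Since $\E_z\langle M,T_z\rangle=\tfrac12 M(X,Y)$ while $\E_z\langle M,f(T_z)\rangle = M(X,Y)-\mathrm{err}$ with $\mathrm{err}$ ranging over the full disk of radius $\tfrac12\SDP_\C(M)$, the difference $\E_z\langle M,f(T_z)-T_z\rangle=\tfrac12 M(X,Y)-\mathrm{err}$ can be of order $\SDP_\C(M)$, not $O(\eps)\SDP_\C(M)$. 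Equivalently, $f(s)-s=s(1-s^2)/(1+s^2)$ is not uniformly small on $[0,1]$ (e.g.\ $f(1/2)-1/2=0.3$), and there is no averaging mechanism that makes it small in the sense you need.

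The decomposition the argument actually requires compares $f(T_z)$ with $2T_z$, not with $T_z$: the hyperbolic-secant identity~\eqref{eq:characteristic} should be read as $\E_t[a^{it}]=2a-\E_t[a^{2+it}]$ (equivalently $f(s)=2s-s^2f(s)$), so that $\E_t[A\otimes\overline B]=2T_z-\E_t\big[(U_z|X_z|^{2+it})\otimes(\overline{V_z|Y_z|^{2-it}})\big]$ and the $z$-average of the leading term is the full $M(X,Y)$, not $\tfrac12 M(X,Y)$. The entire factor $\tfrac12$ in the theorem is lost in bounding the remainder term, and that bound is nontrivial: one must compute $\E_z[(W_zW_z^*)^2]$ exactly (the paper's Claim~\ref{claim:haag-id}), use $XX^*=X^*X=I$ to deduce operator-norm bounds of $\tfrac12$ on $F(t)F(t)^*$, $F(t)^*F(t)$, etc.\ where $F(t),G(t)$ are the vector-valued matrices encoding $U_z|X_z|^{2+it}$, $V_z|Y_z|^{2-it}$ over $z$, and then pass to a genuine SDP-feasible pair $R(t),S(t)\in\U_n(\C^{d+2n^2})$ via an explicit completion (Lemma~\ref{lem:equiv-sdp}). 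Your proposal gestures at ``a suitable operator-valued norm inequality'' but neither states nor proves this, and the error term as you've set it up is the wrong one. Finally, a smaller but real slip: derandomizing $z$ requires a \emph{four-wise} independent family (since Claim~\ref{claim:haag-id} involves degree-four polynomials in $z$), not a pairwise-independent one.
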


While the rounding procedure of \expref{Figure}{fig:rounding} and the proof of \expref{Theorem}{thm:roundingSDP} (contained in \expref{Section}{sec:proof} below) appear to be different from Haagerup's original proof of~\eqref{eq:vect} in~\cite{Haagerup85NCGT}, we derived them using Haagerup's ideas. One source of difference arises from changes that we introduced in order to work with the quantity $\SDP_\C(M)$, while Haagerup's argument treats the quantity $\nc{M}$. A second source of difference is that Haagerup's proof  of~\eqref{eq:vect}  is rather indirect and nonconstructive, while it is crucial to the algorithmic applications that were already mentioned in \expref{Section}{sec:apps intro} for us to formulate a polynomial-time rounding procedure. Specifically, Haagerup establishes the \emph{dual} formulation of~\eqref{eq:vect}, through a repeated use of duality, and he uses a bootstrapping argument that relies on nonconstructive tools from complex analysis. The third step in \expref{Figure}{fig:rounding} originates from Haagerup's complex-analytic considerations. Readers who are accustomed to semidefinite rounding techniques will immediately notice that this step is unusual; we give intuition for it in \expref{Section}{sec:intuition} below, focusing for simplicity on applying the rounding procedure to vectors rather than matrices (\ie, the more familiar setting of the classical Grothendieck inequality).

\subsubsection{An intuitive description of the rounding procedure in the commutative case}\label{sec:intuition}

Consider the effect of the rounding procedure in the commutative case, \ie, when $X,Y\in M_n(\C^d)$ are diagonal matrices. Let the diagonals of $X,Y$ be $x,y\in (\C^d)^n$, respectively. The first step consists in performing a random projection: for $j\in \{1,\ldots,n\}$ let $\alpha_j =\left\langle x_j,z\right\rangle/\sqrt{2}\in \C$ and $\beta_j =\left\langle y_j,z\right\rangle/\sqrt{2}\in \C$, where $z$ is chosen uniformly at random from $\{1,-1,i,-i\}^n$ (alternatively, with minor modifications to the proof one may choose i.\,i.\,d.\ $z_j$ uniformly from the unit circle, as was done by Haagerup~\cite{Haagerup85NCGT}, or use standard complex Gaussians). This step results in sequences of complex numbers whose pairwise products $\alpha_k\overline{\beta_j}$, in expectation, exactly reproduce the pairwise scalar products $\langle x_k,y_j\rangle/2$. However, in general the resulting complex numbers $\alpha_k$ and $\beta_j$ may have modulus larger than $1$. Extending the ``sign'' rounding performed in, say, the Goemans-Williamson algorithm for \textsc{MaxCut}~\cite{GW95} to the complex domain, one could then round each $\alpha_k$ and $\beta_j$ independently by simply replacing them by their respective complex phase.

The procedure that we consider differs from this standard practice by taking into account potential information contained in the modulus of the random complex numbers $\alpha_k,\beta_j$. Writing in polar coordinates $\alpha_k=r_k e^{i\theta_k}$ and $\beta_j=s_je^{i\phi_j}$ we sample a real $t$ according to a specific distribution (the hyperbolic secant distribution $\mu$), and round each $\alpha_k$ and each $\beta_j$ to
$$
a_k \eqdef e^{i(\theta_k+t\log r_k)}\in S_\C^0\,, \qquad\mathrm{and}\qquad b_j \eqdef e^{i(\phi_j-t\log s_j)} \in S_\C^0\,,
$$
respectively.
Observe that this step performs a \emph{correlated} rounding: the parameter $t$ is the same for all $j,k\in \{1,\ldots,n\}$.

The proof presented in~\cite{Haagerup85NCGT} uses the maximum modulus principle to show the \emph{existence} of a real $t$ for which $a_k,b_j$ as defined above provide a good assignment. Intuition for the existence of such a good $t$ can be given as follows. Varying $t$ along the real line corresponds to rotating the phases of the complex numbers $\alpha_j,\beta_k$ at a speed proportional to the logarithm of their modulus: elements with very small modulus vary very fast, those with modulus $1$ are left unchanged, and elements with relatively large modulus are again varied at (logarithmically) increasing speeds. This means that the rounding procedure takes into account the fact that an element with modulus away from $1$ is a ``miss'': that particular element's phase is probably irrelevant, and should be changed. However, elements with modulus close to $1$ are ``good'': their phase can be kept essentially unchanged.

We identify a specific distribution $\mu$ such that a random $t$ distributed according to $\mu$ is good, in expectation. This results in a variation on the usual ``sign'' rounding technique: instead of directly keeping the phases obtained in the initial step of random projection, they are synchronously rotated for a random time $t$, at speeds depending on the associated moduli, resulting in a provably good pair of sequences $a_k,b_j$ of complex numbers with modulus $1$.

\paragraph{Roadmap:} In \expref{Section}{sec:proof} we prove \expref{Theorem}{thm:roundingSDP} both as stated in \expref{Section}{sec:rounding intro} and in a form based on~\eqref{eq:vect}. The real case as well as a closely related Hermitian case are treated next, first in \expref{Section}{sec:realrounding} as a corollary of \expref{Theorem}{thm:roundingSDP}, and then using an alternative direct rounding procedure in \expref{Section}{sec:real}.
\expref{Section}{sec:applications} presents the applications that were outlined in \expref{Section}{sec:apps intro}.

\section{Proof of \expref{Theorem}{thm:roundingSDP}}\label{sec:proof}

In this section we prove \expref{Theorem}{thm:roundingSDP}. The rounding procedure described in \expref{Figure}{fig:rounding} is analyzed in \expref{Section}{sec:analysis}, while the derandomized version is presented in \expref{Section}{sec:derandomized}. The efficiency of the procedure is clear; we refer to \expref{Section}{sec:derandomized} for a discussion on how to discretize the choice of $t$. In \expref{Section}{sec:round nc} we show how the analysis can be modified to the case of $\nc{M}$ and~\eqref{eq:vect}.

In what follows, it will be convenient to use the following notation. Given $M\in M_n(M_n(\C))$ and $X,Y\in M_n(\C^d)$, define
\begin{equation}\label{eq:def sesqui}
M(X,Y)\eqdef \sum_{i,j,k,l=1}^n M_{ijkl} \left\langle X_{ij},Y_{kl}\right\rangle\in \C\,.
\end{equation}
Thus $M(\cdot,\cdot)$ is a sesquilinear form on $M_n(\C^d)\times M_n(\C^d)$, \ie, $M(\alpha X,\beta Y)=\alpha\overline{\beta} M(X,Y)$ for all $X,Y\in M_n(\C^d)$ and $\alpha,\beta\in \C$. Observe that if $A,B\in M_n(\C)$ then
\begin{equation}\label{eq:write in terms of tensors}
M(A,B)=\sum_{i,j,k,l=1}^n M_{ijkl} A_{ij}\overline{B_{kl}}= \sum_{i,j,k,l=1}^n M_{ijkl} \left(A\otimes \overline{B}\right)_{(ij),(kl)}.
\end{equation}

\subsection{Analysis of the rounding procedure}\label{sec:analysis}

\begin{proof}[Proof of~\eqref{eq:goal expectation AB-SDP}]
Let $X,Y\in \U_n(\C^d)$ be vector-valued matrices satisfying~\eqref{eq:maximizer assumption}. Let $z\in \{1,-1,i,-i\}^d$ be chosen uniformly at random, and
$$\Xz \eqdef \frac{1}{\sqrt{2}}\left\langle X,z\right\rangle\qquad\text{and} \qquad \Yz \eqdef \frac{1}{\sqrt{2}}\left\langle Y,z\right\rangle$$
be random variables taking values in $M_n(\C)$ defined as in the second step of the rounding procedure (see \expref{Figure}{fig:rounding}). Then,
\begin{equation}\label{eq:an-1}
\E_{z} \big[\, M( \Xz,  \Yz )\,\big] = \frac{1}{2}\E_{z}\Big[\,\sum_{r,s=1}^d \overline{z_r} {z_s} \sum_{i,j,k,l=1}^n  M_{ijkl} (X_{ij})_r \overline{(Y_{kl})_{s}}\,\Big]
=\frac{1}{2} M ( X, Y )\,,
\end{equation}
where we used the fact that $\E[\overline{z_r}z_s]=\delta_{rs}$ for every $r,s\in\{1,\ldots,d\}$.

Observe that~\eqref{eq:characteristic} implies that
$$\forall\, a\in (0,\infty)\,,\quad  \E_{t} [ a^{it} ] = 2a - \E_{t} [ a^{2+it} ]\,.$$
Recall that the output of our rounding scheme as described in \expref{Figure}{fig:rounding} is $A=U_z|\Xz|^{it}$ and $B=V_z|\Yz|^{-it}$, where $\Xz = U_z|\Xz|$ and $\Yz=V_z|\Yz|$ are the polar decompositions of $\Xz$ and $\Yz$, respectively. Applying the above identity to the the (positive) eigenvalues of $|\Xz|\otimes |\Yz|$, we deduce the matrix equality
 \begin{eqnarray}\label{eq:mu identity bootstrap}
\E_t\left(A\otimes \overline{B}\right) &=&\nonumber   \E_{t} \left[\, \left(U_z|\Xz|^{it}\right)\otimes \left(\overline{V_z}|\Yz|^{it}\right) \,\right] \\
&=& \nonumber \big(U_z\otimes \overline{V_z}\big)\cdot\Big( \E_{t} \left[\, \left(|\Xz|\otimes |\Yz|\right)^{it} \,\right]\Big) \\
&=&2U_z|\Xz|\otimes \overline{V_z}|\Yz| - \E_{t} \left[\, \left(U_z|\Xz|^{2+it}\right)\otimes \left(\overline{V_z}|\Yz|^{2+it}\right) \,\right]\nonumber\\
&=&2\Xz\otimes \overline{\Yz} - \E_{t} \left[\, \left(U_z|\Xz|^{2+it}\right)\otimes \left(\overline{V_z|\Yz|^{2-it}}\right) \,\right].
  \end{eqnarray}

  It follows from~\eqref{eq:def sesqui}, \eqref{eq:write in terms of tensors}, \eqref{eq:an-1} and~\eqref{eq:mu identity bootstrap} that
\begin{equation}\label{eq:an-2}
\E_{z, t} \big[\, M(A, B) \,\big]= M(X, Y) - \E_{z,t} \big[\, M\big(U_z|\Xz|^{2+it}, V_z|\Yz|^{2-it}\big) \,\big].
\end{equation}
Our goal from now on is to bound the second, ``error'' term on the right-hand side of~\eqref{eq:an-2}. Specifically, the rest of the proof is devoted to showing that for any fixed $t \in \R$ we have
\begin{equation}\label{eq:half sdp statement}
  \left|\E_{z} \big[\, M\big(U_z|\Xz|^{2+it}, V_z|\Yz|^{2-it}\big) \,\big]\right| \le \frac12\SDP_\C(M)\,.
\end{equation}
 Once established, the estimate~\eqref{eq:half sdp statement} completes the proof of the desired expectation bound~\eqref{eq:goal expectation AB-SDP} since \begin{equation*} \E_{z, t} \big[\, |M(A, B)| \,\big]\stackrel{\eqref{eq:an-2}\wedge\eqref{eq:half sdp statement}}{\ge} M(X,Y)-\frac12 \SDP_\C(M)\stackrel{\eqref{eq:maximizer
assumption}}{\ge}\left(\frac12 -\e\right)\SDP_\C(M)\,.
\end{equation*}
So, for the rest of the proof, fix some $t \in \R$. As a first step towards~\eqref{eq:half sdp statement} we state the following claim.

\begin{claim}\label{claim:haag-id} Let $W\in M_n(\C^d)$ be a vector-valued matrix, and for every $r\in \{1,\ldots,d\}$ define $W_r\in M_n(\C)$ by $(W_r)_{ij}=(W_{ij})_r$. Let $z\in \{1,-1,i,-i\}^d$ be chosen uniformly at random. Writing $W_z=\langle W,z\rangle\in M_n(\C)$, we have
\begin{align}
\E_z\left[(W_zW_z^*)^2\right]&=(WW^*)^2 + \sum_{r=1}^d W_r(W^*W-W_r^*W_r)W_r^*\,,\label{eq:identity version-1}\\
\E_z\left[(W_z^*W_z)^2\right]&=(W^*W)^2 + \sum_{r=1}^d W_r^*(WW^*-W_rW_r^*)W_r\,.\label{eq:identity version-2}
\end{align}
\end{claim}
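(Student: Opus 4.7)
The plan is to expand both $W_z$ and $W_z^*$ in the ``coordinate slices'' $W_r$ and reduce the claim to a direct fourth-moment computation for $z$. Since $(W_z)_{jk} = \langle W_{jk}, z\rangle = \sum_r \overline{z_r}(W_r)_{jk}$, we have $W_z = \sum_r \overline{z_r}\, W_r$ and $W_z^* = \sum_r z_r W_r^*$, so that
\[
(W_z W_z^*)^2 \;=\; \sum_{r_1,s_1,r_2,s_2=1}^d \overline{z_{r_1}} z_{s_1} \overline{z_{r_2}} z_{s_2}\; W_{r_1} W_{s_1}^* W_{r_2} W_{s_2}^*,
\]
with an analogous expansion (with $z$ and $\bar z$ interchanged, and each $W_r$ replaced by $W_r^*$) for $(W_z^* W_z)^2$. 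Taking expectations over $z$ therefore reduces both identities to computing the mixed fourth moment $\E[\overline{z_{r_1}} z_{s_1} \overline{z_{r_2}} z_{s_2}]$.

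Since the $z_r$ are independent and each is uniform on the fourth roots of unity, this moment factors coordinatewise, and for each $r$ one has $\E[z_r^a \overline{z_r}^b] = 1$ iff $a \equiv b \pmod 4$. Since the exponents in any single coordinate lie in $\{0,1,2\}$, this forces $a = b$ for every $r$, i.e., the multisets $\{r_1,r_2\}$ and $\{s_1,s_2\}$ must coincide; otherwise the moment vanishes. Two patterns of indices survive: the \emph{parallel} pairing $s_1 = r_1,\, s_2 = r_2$ (with $r_1,r_2$ arbitrary), and the \emph{crossed} pairing $s_1 = r_2,\, s_2 = r_1$ with $r_1 \neq r_2$ (the diagonal $r_1 = r_2$ is already captured among the parallel terms).

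The parallel contribution telescopes:
\[
\sum_{r_1,r_2} W_{r_1} W_{r_1}^* W_{r_2} W_{r_2}^* \;=\; \Bigl(\sum_r W_r W_r^*\Bigr)^2 \;=\; (WW^*)^2,
\]
where the last equality $\sum_r W_r W_r^* = WW^*$ is immediate from unfolding~\eqref{eq:vector abs matrix}. The crossed contribution rearranges as
\[
\sum_{r\neq s} W_r W_s^* W_s W_r^* \;=\; \sum_r W_r\Bigl(\sum_{s\neq r} W_s^* W_s\Bigr) W_r^* \;=\; \sum_r W_r\bigl(W^*W - W_r^*W_r\bigr) W_r^*,
\]
recognising $\sum_s W_s^* W_s$ as $W^*W$ via the analogous unfolding. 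Summing the two pieces yields~\eqref{eq:identity version-1}; identity~\eqref{eq:identity version-2} follows from the symmetric computation on $(W_z^*W_z)^2$, with each $W_r$ interchanged with $W_r^*$ throughout.

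The only real subtlety is to keep the noncommuting products in the correct order across the two pairings: the parallel pairing produces $W_{r_1} W_{r_1}^* W_{r_2} W_{r_2}^*$, which factors as a perfect square, whereas the crossed pairing produces $W_{r_1} W_{r_2}^* W_{r_2} W_{r_1}^*$, which instead sandwiches $W_s^* W_s$ between $W_r$ and $W_r^*$ and is precisely what generates the correction term. I expect this bookkeeping to be the main (mild) obstacle; the moment computation itself is very clean because fourth roots of unity give $\E[z_r^a \overline{z_r}^b] \in \{0,1\}$ with a trivial criterion.
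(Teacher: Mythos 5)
Your proof is correct and follows essentially the same route as the paper: expand $W_z = \sum_r \overline{z_r} W_r$, write $(W_z W_z^*)^2$ as a fourth-order sum, use independence and uniformity of the $z_r$ on fourth roots of unity to identify the surviving index patterns, and regroup into the square term plus the correction. The paper groups terms as the diagonal plus two off-diagonal families and then re-completes the square; your "parallel/crossed" bookkeeping (with the diagonal absorbed into the parallel family) is a cosmetic variant of the same computation.
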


\begin{proof}
By definition $W_z=\sum_{r=1}^d \overline{z_r} {W_r}$, and
recalling~\eqref{eq:vector abs matrix} we have
\[
WW^*=\sum_{r=1}^d W_rW_r^*\qquad\text{and}\qquad W^*W=\sum_{r=1}^d
  W_r^*W_r\,.
\]
Consequently,
\begin{align}\label{eq:for unitary case}
\nonumber\E_z\left[(W_zW_z^*)^2\right]&=\E_z\Big[\, \sum_{p,q,r,s=1}^d\overline{z_p}{z_q}\overline{z_r}{z_s}{W_p}W_q^*{W_r}
W_s^*\, \Big]\\ \nonumber
&= \sum_{p=1}^d W_pW_p^*W_p
W_p^*+\sum_{\substack{p,q\in \{1,\ldots,d\}\\p\neq q}}\left(W_pW_p^*W_q
W_q^*+W_pW_q^*W_q
W_p^*\right)\\\nonumber
&= \sum_{p,q=1}^d W_pW_p^*W_q
W_q^*+\sum_{p,q=1}^dW_pW_q^*W_q
W_p^*- \sum_{p=1}^dW_pW_p^*W_pW_p^* \\\nonumber
&= \Big(\sum_{p=1}^d W_pW_p^*\Big)^2+\sum_{p=1}^d W_p\Big(\sum_{q=1}^dW_q^*W_q\Big)W_p^* - \sum_{p=1}^dW_pW_p^*W_pW_p^*\,,
\end{align}
proving~\eqref{eq:identity version-1}. A similar calculation yields~\eqref{eq:identity version-2}.
\end{proof}

Now, for every $t\in \R$ define two vector-valued matrices $$F(t),G(t)\in M_n\left(\C^{\{1,-1,i,-i\}^d}\right)$$ by setting for every $j,k\in \{1,\ldots,n\}$ and $z\in \{1,-1,i,-i\}^d$,
\begin{equation}\label{eq:defFG}
(F(t)_{jk})_z\eqdef \frac{1}{2^d}\left(U_z|\Xz|^{2+it}\right)_{jk} \qquad \text{and}\qquad (G(t)_{jk})_z\eqdef \frac{1}{2^d}\left(V_z|\Yz|^{2-it}\right)_{jk}.
\end{equation}
Thus,
\begin{equation}\label{eq:MFG}
M(F(t),G(t))=\frac{1}{4^d}\sum_{z\in \{1,-1,i,-i\}^d} M\big(U_z|\Xz|^{2+it}, V_z|\Yz|^{2-it}\big) =\E_{z} \big[\, M\big(U_z|\Xz|^{2+it}, V_z|\Yz|^{2-it}\big) \,\big].
\end{equation}
Moreover, recalling that $X_z=U_z|X_z|$ is the polar decomposition of $X_z$, we have
\begin{equation}\label{eq:F identity}
F(t)F(t)^*=\frac{1}{4^d}\sum_{z\in \{1,-1,i,-i\}^d} U_z |X_z|^4 U_z^*=\E_z\left[U_z |X_z|^4 U_z^*\right]=\E_z\left[\left(X_zX_z^*\right)^2\right].
\end{equation}
Similarly $F(t)^*F(t)=\E_z\left[\left(X_z^*X_z\right)^2\right]$, so that an application of \expref{Claim}{claim:haag-id} with $W=\frac{1}{\sqrt{2}}X$ yields, using $XX^*=X^*X=I$ since $X\in \U_n(\C^d)$,
\begin{equation}\label{eq:identity F}
F(t)F(t)^*+\frac14 \sum_{r=1}^d X_rX_r^*X_rX_r^*=F(t)^*F(t)+\frac14 \sum_{r=1}^d X_r^*X_rX_r^*X_r=\frac{1}{2}I.
\end{equation}
Analogously,
\begin{equation}\label{eq:identity G}
G(t)G(t)^*+\frac14 \sum_{r=1}^d Y_rY_r^*Y_rY_r^*=G(t)^*G(t)+\frac14 \sum_{r=1}^d Y_r^*Y_rY_r^*Y_r=\frac{1}{2}I.
\end{equation}
Using that each term $X_r^*X_rX_r^*X_r$ and $Y_r^*Y_rY_r^*Y_r$ is positive semidefinite, the two equations above imply that $F(t),G(t)$ satisfy the norm bounds
\begin{equation}\label{eq:FG-maxbound}
\max\big\{\|F(t)F(t)^*\|,\,\|F(t)^*F(t)\|,\,\|G(t)G(t)^*\|,\,\|G(t)^*G(t)\|\big\}\leq \frac{1}{2}\,.
\end{equation}
 As shown in \expref{Lemma}{lem:equiv-sdp} below,~\eqref{eq:FG-maxbound} implies that there exists a pair of vector-valued matrices $$R(t),S(t)\in\U_n(\C^{d+2n^2})$$ such that \begin{equation}\label{eq:MRS}
 M(R(t),S(t))=M(\sqrt{2}F(t),\sqrt{2}G(t))\,.
 \end{equation}
  (This fact can also be derived directly using~\eqref{eq:identity F} and~\eqref{eq:identity G}.) Recalling the definition of $\SDP_\C(M)$ in~\eqref{eq:SDPC}, it follows that for every $t\in \R$,
\begin{equation}\label{eq:half sdp}
 \left|\E_{z} \big[\, M\big(U_z|\Xz|^{2+it}, V_z|\Yz|^{2-it}\big) \,\big]\right|\stackrel{\eqref{eq:MFG}}{=}|M(F(t),G(t))|\stackrel{\eqref{eq:MRS}}{=}\frac12|M(R(t),S(t))|\le \frac12\SDP_\C(M)\,,
\end{equation}
completing the proof of~\eqref{eq:half sdp statement}.
\end{proof}

\begin{lemma}\label{lem:equiv-sdp}
Let $X,Y\in M_n(\C^d)$ be such that $\max(\|X^*X\|,\|XX^*\|,\|Y^*Y\|,\|YY^*\|)\leq 1$. Then there exist $R,S\in \U_n(\C^{d+2n^2})$ such that for every $M\in M_n(M_n(\C))$ we have $M(R,S) = M(X,Y)$. %
Moreover, $R$ and $S$ can be computed from $X$ and $Y$ in time $\poly(n,d)$.
\end{lemma}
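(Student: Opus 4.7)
The plan is to construct $R$ and $S$ by padding $X$ and $Y$ with extra coordinates so as to force $RR^*=R^*R=SS^*=S^*S=I$, while arranging the extra coordinates in two disjoint blocks of $n^2$ each so that whenever $R$ is nonzero on a padded coordinate, $S$ is zero there, and vice versa. Concretely, identify $\C^{d+2n^2}$ with $\C^d\oplus\C^{n^2}\oplus\C^{n^2}$, take $R_r=X_r$ and $S_r=Y_r$ on the first $d$ coordinates (using the notation $R=\sum_{r} R_r\otimes e_r$ with $R_r\in M_n(\C)$, as in Claim~\ref{claim:haag-id}), let $R$ carry a to-be-specified matrix on the middle block and vanish on the last, and let $S$ vanish on the middle block and carry a to-be-specified matrix on the last. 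Because at least one of $R_r,S_r$ is zero at every padded coordinate, the identity $M(R,S)=\sum_r\sum_{ijkl}M_{ijkl}(R_r)_{ij}\overline{(S_r)_{kl}}$ immediately gives $M(R,S)=M(X,Y)$ for every $M$.

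What remains is to fill each block of $n^2$ padded coordinates so as to complete $X$ (respectively $Y$) to a vector-valued unitary. The substantive step is therefore the following completion lemma: given positive semidefinite $P,Q\in M_n(\C)$ with $\Tr(P)=\Tr(Q)$, produce $n^2$ matrices $Z_1,\dots,Z_{n^2}\in M_n(\C)$ with $\sum_r Z_r Z_r^*=P$ and $\sum_r Z_r^*Z_r=Q$. Applied to $P=I-XX^*$ and $Q=I-X^*X$ (PSD by the hypothesis $\|XX^*\|,\|X^*X\|\leq 1$, with equal traces since both equal $\sum_{r=1}^d\|X_r\|_F^2$), this provides the middle block; the same construction applied to $I-YY^*$ and $I-Y^*Y$ furnishes the last block.

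The completion admits a direct construction via spectral decomposition. Diagonalize $P=\sum_{i=1}^n\lambda_i u_iu_i^*$ and $Q=\sum_{j=1}^n\mu_jv_jv_j^*$, and set $T=\Tr(P)=\Tr(Q)=\sum_i\lambda_i=\sum_j\mu_j$. If $T=0$ then $P=Q=0$ and all $Z_r$ can be taken to be zero; otherwise index the $n^2$ matrices by pairs $(i,j)$ and define
\[
Z_{ij}\eqdef\sqrt{\lambda_i\mu_j/T}\;u_iv_j^*.
\]
A direct check gives $Z_{ij}Z_{ij}^*=(\lambda_i\mu_j/T)u_iu_i^*$ and $Z_{ij}^*Z_{ij}=(\lambda_i\mu_j/T)v_jv_j^*$, so summation yields $\sum_{ij}Z_{ij}Z_{ij}^*=\sum_i\lambda_iu_iu_i^*=P$ and $\sum_{ij}Z_{ij}^*Z_{ij}=\sum_j\mu_jv_jv_j^*=Q$, as required.

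There is no real obstacle here beyond getting the bookkeeping right; the only delicate point is verifying $\Tr(I-XX^*)=\Tr(I-X^*X)$, which is where the hypothesis that both $XX^*$ and $X^*X$ (not just one of them) are bounded by $I$ enters implicitly through the identity $\Tr(XX^*)=\Tr(X^*X)$. Since spectral decomposition of an $n\times n$ matrix takes time $\poly(n)$ and the remaining operations are simple matrix products of size $n$, the whole construction runs in time $\poly(n,d)$.
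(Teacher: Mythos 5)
Your construction is the same as the paper's: pad $X$ and $Y$ with $n^2+n^2$ coordinates in disjoint blocks, and fill the nonzero block via the spectral decompositions of $I-XX^*$ and $I-X^*X$ using the matrices $\sqrt{\lambda_i\mu_j/\sigma}\,u_iv_j^*$. The only minor slip is the closing remark attributing $\Tr(XX^*)=\Tr(X^*X)$ to the norm hypothesis --- that trace identity holds unconditionally; the norm bounds are needed only for the positivity of $I-XX^*$ and $I-X^*X$ --- but this does not affect the correctness of the argument.
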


\begin{proof}
Let $A = I - XX^*$ and $B = I-X^*X$, and note that $A,B\geq 0$ and
$\Tr(A)=\Tr(B)$. Write the spectral decompositions of $A$ and $B$ as
\[
A = \sum_{i=1}^n \lambda_i (u_iu_i^*)\qquad\text{and}\qquad B =
\sum_{j=1}^n \mu_j (v_jv_j^*)\,, %
\]
respectively. Set $\sigma = \sum_{i=1}^n \lambda_i = \sum_{j=1}^n \mu_j$, and define
$$R \eqdef X \oplus \Big(\bigoplus_{i,j=1}^n \sqrt{\frac{\lambda_i\mu_j}{\sigma}}\, (u_i v_j^*)\Big) \oplus \Big( 0_{M_n(\C^{n^2})}\Big)\in M_n(\C^{d}\oplus \C^{n^2}\oplus \C^{n^2})\,.$$
With this definition we have $RR^* = XX^* + A = I$ and $R^*R = X^*X + B = I$, so $R\in\U_n(\C^{d+2n^2})$.
Let $S\in \U_n(\C^{d+2n^2})$ be defined analogously from $Y$, with the last two blocks of $n^2$ coordinates permuted. One checks that $M(R,S) = M(X,Y)$, as required.

Finally, $A,B$, their spectral decomposition, and the resulting $R,S$ can all be computed in time $\poly(n,d)$ from $X,Y$.
\end{proof}

\subsection{Derandomized rounding}\label{sec:derandomized}

Note that we can always assume that $X,Y\in \U_n(\C^d)$, where $d\le 2n^2$. We start by slightly changing the projection step. Define $\Xz'$ to be the projection $\Xz=({1}/{\sqrt{2}})\langle X,z \rangle $, after we replace all singular values of $\Xz$ that are smaller than $\eps$ with $\eps$. Then, writing
$$ 2 \Xz' \otimes \overline{\Yz}' = 2 \Xz \otimes \overline{\Yz} + 2 (\Xz'-\Xz) \otimes \overline{\Yz} + 2 \Xz' \otimes (\overline{\Yz'} - \overline{\Yz})\,,$$
we see that in the analogue of~\eqref{eq:an-2} the first term is at least $M(X,Y)-4\eps d \SDP_\C(M)$. Here we use that $\|\Xz\|, \|\Yz\| \le d$ which follows by the triangle inequality and $X,Y\in\U_n(\C^d)$.
For the second term in~\eqref{eq:an-2}, the previous analysis remains
unchanged, provided we prove an analogue
of~\eqref{eq:FG-maxbound}. Using~\eqref{eq:F identity} and the
analogous equations for $F(t)^*F(t)$, $G(t)G(t)^*$ and $G(t)^*G(t)$, it will suffice to bound four expressions such as
\[
\big\| \E_{z}\big[\, (\Xz'(\Xz')^\dagger)^{2}\,\big] \big\|\,.
\]
One checks that the modification to the rounding we did can only increase this by $\eps^4$ (even for each ${z}$), hence following the previous analysis we get the bound in~\eqref{eq:half sdp} with $\SDP_\C(M)/2$ replaced by $(1+\eps^4) \SDP_\C(M)/2$.

Next, we observe that the coordinates of ${z}$ need not be independent, and it suffices if they are chosen from a four-wise independent distribution. As a result, there are only $\poly(n)$ possible values of ${z}$ and they can be enumerated efficiently. Therefore, we can assume that we have a value ${z}\in \{1,-1,i,-i\}^d$ for which
\begin{align}\label{eq:derand1}
\big| \E_{t} \big[\, M\big(U_z|\Xz'|^{it}, V_z|\Yz'|^{-it}\big) \,\big]\big| \,\geq\, \Big(\frac{1}{2}-\eps\Big)\SDP_\C(M)\,.
\end{align}

Notice that for some universal constant $c>0$, with probability at least $1-\eps$, a sample from the hyperbolic secant distribution is at most $c \log(1/\eps)$ in absolute value. Therefore, denoting the restriction of the hyperbolic secant distribution to the interval $[-c \log(1/\eps), c \log(1/\eps)]$ by $\mu'$, and using the fact that the expression inside the expectation in~\eqref{eq:derand1} is never larger than $\SDP_\C(M)$, for $t'$ distributed according to $\mu'$ we have
$$
\big| \E_{t'} \big[\, M\big(U_z|\Xz'|^{it'}, V_z|\Yz'|^{-it'}\big) \,\big]\big| \,\geq\, \Big(\frac{1}{2}-2\eps\Big)\SDP_\C(M)\,.
$$
Moreover, for any $t,t' \in \R$,
\begin{align}
&\big|M\big( U_z|\Xz'|^{it}, V_z|\Yz'|^{-it}\big) - M\big( U_z|\Xz'|^{it'}, V_z|\Yz'|^{-it'}\big)\big|  \nonumber \\
& \qquad =
\big|M\big( U_z(|\Xz'|^{it}- |\Xz'|^{it'}), V_z|\Yz'|^{-it}\big) +
M\big( U_z|\Xz'|^{it'}, V_z(|\Yz'|^{-it}-|\Yz'|^{-it'})\big)\big|  \nonumber \\
& \qquad \le
\big|M\big( U_z(|\Xz'|^{it}- |\Xz'|^{it'}), V_z|\Yz'|^{-it}\big)\big|+
\big|M\big( U_z|\Xz'|^{it'}, V_z(|\Yz'|^{-it}-|\Yz'|^{-it'})\big)\big|\,.\label{eq:smoothint}
\end{align}
The first absolute value in~\eqref{eq:smoothint} is at most
\begin{align*}
\SDP_\C(M) \cdot \| U_z(|\Xz'|^{it}- |\Xz'|^{it'})\| \cdot \| V_z|\Yz'|^{-it} \| &= \SDP_\C(M) \cdot \| |\Xz'|^{it}- |\Xz'|^{it'}\| \\
&\le \SDP_\C(M)\cdot  \log(\max\{ \|\Xz'\|,\|(\Xz')^{-1}\| \})\cdot |t-t'|\,.
\end{align*}
We have $\|\Xz'\| \le d$ as explained above, and $\|(\Xz')^{-1}\| \le 1/\eps$ by the way our modified rounding procedure was defined.
Similar bounds hold for $\Yz'$.
It therefore suffices to pick $t$ from a grid of size $O(\log(1/\eps) \max(1/\eps^2,d/\eps) )$.

\subsection{The rounding procedure in the case \texorpdfstring{of~\eqref{eq:vect}}{of (\ref{eq:vect})}}\label{sec:round nc}

\expref{Theorem}{thm:roundingSDP} addressed the performance of the rounding procedure described in \expref{Figure}{fig:rounding} with respect to inequality~\eqref{eq:complex sharp noncommutative gro}. Here we prove that this rounding procedure has the same performance with respect to the noncommutative Grothendieck inequality~\eqref{eq:vect} as well. This is the content of the following theorem.

\begin{theorem}\label{thm:rounding} Fix $n,d\in \N$ and $\e\in (0,1)$. Suppose that $M\in M_n(M_n(\C))$ and that  $X,Y\in M_n(\C^d)$ satisfy
\begin{equation}\label{eq:max constraint}
\max\left\{\|XX^*\|+\|X^*X\|, \|YY^*\|+\|Y^*Y\|\right\}\le 2
\end{equation}
and
\begin{equation}\label{eq:maximizer assumption-nc}
\Big|\sum_{i,j,k,l=1}^n M_{ijkl} \langle X_{ij}, Y_{kl}\rangle \Big|
\,\geq\, (1-\eps) \nc{M}\,,
\end{equation}
where $\nc{M}$ was defined in~\eqref{eq:vect}. Then the rounding procedure described in \expref{Figure}{fig:rounding} 
outputs %
a pair of matrices $A,B\in \U_n$ such that
\begin{equation}\label{eq:goal expectation AB}
\E\Big[\,\Big|\sum_{i,j,k,l=1}^n M_{ijkl} \,A_{ij} \overline{B_{kl}}\Big|\,\Big] \,\geq\, \Big(\frac{1}{2}-\eps\Big)\nc{M}\,.
\end{equation}
\end{theorem}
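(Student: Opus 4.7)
The plan is to repeat the proof of Theorem~\ref{thm:roundingSDP} essentially verbatim up through equations~\eqref{eq:an-2} and~\eqref{eq:MFG}, the only change being in the final operator-norm bound on $F(t),G(t)$. Under the weaker hypothesis~\eqref{eq:max constraint} the tight identities~\eqref{eq:identity F}--\eqref{eq:identity G} no longer hold, but $F(t)$ and $G(t)$ will still satisfy an operator-norm inequality that precisely matches the admissibility condition defining $\nc{M}$ in~\eqref{eq:ncnorm}. This compatibility is the core observation driving the argument.

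Concretely, I would apply Claim~\ref{claim:haag-id} with $W=X/\sqrt{2}$ to obtain
\[
F(t)F(t)^* \,=\, \tfrac{1}{4}(XX^*)^2 + \tfrac{1}{4}\sum_{r=1}^d X_r(X^*X - X_r^*X_r)X_r^*.
\]
Using $X^*X - X_r^*X_r \le X^*X \le \|X^*X\|\,I$ and summing over $r$ yields $F(t)F(t)^* \le \tfrac{1}{4}(XX^*)^2 + \tfrac{1}{4}\|X^*X\|\,XX^*$, and so $\|F(t)F(t)^*\| \le \tfrac{1}{4}\|XX^*\|(\|XX^*\|+\|X^*X\|)$. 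A symmetric calculation starting from~\eqref{eq:identity version-2} gives $\|F(t)^*F(t)\| \le \tfrac{1}{4}\|X^*X\|(\|XX^*\|+\|X^*X\|)$, so summing and invoking~\eqref{eq:max constraint},
\[
\|F(t)F(t)^*\| + \|F(t)^*F(t)\| \,\le\, \tfrac{1}{4}\bigl(\|XX^*\|+\|X^*X\|\bigr)^2 \,\le\, 1.
\]
The analogous bound holds for $G(t)$. Hence $\sqrt{2}\,F(t)$ and $\sqrt{2}\,G(t)$ satisfy the admissibility condition appearing in~\eqref{eq:ncnorm}, so for every $t\in\R$ we have $|M(\sqrt{2}F(t),\sqrt{2}G(t))| \le \nc{M}$, i.e., $|M(F(t),G(t))| \le \tfrac{1}{2}\nc{M}$.

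Combining this with the decomposition~\eqref{eq:an-2} (which still holds in the present setting, since its derivation only used the polar decompositions of $X_z,Y_z$ and the characteristic identity~\eqref{eq:characteristic}) and then applying Jensen's inequality together with the hypothesis~\eqref{eq:maximizer assumption-nc} concludes the proof:
\[
\E\bigl[|M(A,B)|\bigr] \,\geq\, |M(X,Y)| - \tfrac{1}{2}\nc{M} \,\geq\, \bigl(\tfrac{1}{2}-\eps\bigr)\nc{M}.
\]
No real obstacle presents itself. The main point worth emphasizing is that, in contrast to the proof of Theorem~\ref{thm:roundingSDP}, Lemma~\ref{lem:equiv-sdp} is not invoked: the natural operator-norm estimate produced by Claim~\ref{claim:haag-id} under the relaxed assumption~\eqref{eq:max constraint} already has exactly the right form to match the admissibility condition for $\nc{M}$ after rescaling by $\sqrt{2}$, with no need to lift $F(t),G(t)$ to genuine vector-valued unitaries. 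This is precisely why the same rounding procedure of Figure~\ref{fig:rounding} yields the same $1/2-\eps$ approximation ratio against the larger relaxation $\nc{M}$.
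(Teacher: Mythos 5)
Your proof is correct and takes essentially the same route as the paper: you rerun the expectation identity~\eqref{eq:an-2}, then derive from Claim~\ref{claim:haag-id} the bound $\|F(t)F(t)^*\|+\|F(t)^*F(t)\|\le\tfrac14(\|XX^*\|+\|X^*X\|)^2\le 1$ (this is precisely the content of the paper's Corollary~\ref{claim:haag}), and observe that $\sqrt2 F(t),\sqrt2 G(t)$ are then admissible for the supremum defining $\nc{M}$, yielding $|M(F(t),G(t))|\le\tfrac12\nc{M}$. The only cosmetic difference is that the paper packages the operator-norm estimate as a stand-alone corollary while you inline it; your closing observation that Lemma~\ref{lem:equiv-sdp} is unnecessary here is exactly right and also implicit in the paper.
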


\begin{proof}
We shall explain how to modify the argument presented in \expref{Section}{sec:analysis}, relying on the notation that was introduced there. All we need to do is to replace~\eqref{eq:half sdp} by the assertion
\begin{equation}\label{eq:new assertion}
\forall\, t\in \R,\qquad |M(F(t),G(t))|\le \frac12 \nc{M}\,.
\end{equation}
To this end we use the following corollary of \expref{Claim}{claim:haag-id}, which is a slight variant of~\cite[Lem.~4.1]{Haagerup85NCGT}.

\begin{corollary}\label{claim:haag} Let $W\in M_n(\C^d)$ be a vector-valued matrix and $z\in \{1,-1,i,-i\}^d$ chosen uniformly at random. Writing $W_z=\langle W,z\rangle\in M_n(\C)$, we have
 \begin{equation}\label{eq:bounded by square}
\left\|\E_z\left[(W_zW_z^*)^2\right]\right\|+\left\|\E_z\left[(W_z^*W_z)^2\right]\right\| \le \big(\left\|WW^*\right\|+\left\|W^*W\right\|\big)^2.
 \end{equation}
\end{corollary}

\begin{proof}
Starting from~\eqref{eq:identity version-1} and noting that
$$\Big\|\sum_{r=1}^d W_r(W^*W-W_r^*W_r)W_r^*\Big\| \leq \|W^*W\|\cdot \Big\|\sum_{r=1}^d W_r W_r^*\Big\| =\left\|W^*W\right\|\cdot\left\|WW^*\right\|\,,$$
we obtain the inequality
\begin{equation}\label{eq:to add1}
\left\|\E_z\left[(W_zW_z^*)^2\right]\right\|\le \left\|WW^*\right\|^2+\left\|W^*W\right\|\cdot\left\|WW^*\right\|.
\end{equation}
Similarly, from~\eqref{eq:identity version-2} we get
\begin{equation}\label{eq:to add 2}
\left\|\E_z\left[(W_z^*W_z)^2\right]\right\|\le \left\|W^*W\right\|^2+\left\|WW^*\right\|\cdot\left\|W^*W\right\|.
\end{equation}
By summing~\eqref{eq:to add1} and~\eqref{eq:to add 2} one deduces~\eqref{eq:bounded by square}.
\end{proof}

Combining~\eqref{eq:F identity} with~\eqref{eq:bounded by square} for $W=X/\sqrt{2}$, we have
\begin{equation}\label{eq:new identity F}
\|F(t)F(t)^*\|+\|F(t)^*F(t)\|\le \frac{1}{4}(\|XX^*\|+\|X^*X\|)^2\stackrel{\eqref{eq:max constraint}}{\le} 1\,.
\end{equation}
Analogously,
\begin{equation}\label{eq:new identity G}
\|G(t)G(t)^*\|+\|G(t)^*G(t)\|\le \frac{1}{4}(\|YY^*\|+\|Y^*Y\|)^2\le 1\,.
\end{equation}
Recalling the definition of $\nc{M}$, it follows from~\eqref{eq:new identity F} and~\eqref{eq:new identity G} that for every $t\in \R$,
\begin{equation}\label{eq:half nc}
 \left|\E_{z} \big[\, M\big(U_z|\Xz|^{2+it}, V_z|\Yz|^{2-it}\big) \,\big]\right|\stackrel{\eqref{eq:MFG}}{=}|M(F(t),G(t))|\le \frac12\nc{M}\,.
\end{equation}
Hence,
\begin{equation*}
\E_{z, t} \big[\, |M(A, B)| \,\big]\stackrel{\eqref{eq:an-2}\wedge\eqref{eq:half nc}}{\ge} M(X,Y)-\frac12 \nc{M}\stackrel{\eqref{eq:maximizer assumption-nc}}{\ge}\left(\frac12 -\e\right)\nc{M}\,,
\end{equation*}
completing the proof of the desired expectation bound~\eqref{eq:goal expectation AB}.
\end{proof}

\begin{remark}
The following example, due to Haagerup~\cite{Haagerup85NCGT}, shows
that the factor $2$ approximation guarantee obtained in
\expref{Theorem}{thm:rounding} is optimal: the best constant
in~\eqref{eq:vect} equals $2$. Let $M\in M_n(M_n(\C))$ be given by
$M_{1jk1} = \delta_{jk}$, and $M_{ijkl}=0$ if $(i,l)\neq (1,1)$. A
direct computation shows that $\opt_\C(M)=1$. Define $X,Y\in
M_{n}(\C^n)$ by $X_{1j} = Y_{j1} = \sqrt{2/(n+1)}\ e_j \in \C^n$ for
$j\in \{1,\ldots,n\}$ and all other entries of $X$ and $Y$  vanish
(here $e_1,\ldots,e_n$ is the standard basis of $\C^n$). Using these
two vector-valued matrices one shows that $\nc{M} \ge 2n/(n+1)$. Recall that in the introduction we mentioned that it was shown in~\cite{HI95} that the best constant  in the weaker inequality~\eqref{eq:complex sharp noncommutative gro} is also $2$, but the example exhibiting this stronger fact is more involved.
\end{remark}

\section{The real and Hermitian cases}\label{sec:realrounding}

The $n\times n$ Hermitian matrices are denoted $H_n$. A $4$-tensor $M\in M_n(M_n(\C))\cong M_n(\C)\otimes M_n(\C)$ is said to be Hermitian if $M_{ijkl}=\overline{M_{jilk}}$ for all $i,j,k,l\in \{1,\ldots,n\}$. Investigating the noncommutative Grothendieck inequality in the setting of Hermitian $M$ is most natural in applications to quantum information, while problems in real optimization as described in the introduction lead to real  $M\in M_n(M_n(\R))$. These special cases are treated in this section.

Consider the following Hermitian analogue of the quantity $\opt_\C(M)$:
$$
\opt^*_\C(M)\eqdef \sup_{\substack{A,B\in H_n\\ \|A\|,\|B\|\le 1}}\Big|\sum_{i,j,k,l=1}^n M_{ijkl}A_{ij}\overline{B_{kl}}\Big|\,.
$$
Note that the convex hull of $\U_n$ consists of all the matrices $A\in M_n(\C)$ with $\|A\|\le 1$, so by convexity for every $M\in M_n(M_n(\C))$ we have
\begin{equation}\label{eq:pass to convex hull C}
\opt_\C(M)= \sup_{\substack{A,B\in M_n(\C)\\\|A\|,\|B\|\le 1}}\Big|\sum_{i,j,k,l=1}^n M_{ijkl}A_{ij}\overline{B_{kl}}\Big|\,.
\end{equation}
This explains why $\opt^*_\C(M)$ should indeed be viewed as a Hermitian analogue of $\opt_\C(M)$. The real analogue of~\eqref{eq:pass to convex hull C} is that, due to the fact that the convex hull of $\O_n$ consists of all the matrices $A\in M_n(\R)$ with $\|A\|\le 1$, for every $M\in M_n(M_n(\R))$ we have
\begin{equation}\label{eq:pass to convex hull R}
\opt_\R(M)=\sup_{\substack{A,B\in M_n(\R)\\\|A\|,\|B\|\le 1}}\Big|\sum_{i,j,k,l=1}^n M_{ijkl}A_{ij}B_{kl}\Big|\,.
\end{equation}

The following theorem establishes an algorithmic equivalence between the problems of approximating either of these two quantities.

\begin{theorem}\label{thm:hermitian real equivalence}
For every $K\in [1,\infty)$ the following two assertions are equivalent.
\begin{enumerate}
\item\label{part:hermite-1} There exists a 
polynomial-time 
algorithm $\alg^*$ that takes as input a Hermitian $M\in M_n(M_n(\C))$ and 
outputs %
$A,B\in H_n$ with $\max\{\|A\|,\|B\|\}\le 1$ and
$\opt_\C^*(M)\le K|M(A,B)|$.
\item\label{part:hermite-2} There exists a 
polynomial-time 
algorithm $\alg$ that takes as input $M\in M_n(M_n(\R))$ and 
outputs %
$U,V\in \O_n$ such that $\opt_\R(M)\le K M(U,V)$.
\end{enumerate}
\end{theorem}

In \expref{Section}{sec:hermitian} we show that for every
$K>2\sqrt{2}$ there exists an algorithm $\alg^*$ as in
\expref{assertion}{part:hermite-1} of \expref{Theorem}{thm:hermitian
  real equivalence}.
 Consequently, we obtain the algorithm for
computing $\opt_\R(M)$ whose existence was claimed in
\expref{Theorem}{thm:alg}. The implication
\ref{part:hermite-1}~$\implies$~\ref{part:hermite-2} of \expref{Theorem}{thm:hermitian real equivalence} is the only part of \expref{Theorem}{thm:hermitian real equivalence} that will be used in this article; the reverse direction \ref{part:hermite-2}~$\implies$~\ref{part:hermite-1} is included here for completeness. Both directions of \expref{Theorem}{thm:hermitian real equivalence} are proved in \expref{Section}{sec:real-herm-equiv}.

\subsection{Two-dimensional rounding}\label{sec:twodim}

In this section we give an algorithmic version of Krivine's proof~\cite{Krivine:79a} that the $2$-dimensional real Grothendieck constant satisfies $K_G(2) \leq \sqrt{2}$. The following theorem is implicit in the proof of~\cite[Thm.~1]{Krivine:79a}.

\begin{theorem}[Krivine]\label{thm:krivine}
Let $g:\R\to \R$ be defined by $g(x)=\mathrm{sign}(\cos(x))$, and let $f:[0,\pi/2)\to \R$ be given by
$$
 f(t) \eqdef \begin{cases} 1&\text{ if\  $0\leq t\leq \frac{\pi}{4}$,}\\ \frac{6}{\pi} \Big(\frac{\pi}{2}-t\Big)-\frac{1}{2}\Big(\frac{4}{\pi}\Big)^3\Big(\frac{\pi}{2}-t\Big)^3&\text{ if\  $\frac{\pi}{4} \leq t < \frac{\pi}{2}$.}\end{cases}
$$
Extend $f$ to a function defined on all of $\R$ by requiring that it is even and $f(x+\pi)=-f(x)$ for all $x\in \R$. There exists a sequence $\{b_{2\ell+1}\}_{\ell=0}^\infty \in \R^\N$ such that for every $L\in \N$ the numbers $\{b_0,\ldots,b_{2L+1}\}$ can be computed in $\poly(L)$ time, $\sum_{\ell= L+1}^\infty |b_{2\ell+1}| \leq C/L$ for some universal constant $C$, $\sum_{\ell=0}^\infty |b_{2\ell+1}| = 1$, and
$$\forall x,y\in \R,\qquad \cos(x-y) \,=\, \sqrt{2} \sum_{\ell= 0}^\infty b_{2\ell+1} \frac{1}{2\pi} \int_{-\pi}^\pi f\big( (2\ell+1)x-t\big)g\big(t-(2\ell+1)y\big) {\rm d}t\,.$$
\end{theorem}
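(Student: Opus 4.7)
The plan is Fourier-analytic: expand $f$ and $g$ in their Fourier series, use orthogonality to recognize the integral as a convolution evaluated at $(2\ell+1)(x-y)$, and then match the coefficient of each $\cos(m(x-y))$ on both sides of the desired identity. This identifies the $b_{2\ell+1}$ as the unique solution of a divisor-indexed Dirichlet-convolution equation, from which the polynomial-time computability is immediate. The delicate parts are then the tail decay and the exact normalization $\sum_\ell |b_{2\ell+1}|=1$.

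In detail, since $g(t)=\mathrm{sign}(\cos t)$ is even, $2\pi$-periodic, and satisfies $g(t+\pi)=-g(t)$, its Fourier series contains only cosines of odd frequencies, and the standard computation gives $\beta_{2k+1}=4(-1)^k/(\pi(2k+1))$. The function $f$ is designed to have the same symmetries (even, $2\pi$-periodic, $\pi$-antiperiodic), so $f(t)=\sum_{k\geq 0}\alpha_{2k+1}\cos((2k+1)t)$ with $\alpha_{2k+1}$ computable from the piecewise definition. Making the change of variables $s=t-(2\ell+1)y$ in the integral, expanding both factors in Fourier series, and applying orthogonality of cosines on $[-\pi,\pi]$ yields
\[
\frac{1}{2\pi}\int_{-\pi}^\pi f\bigl((2\ell+1)x-t\bigr)\, g\bigl(t-(2\ell+1)y\bigr)\,\mathrm{d}t \,=\, \frac{1}{2}\sum_{k=0}^{\infty}\alpha_{2k+1}\beta_{2k+1}\cos\bigl((2k+1)(2\ell+1)(x-y)\bigr).
\]

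Plugging this into the right-hand side of the target identity and setting $a_n\eqdef \alpha_n\beta_n$, the coefficient of $\cos(m(x-y))$ for each odd $m$ is $(\sqrt{2}/2)\sum_{d\mid m}b_d\,a_{m/d}$, where the sum runs over odd divisors $d$ of $m$ (every divisor of an odd number is odd). Matching with the left-hand side $\cos(x-y)$ amounts to the Dirichlet-convolution equation
\[
\sum_{d\mid m} b_d\, a_{m/d} \,=\, \sqrt{2}\cdot[m=1]
\]
on odd integers, with unique solution $b_1=\sqrt{2}/a_1$ and $b_m=-a_1^{-1}\sum_{1<d\le m,\, d\mid m} b_{m/d}\,a_d$ for $m>1$. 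Since each $b_{2\ell+1}$ is a finite rational expression in the $a_d$ for $d\mid (2\ell+1)$, all of $b_1,b_3,\ldots,b_{2L+1}$ can be computed in time $\poly(L)$ (e.g., in order of increasing $m$).

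The main obstacle is the pair of analytic claims $\sum_{\ell>L}|b_{2\ell+1}|\le C/L$ and $\sum_\ell|b_{2\ell+1}|=1$. For the decay, I will exploit the fact that $f$ is $C^1$: direct inspection shows that the cubic piece on $[\pi/4,\pi/2)$ is calibrated so that $f(\pi/4)=1$, $f(\pi/2)=0$, and $f'(\pi/4)=0$, joining the constant piece on $[0,\pi/4]$ smoothly. This $C^1$ regularity (together with piecewise smoothness) gives $|\alpha_{2k+1}|=O(k^{-3})$ and hence rapid decay of $a_{2k+1}$, which transfers through the Dirichlet-convolution inversion (e.g., by bounding the reciprocal of the series $\sum_m a_m m^{-s}$ on a suitable strip) to yield the $O(1/L)$ tail estimate. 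The exact normalization $\sum|b_{2\ell+1}|=1$ is the crux of the argument and the reason for the specific numerical form of $f$: it requires identifying the signs of the $b_{2\ell+1}$ (anticipated to follow a regular pattern inherited from the alternating signs of $\beta_{2k+1}$ together with those of $\alpha_{2k+1}$) and then evaluating the resulting absolutely convergent series in closed form, the answer being $1$ by design of the coefficients $6/\pi$ and $\tfrac12(4/\pi)^3$ appearing in the cubic. This is where Krivine's specific construction enters and where the bulk of the proof's work resides.
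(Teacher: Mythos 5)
Your Fourier-analytic setup is essentially the right framework and matches what lies behind the paper's explicit formulas. Both $f$ and $g$ are even, $2\pi$-periodic, and $\pi$-antiperiodic, so they have Fourier cosine series supported on odd frequencies; the integral in the identity is indeed the convolution $f * g$ evaluated at $(2\ell+1)(x-y)$, and orthogonality reduces it to $\tfrac12\sum_k \alpha_{2k+1}\beta_{2k+1}\cos\bigl((2k+1)(2\ell+1)(x-y)\bigr)$. Matching coefficients of $\cos(m(x-y))$ produces exactly the Dirichlet-convolution recurrence you write. Your $b_1 = \sqrt{2}/(\alpha_1\beta_1)$ is, despite appearances, the same number as the paper's $b_1 = \sqrt{2}(\pi/4)^3/(3a_1)$: the paper's $a_n$ differs from your $\alpha_n\beta_n$ by a multiplicative constant, and the recurrence $b_m = -a_1^{-1}\sum_{d\mid m,\,d>1}a_d b_{m/d}$ is invariant under such a rescaling, so the resulting $b$'s agree (both $\approx 0.928$ at $\ell=0$). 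The polynomial-time computability and the $C^1$-plus-piecewise-$C^3$ analysis giving $|\alpha_{2k+1}|=O(k^{-3})$, hence $|a_{2k+1}|=O(k^{-4})$, are also correct; together with the Dirichlet-inversion bound this is how the paper arrives at $|b_{2\ell+1}|=O(\ell^{-2})$ and thus $\sum_{\ell>L}|b_{2\ell+1}|=O(1/L)$.

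The genuine gap is exactly where you flagged it: the exact normalization $\sum_{\ell\geq 0}|b_{2\ell+1}|=1$. This is not a corollary of the structure you built; it is an identity that holds by virtue of Krivine's very specific choice of $f$ (the coefficients $6/\pi$ and $\tfrac12(4/\pi)^3$), and establishing it requires (i) determining the sign of every $b_{2\ell+1}$ — not obviously inherited in a ``regular pattern'' from the signs of $\alpha$ and $\beta$, since the Dirichlet inversion mixes these across all divisors of $2\ell+1$ — and (ii) then summing the absolutely convergent series $\sum \pm b_{2\ell+1}$ to exactly $1$ by relating it back to a value of the cosine identity and to the design of $f$. Saying this is ``where the bulk of the proof's work resides'' is accurate, but it leaves the statement unproved. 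Note that the paper itself does not reprove this part either: it attributes the theorem to Krivine~\cite{Krivine:79a} and extracts only the explicit recurrence for $b_{2\ell+1}$ (for computability) together with the $O(\ell^{-4})$/$O(\ell^{-2})$ decay bounds (for the tail estimate). So your proposal, while structurally sound, does not improve on the paper's deferral to Krivine on the one claim that is genuinely nontrivial, and it should either carry out the sign and closed-form evaluation or, like the paper, cite~\cite{Krivine:79a} for it.
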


An explicit formula for the sequence $\{b_{2\ell+1}\}_{\ell= 0}^\infty$ can be extracted as follows from the proof of~\cite[Thm.~1]{Krivine:79a}. For any $\ell\geq 0$, define $a_{2\ell}=0$,
\begin{align*}
a_{2\ell+1} \,&=\, (-1)^\ell \cos\Big(\frac{(2\ell+1)\pi}{4}\Big)\frac{16}{\pi^2(2\ell+1)^4}\Big( \frac{1}{2\ell+1}-(-1)^\ell\frac{\pi}{4}\Big),\\
b_1 \,&=\, \frac{\sqrt{2}(\pi/4)^3}{3a_1}\,,
\end{align*}
and for $\ell>0$,
\[
b_{2\ell+1}\,=\, -\frac{1}{a_1}\sum_{\substack{d|(2\ell+1)\\d\neq 1}}
a_d b_{\frac{2\ell+1}{d}}\,.
\]
Then $|a_{2\ell+1}| = O(1/\ell^4)$, from which one deduces the crude bound $|b_{2\ell+1}| = O(1/\ell^{2})$.

\begin{figure}[ht]
\begin{protocol*}{Two-dimensional rounding procedure}
\begin{step}
\item Let $\eps>0$ and, for $j,k\in \{1,\ldots,n\}$ let $x_j,y_k\in\C$ with $|x_j| = |y_k|=1$, be given as input. Let $f,g,C$ and $\{b_{2\ell+1}\}_{\ell=0}^\infty$ be as in \expref{Theorem}{thm:krivine}.
\item\label{step:angle} For every $j,k$ let $\theta_j\in[0,2\pi)$ (resp.\ $\phi_k\in[0,2\pi)$) be the angle that $x_j$ (resp.\ $y_k$) makes with the $x$-axis.
\item\label{step:kt} Select $t \in [-\pi,\pi]$ uniformly at random. Let $L = \lceil C/\eps\rceil$ and $p = 1-\sum_{\ell=L+1}^\infty |b_{2\ell+1}|$. Select $\ell\in \{-1,0,\ldots,L \}$ with probability $\Pr(-1)=1-p$ and $\Pr(\ell) = |b_{2\ell+1}|$ for $\ell\in \{ 0,\ldots,L\}$.
\item\label{step:lm} For every $j,k$, if $\ell\geq 0$ then set $\lambda_j \eqdef \text{sign}(b_{2\ell+1})f((2\ell+1)\theta_j -t)$ and $\mu_k \eqdef g(t-(2\ell+1)\phi_k )$. Otherwise, set $\lambda_j=0$, $\mu_k=0$.
\item Return $(\lambda_j)_{j\in\{1,\ldots,n\}}$ and $(\mu_k)_{k\in\{1,\ldots,n\}}$.
\end{step}
\end{protocol*}
\caption{The two-dimensional rounding algorithm takes as input real $2$-dimensional unit vectors. It returns real numbers of absolute value at most $1$.}
\label{fig:twodimrounding}
\end{figure}

\expref{Figure}{fig:twodimrounding} describes a two-dimensional rounding scheme derived from \expref{Theorem}{thm:krivine}. The following claim states its correctness in a way that will be useful for us later.

\begin{claim}\label{claim:twodim}
Let $\eps>0$ and for every  $j,k\in \{1,\ldots,n\}$ let $x_j,y_k\in\C$  satisfy $|x_j| = |y_k|=1$. Then the rounding procedure described in \expref{Figure}{fig:twodimrounding} runs in time $\poly(n,1/\eps)$ and returns $\lambda_j,\mu_k\in \R$ with $|\lambda_j|,|\mu_k|\leq 1$ for every $j,k\in \{1,\ldots,n\}$, and
\begin{equation}\label{eq:lambda mu identity}
\E\big[\, \lambda_j\,\mu_k\,\big]\,=\,\frac{1}{\sqrt{2}}\, \Re\left(x_j\overline{y_k}\right) + \eps \langle x_j', y_k'\rangle\,,
\end{equation}
where $x_j',y_k'\in L_2(\R)$ are such that $\|x_j'\|_2,\|y_k'\|_2\leq 1$ and $\Re(\cdot)$ denotes the real part.
\end{claim}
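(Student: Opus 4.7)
The plan is to compute $\E[\lambda_j\mu_k]$ explicitly from the sampling distribution in Figure~\ref{fig:twodimrounding}, recognize the answer as a partial sum of Krivine's identity from Theorem~\ref{thm:krivine}, and package the remaining tail as an inner product in $L_2$.

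The bounds $|\lambda_j|, |\mu_k|\leq 1$ follow at once: $g=\sign(\cos)$ takes values in $\{-1,0,1\}$, and the piecewise definition of $f$ yields $f\in [0,1]$ on $[0,\pi/2)$ (the two pieces match at $\pi/4$, where both equal $1$, and the cubic piece decreases from $1$ to $0$ on $[\pi/4,\pi/2)$), so the parity and antiperiodicity extension forces $|f|\leq 1$ throughout. Writing $x_j=e^{i\theta_j}$ and $y_k=e^{i\phi_k}$ so that $\Re(x_j\overline{y_k})=\cos(\theta_j-\phi_k)$, I would condition on the index $\ell$ (the case $\ell=-1$ contributes zero) and average $t$ uniformly over $[-\pi,\pi]$; using $|b_{2\ell+1}|\cdot\sign(b_{2\ell+1})=b_{2\ell+1}$, this gives
$$
\E[\lambda_j\mu_k] = \sum_{\ell=0}^L b_{2\ell+1}\cdot \frac{1}{2\pi}\int_{-\pi}^\pi f\bigl((2\ell+1)\theta_j - t\bigr)\, g\bigl(t-(2\ell+1)\phi_k\bigr)\, dt.
$$
Subtracting this truncation from the full series in Theorem~\ref{thm:krivine} yields $\E[\lambda_j\mu_k] = \tfrac{1}{\sqrt{2}}\Re(x_j\overline{y_k}) - R_{jk}$, where $R_{jk}$ denotes the tail of the series starting at $\ell=L+1$.

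The main step is to realize $-R_{jk}$ as $\eps\langle x_j',y_k'\rangle$ with $\|x_j'\|_2,\|y_k'\|_2\leq 1$. I would work in $\HH = \bigoplus_{\ell\geq L+1} L_2([-\pi,\pi])$ (which embeds isometrically into $L_2(\R)$) and set
\begin{align*}
x_j'(t,\ell) &\eqdef \sqrt{\tfrac{|b_{2\ell+1}|}{2\pi\eps}}\, f\bigl((2\ell+1)\theta_j - t\bigr),\\
y_k'(t,\ell) &\eqdef -\sign(b_{2\ell+1})\sqrt{\tfrac{|b_{2\ell+1}|}{2\pi\eps}}\, g\bigl(t-(2\ell+1)\phi_k\bigr),
\end{align*}
so that $\langle x_j',y_k'\rangle = -R_{jk}/\eps$ by direct inspection, yielding~\eqref{eq:lambda mu identity}. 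Since $|f|,|g|\leq 1$,
$$
\|x_j'\|_2^2 \leq \frac{1}{\eps}\sum_{\ell=L+1}^\infty |b_{2\ell+1}| \leq \frac{C}{L\eps} \leq 1
$$
by the choice $L=\lceil C/\eps\rceil$, and the same bound holds for $y_k'$. The runtime is $\poly(n,1/\eps)$: Theorem~\ref{thm:krivine} produces $b_1,\ldots,b_{2L+1}$ in $\poly(L)=\poly(1/\eps)$ time and everything else is elementary per pair $(j,k)$. The only subtle point I foresee is that because $b_{2\ell+1}$ may be negative, a symmetric splitting $\sqrt{-b_{2\ell+1}}$ is unavailable, so the sign must be absorbed into a single factor as above; this is also what forces the factor $\sign(b_{2\ell+1})$ in the definition of $\lambda_j$ in Step~\ref{step:lm}.
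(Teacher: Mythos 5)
Your proof is correct and follows essentially the same route as the paper's: condition on $\ell$ and average over $t$ to get the partial sum of Krivine's series, invoke Theorem~\ref{thm:krivine} to identify it with $\tfrac{1}{\sqrt{2}}\cos(\theta_j-\phi_k)$ minus a tail, and package the tail as $\eps\langle x_j',y_k'\rangle$ via $\ell_2$-normalized copies of $f$ and $g$ with the sign of $b_{2\ell+1}$ absorbed into one factor. The only difference is cosmetic: where the paper informally describes the tail as "$1-p$ times a weighted average" and asserts it "can be written as an inner product," you actually exhibit the vectors $x_j',y_k'\in\bigoplus_{\ell>L}L_2([-\pi,\pi])$ explicitly and compute their norms, which makes the sign-handling (the weights $b_{2\ell+1}$ are not all nonnegative) completely transparent; the paper leaves that to the reader.
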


\begin{proof}
Fix $j,k\in\{1,\ldots,n\}$ and let $\theta_j,\phi_k$ and $\lambda_j,\mu_k$ be as defined in Steps~\ref{step:angle} and~\ref{step:lm} of the rounding procedure, respectively. Applying \expref{Theorem}{thm:krivine},
\begin{align*} \E \big[ \, \lambda_j\,\mu_k \,\big] &=  \sum_{\ell=0}^L b_{2\ell+1} \frac{1}{2\pi} \int_{-\pi}^\pi f\big( (2\ell+1)\theta_j-t\big)g\big(t-(2\ell+1)\phi_k\big) {\rm d}t= \frac{1}{\sqrt{2}}\cos(\theta_j - \phi_k) + \eta_{jk}\,,
\end{align*}
where
\begin{align*}
\eta_{jk} \eqdef -\sum_{\ell= L+1}^\infty b_{2\ell+1} \frac{1}{2\pi} \int_{-\pi}^\pi f\big( (2\ell+1)\theta_j-t\big)g\big(t-(2\ell+1)\phi_k\big) {\rm d}t.
\end{align*}
By definition, $\cos(\theta_j-\phi_k) = \Re\left(x_j\overline{y_k}\right)$. Using $1-\eps \leq p \leq 1$, which follows from the bound stated in \expref{Theorem}{thm:krivine},  $\eta_{jk}$  equals $1-p\leq \eps$ times a weighted average of the product of certain values taken by $f$ and $g$, the former only depending on $\theta_j$ and the latter on $\phi_k$. Equivalently, this weighted average can be written as the inner product of two vectors $x_j'$ and $y_k'$ of norm at most $1$. Finally,  all steps of the rounding procedure can be performed in time polynomial in $n$ and $1/\eps$.
\end{proof}

\subsection{Rounding in the Hermitian case}\label{sec:hermitian}

Let $M\in M_n(M_n(\C))$ be Hermitian, and $X,Y\in \U_n(\C^d)$. For every $r\in \{1,\ldots,d\}$ define as usual $X_r,Y_r\in M_n(\C)$ by $(X_r)_{jk}=(X_{jk})_r$ and $(Y_r)_{jk}=(Y_{jk})_r$. Define $X',Y'\in M_n(\C^{2d})$ by
$$
X'_{jk}\eqdef \sum_{p=1}^d\left( \left(\frac{X_p+X_p^*}{2}\right)_{jk}e_{2p-1}+i\left(\frac{X_p-X_p^*}{2}\right)_{jk}e_{2p}\right)\in \C^{2d}
$$
and
$$
Y'_{jk}\eqdef \sum_{p=1}^d\left( \left(\frac{Y_p+Y_p^*}{2}\right)_{jk}e_{2p-1}+i\left(\frac{Y_p-Y_p^*}{2}\right)_{jk}e_{2p}\right)\in \C^{2d}\,,
$$
where $e_1,\ldots,e_{2d}$ is the canonical basis of $\C^{2d}$. 
Then $(X')(X')^* = (X')^*(X') = (XX^*+X^*X)/2 = I$, so $X'\in \U_n(\C^{2d})$ and similarly $Y'\in\U_n(\C^{2d})$. Moreover, since $M$ is Hermitian, $|M(X,Y)|=|M(X',Y')|$. This shows that for the purpose of proving the noncommutative Grothendieck inequality for Hermitian $M$ we may assume without loss of generality that the ``component matrices'' of $X,Y$ are Hermitian themselves.  Nevertheless, even in this case the rounding algorithm described in \expref{Figure}{fig:rounding} returns unitary matrices $A,B$ that are not necessarily Hermitian. The following simple argument shows how Krivine's two-dimensional rounding scheme can be applied on the eigenvalues of $A,B$ to obtain Hermitian matrices of norm $1$, at the loss of a factor $\sqrt{2}$ in the approximation. A similar argument, albeit not explicitly algorithmic, also appears in~\cite[Claim~4.7]{RegevV12a}.

\begin{figure}[t]
\begin{protocol*}{Hermitian rounding procedure}
\begin{step}
\item Let $X,Y\in M_n(\C^d)$ and $\eps>0$ be given as input.
\item\label{step:complex} Let $A, B\in M_n(\C)$ be the unitary matrices returned by the complex rounding procedure described in \expref{Figure}{fig:rounding}. If necessary, multiply $A$ by a complex phase to ensure that $M(A,B)$ is real. Write the spectral decompositions of $A,B$ as
 $$A = \sum_{j=1}^n  e^{i\theta_j} u_j u_j^*\qquad\mathrm{and}\qquad B = \sum_{k=1}^n e^{i\phi_k} v_k v_k^*\,, $$
 where $\theta_j,\phi_k\in \R$ and $u_j,v_k\in\C^n$.
\item\label{step:twodim} Apply the two-dimensional rounding algorithm from \expref{Figure}{fig:twodimrounding} to the vectors $x_j\eqdef e^{i\theta_j}$ and $y_k\eqdef e^{i\phi_k}$. Let $\lambda_j,\mu_k$ be the results.
\item Output $$A'\eqdef \sum_{j=1}^n \lambda_j u_ju_j^*\qquad\mathrm{and}\qquad B'\eqdef \sum_{k=1}^n \mu_k v_kv_k^*\,.$$
\end{step}
\smallskip
\end{protocol*}
\caption{The Hermitian rounding algorithm takes as input a pair of vector-valued matrices $X,Y\in M_n(\C^d)$. It 
outputs %
two Hermitian matrices $A',B'\in H_n$ of norm at most $1$.}
\label{fig:hermrounding}
\end{figure}

\begin{theorem}\label{thm:ncgt-herm}
Let $n$ be an integer, $M \in M_n(M_n(\C))$ Hermitian, $\eps\in (0,1)$ and $X,Y\in \U_n(\C^d)$ such that
$$\big|\,M \big( X,\, Y\big)\,\big| \,\geq\, (1-\eps)\SDP_\C(M)\,.$$
Then the rounding procedure described in \expref{Figure}{fig:hermrounding} runs in time polynomial in $n$ and $1/\eps$ and 
outputs %
a pair of Hermitian matrices $A',B'\in H_n$ with norm at most $1$ such that
$$\E\Big[\,\big|\,M \big(A',\, B'\big) \,\big|\,\Big] \,\geq\, \Big(\frac{1}{2\sqrt{2}}-\Big(1+\frac{1}{\sqrt{2}}\Big)\eps\Big)\SDP_\C(M)\,.$$
\end{theorem}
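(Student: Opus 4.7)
The plan is to compose Theorem~\ref{thm:roundingSDP} with Krivine's two-dimensional rounding from Claim~\ref{claim:twodim}, and to exploit the Hermiticity of $M$ to collapse two \emph{a priori} distinct error contributions into one real part. First, Step~\ref{step:complex} of the procedure runs the complex rounding of Figure~\ref{fig:rounding}; by Theorem~\ref{thm:roundingSDP} it outputs (random) unitaries $A,B\in\U_n$ with $\E[|M(A,B)|]\ge(\tfrac12-\eps)\SDP_\C(M)$. Rotating $A$ by a unimodular scalar costs nothing in modulus and allows me to assume $M(A,B)=|M(A,B)|\in\R_{\ge 0}$. I then condition on the outcome of this step and work with the spectral decompositions $A=\sum_j e^{i\theta_j}u_j u_j^*$ and $B=\sum_k e^{i\phi_k}v_k v_k^*$.

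Step~\ref{step:twodim} invokes Claim~\ref{claim:twodim} on the phases $e^{i\theta_j},e^{i\phi_k}$, producing real $\lambda_j,\mu_k\in[-1,1]$ with
\[
\E[\lambda_p\mu_q]\,=\,\tfrac{1}{\sqrt 2}\Re\bigl(e^{i(\theta_p-\phi_q)}\bigr)+\eps\,\langle x_p',y_q'\rangle,\qquad \|x_p'\|,\|y_q'\|\le 1.
\]
Since $A'=\sum_p\lambda_p u_pu_p^*$ and $B'=\sum_q\mu_q v_qv_q^*$ are Hermitian with operator norm $\le 1$, expanding gives
\[
\E\bigl[M(A',B')\bigr]\,=\,\sum_{i,j,k,l}M_{ijkl}\sum_{p,q}\E[\lambda_p\mu_q]\,(u_p)_i\overline{(u_p)_j}\,\overline{(v_q)_k}(v_q)_l.
\]
Write $\Re(e^{i(\theta_p-\phi_q)})=\tfrac12(e^{i(\theta_p-\phi_q)}+e^{-i(\theta_p-\phi_q)})$. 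The first summand reconstitutes $M(A,B)$ exactly. The second summand equals $\sum_{ijkl}M_{ijkl}\overline{A_{ji}}\,B_{lk}$, and the key manoeuvre is to use the Hermitian symmetry $M_{ijkl}=\overline{M_{jilk}}$ together with a relabelling of indices to identify this with $\overline{M(A,B)}$. Having arranged $M(A,B)\in\R_{\ge 0}$, the main term then simplifies cleanly to $\tfrac{1}{\sqrt 2}\,|M(A,B)|$.

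For the error term, define vector-valued matrices
\[
X''_{ij}\,\eqdef\,\sum_p (u_p)_i\overline{(u_p)_j}\,x_p',\qquad Y''_{kl}\,\eqdef\,\sum_q (v_q)_k\overline{(v_q)_l}\,y_q',
\]
so that the error is $\eps\,M(X'',Y'')$. Using that the families $\{u_p\},\{v_q\}$ are orthonormal and $\|x_p'\|,\|y_q'\|\le 1$, a short calculation (the cross terms collapse via $\sum_k\overline{(u_p)_k}(u_{p'})_k=\delta_{pp'}$) shows $(X''X''^*)_{ij}=\sum_p\|x_p'\|^2(u_p)_i\overline{(u_p)_j}$, which is dominated by $\sum_p u_pu_p^*=I$ as an operator; likewise for $X''^*X''$, $Y''Y''^*$, $Y''^*Y''$. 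Lemma~\ref{lem:equiv-sdp} then yields $|M(X'',Y'')|\le\SDP_\C(M)$. Combining, $|\E[M(A',B')]|\ge \tfrac{1}{\sqrt 2}|M(A,B)|-\eps\SDP_\C(M)$, and taking expectations over the first round together with $\E[|M(A,B)|]\ge(\tfrac12-\eps)\SDP_\C(M)$ gives the announced bound $\bigl(\tfrac{1}{2\sqrt 2}-(1+\tfrac{1}{\sqrt 2})\eps\bigr)\SDP_\C(M)$.

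The one genuinely delicate point is the Hermiticity manipulation: the ``wrong'' cross term $\sum M_{ijkl}\overline{A_{ji}}B_{lk}$ looks unrelated to $M(A,B)$ and would be an obstruction to the analysis in the absence of symmetry; it is only the identification $M_{ijkl}=\overline{M_{jilk}}$ (i.e.\ the assumption that $M$ is Hermitian as a $4$-tensor) that converts it into $\overline{M(A,B)}$ and allows the phase normalisation to pay off. The remainder of the argument, including the operator-norm bounds on $X''$ and $Y''$ needed to apply Lemma~\ref{lem:equiv-sdp}, is routine bookkeeping.
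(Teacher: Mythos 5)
Your proposal is correct and follows essentially the same path as the paper: run the complex rounding from Figure~\ref{fig:rounding}, then apply Krivine's two-dimensional rounding to the eigenphases, use the Hermiticity of $M$ to make the real-part term collapse to $\tfrac{1}{\sqrt 2}|M(A,B)|$, and bound the $\eps$-error term via the vector-valued matrices $W,Z$ (your $X'',Y''$) together with Lemma~\ref{lem:equiv-sdp}. The only cosmetic difference is that you split $\Re(e^{i(\theta_p-\phi_q)})$ into two exponentials and identify the second with $\overline{M(A,B)}$, whereas the paper observes directly that each $M(u_p u_p^*,v_q v_q^*)$ is real; these are equivalent manipulations.
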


\begin{proof}
Let $A,B\in M_n(\C)$ be as defined as in Step~\ref{step:complex} of \expref{Figure}{fig:hermrounding}, and assume as in \expref{Figure}{fig:hermrounding} that $M(A,B)$ is real. By \expref{Theorem}{thm:roundingSDP} we have $\E[|M ( A, B)|] \geq \left(1/2-\eps\right)\SDP_\C(M)$. Hence to conclude it will suffice to show that for any fixed pair of matrices $A,B\in M_n(\C)$,
\begin{equation}\label{eq:desired hermitian}
\E\Big[\,\big| M(A',B')\big|\,\Big] \,\geq\, \frac{1}{\sqrt{2}} \big| M(A,B) \big| - \eps\,\SDP_\C(M)\,,
\end{equation}
where $A',B'$ are as returned by the rounding procedure and the expectation is over the random choices made in the two-dimensional rounding step, \ie, \expref{step}{step:twodim} of \expref{Figure}{fig:hermrounding}.  Applying \expref{Claim}{claim:twodim},
\begin{eqnarray}\label{eq:use krivine}
\nonumber\E \big[\,\big| M(A',B')\big|\,] &\geq& \Big| \sum_{j,k=1}^n M(u_ju_j^*,v_kv_k^*) \E \big[ \lambda_j \mu_k \,\big]\Big|\\
\nonumber&\stackrel{\eqref{eq:lambda mu identity}}{\geq}& \Big|\frac{1}{\sqrt{2}}\sum_{j,k=1}^n M(u_ju_j^*,v_kv_k^*) \Re\big( e^{i(\theta_j-\phi_k)} \big) \,\big]\Big| - \eps \,\Big| \sum_{j,k=1}^n \langle x_j', y_k'\rangle\,  M(u_ju_j^*,v_kv_k^*)\Big|\\
&=& \frac{1}{\sqrt{2}}\big|M(A,B)\big| - \eps \big|  M(W,Z)\big|\,,
\end{eqnarray}
where in the last inequality, for the first term we used that $M(u_ju_j^*,v_kv_k^*)$ is real since $M$ is Hermitian, and for the second term we defined the vector-valued matrices
\[
W\eqdef\sum_{j=1}^n \,x_j' \,  u_ju_j^* \in M_n(\C^{2n}) \qquad\text{and}\qquad Z\eqdef \sum_{k=1}^n \,y_k' \,  v_k v_k^* \in M_n(\C^{2n})\,,
\]
where multiplication of the vector $x'_j$ (resp. $y_k'$) with the scalar-valued matrix $u_ju_j^*$ (resp. $v_k v_k^*$) is taken entrywise.
Here we assumed that the vectors $x'_j, y'_k$ from \expref{Claim}{claim:twodim} lie in $\C^{2n}$, which is without loss of generality since there are $2n$ of them.
One checks that
\[
WW^*=\sum_{j=1}^n \|x_j'\|_2^2 u_ju_j^*\,.
\]
Since $\|x_j'\|_2\le 1$ for all $j\in \{1,\ldots,n\}$, it follows that
$\|WW^*\|\le 1$. Similarly
\[
\max\{\|W^*W\|,\|ZZ^*\|,\|Z^*Z\|\}\le 1\,.
\]
Applying \expref{Lemma}{lem:equiv-sdp} we obtain $R,S\in \U_n(\C^{2n(n+1)})$ such that $M(W,Z)=M(R,S)$, hence $| M(W,Z)|\leq \SDP_\C(M)$. Equation~\eqref{eq:use krivine} then implies the desired estimate~\eqref{eq:desired hermitian}.
\end{proof}

\subsection{Proof of \expref{Theorem}{thm:hermitian real equivalence}}\label{sec:real-herm-equiv}

In this section we prove \expref{Theorem}{thm:hermitian real equivalence}. We first record for future use the following simple lemma, which is an algorithmic version of~\eqref{eq:pass to convex hull R}.

\begin{lemma}\label{eq:round to orthogonal}
There exists a 
polynomial-time 
algorithm that takes as input a $4$-tensor $M\in M_n(M_n(\R))$ and two matrices $A,B\in M_n(\R)$ with $\max\{\|A\|,\|B\|\}\le 1$ and 
outputs %
two orthogonal matrices $U,V\in \O_n$ such that
$$
\sum_{i,j,k,l=1}^n M_{ijkl} A_{ij}B_{kl}\le \sum_{i,j,k,l=1}^n M_{ijkl} U_{ij}V_{kl}\,.
$$
\end{lemma}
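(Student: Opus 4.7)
The plan is to round $A$ and $B$ to orthogonal matrices \emph{one at a time}, exploiting the fact that the objective $f_M(X,Y)\eqdef \sum_{i,j,k,l} M_{ijkl} X_{ij} Y_{kl}$ is bilinear. Since the unit ball of the operator norm in $M_n(\R)$ is precisely the convex hull of $\O_n$, and since the maximum of a linear function over a convex polytope is attained at an extreme point, it is natural to replace the variables $A,B$ by orthogonal matrices at which the relevant linear forms are maximized.

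Concretely, I would proceed in two steps. In the first step, define the matrix $C\in M_n(\R)$ by $C_{ij}\eqdef \sum_{k,l=1}^n M_{ijkl} B_{kl}$, so that $f_M(X,B)=\langle C,X\rangle_F$, the Frobenius inner product. Compute a singular value decomposition $C = P\Sigma Q^T$ with $P,Q\in \O_n$ and $\Sigma=\mathrm{diag}(\sigma_1,\ldots,\sigma_n)$, $\sigma_i\ge 0$. Set $U\eqdef PQ^T\in \O_n$. Then by von Neumann's trace inequality
\[
\langle C,A\rangle_F \le \|C\|_* \cdot \|A\| \le \sum_{i=1}^n \sigma_i = \mathrm{tr}(\Sigma) = \langle C,U\rangle_F,
\]
where the first inequality uses $\|A\|\le 1$ and the duality between the operator and nuclear norms. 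This gives $f_M(A,B)\le f_M(U,B)$. In the second step, repeat the procedure with $U$ fixed: let $D_{kl}\eqdef \sum_{i,j=1}^n M_{ijkl} U_{ij}$, compute an SVD $D = P'\Sigma' Q'^T$, and set $V\eqdef P'Q'^T\in \O_n$. The same argument yields $f_M(U,B)\le f_M(U,V)$, and chaining the two inequalities gives the desired estimate $f_M(A,B)\le f_M(U,V)$.

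Every step is polynomial time: forming $C$ and $D$ costs $O(n^4)$, and SVD of an $n\times n$ real matrix is computable in polynomial time (to sufficient precision). The only mildly subtle point—and the one I would spell out carefully—is the inequality $\langle C,A\rangle_F \le \mathrm{tr}(\Sigma)$ for $\|A\|\le 1$, which is the standard Frobenius-duality bound $|\langle C,A\rangle_F|\le \|C\|_*\|A\|$ together with the identification $\|C\|_*=\mathrm{tr}(\Sigma)$ and the fact that $U=PQ^T$ attains equality in $\langle C,U\rangle_F=\|C\|_*$. There is no real obstacle here, as these facts are classical; the content of the lemma is just the explicit, constructive rounding via two SVDs.
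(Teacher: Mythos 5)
Your proof is correct, and it takes a genuinely different route from the paper's. The paper writes the SVDs of $A$ and $B$ themselves, $A = \sum_i \sigma_i e_i f_i^*$ and $B=\sum_i \tau_i g_i h_i^*$, substitutes into $M(A,B)$ to get a bilinear form $\sum_{i,j}\sigma_i\tau_j M(e_if_i^*,g_jh_j^*)$ in the singular-value vectors $(\sigma,\tau)\in[0,1]^{2n}\subset[-1,1]^{2n}$, and then pushes the $2n$ coordinates one at a time to $\pm 1$ (a standard multilinear-extreme-point argument), producing $U=\sum \e_i e_if_i^*$ and $V=\sum\d_i g_ih_i^*$ that share singular vectors with $A$ and $B$. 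You instead perform a two-step coordinate ascent over the matrix variables directly: freeze $B$, observe that $X\mapsto f_M(X,B)=\langle C,X\rangle_F$ is linear, and use nuclear/operator norm duality (von Neumann's trace inequality) to replace $A$ by the orthogonal matrix $U=PQ^T$ that maximizes this linear form over the operator-norm unit ball; then repeat with $U$ frozen to round $B$ to $V$. Both proofs are clean and polynomial time. Yours has the mild advantage of requiring only two SVDs (of the $n\times n$ derived matrices $C$ and $D$) rather than SVDs of $A$ and $B$ followed by a $2n$-step sign-flipping loop, and it makes the nuclear-norm duality explicit; the paper's version is slightly more elementary in that it avoids invoking von Neumann's inequality and keeps the output $U,V$ in the same "singular frame" as the input $A,B$, a fact that happens to be irrelevant here but can be convenient elsewhere.
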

\begin{proof}
Write the singular value decompositions of $A,B$ as $A=\sum_{i=1}^n \sigma_i e_i f_i^*$
and $B=\sum_{i=1}^n \tau_i g_i h_i^*$, where each of the sequences $(e_i)_{i=1}^n$, $(f_i)_{i=1}^n$, $(g_i)_{i=1}^n,(h_i)_{i=1}^n\subseteq \R^n$ is orthonormal, and
$\sigma,\tau\in [0,1]^n$ (since $\max\{\|A\|,\|B\|\}\le 1$). Now, $M(A,B)$ is given by
\begin{equation}\label{eq:plug singular}
\sum_{i,j=1}^n \sigma_i \tau_j M(e_i f_i^*, g_j h_j^*)\,.
\end{equation}
Fixing all $\sigma_i$, $\tau_j$ but, say, $\sigma_1$, \eqref{eq:plug singular} is a linear function of $\sigma_1$, and thus we can shift $\sigma_1$ to
either $-1$ or $1$, without decreasing~\eqref{eq:plug singular}. Proceeding in this way with the other
variables $\sigma_2,\ldots,\sigma_n,\tau_1,\ldots,\tau_n$, each one in its turn, we obtain $\e,\d\in \{-1,1\}^n$ such that if we define $U,V\in \O_n$ by $U= \sum_{i=1}^n \e_i (e_i f_i^*)$ and $V=\sum_{i=1}^n \d_i (g_i h_i^*)$ then $M(A,B)\le M(U,V)$, as required.
\end{proof}

\begin{proof}[Proof of \expref{Theorem}{thm:hermitian real equivalence}]
We first prove the implication~\ref{part:hermite-1}~$\implies$~\ref{part:hermite-2}.

For any $A\in M_{2n}(\C)$ define $A_1,A_2,A_3,A_4 \in M_n(\C)$ through
the block decomposition 
\[
A = \left(\begin{array}{cc} A_{1} & A_{2}\\A_{3} &
    A_{4}\end{array}\right)\,.
\]
Given $M\in M_n(M_n(\R))$ let $M'\in M_{2n}(M_{2n}(\R))$ be such that $M'(A,B)=M(\Re A_2,\Re B_2)$ for every $A,B\in H_{2n}$. Formally, for every $i,j,k,l\in \{1,\ldots,2n\}$ we have
$$
M'_{ijkl}=\frac14\left\{\begin{array}{ll}
M_{i,j-n,k,l-n} & \mathrm{if}\ (i,k)\in \{1,\ldots,n\}^2\ \mathrm{and}\ (j,l)\in \{n+1,\ldots,2n\}^2, \\
 M_{j,i-n,l,k-n} &\mathrm{if}\ (j,l)\in \{1,\ldots,n\}^2   \ \mathrm{and}\ (i,k)\in \{n+1,\ldots,2n\}^2, \\
 M_{i,j-n,l,k-n} &\mathrm{if}\ (i,l)\in \{1,\ldots,n\}^2   \ \mathrm{and}\ (j,k)\in \{n+1,\ldots,2n\}^2, \\
 M_{j,i-n,k,l-n} &\mathrm{if}\ (j,k)\in \{1,\ldots,n\}^2   \ \mathrm{and}\ (i,l)\in \{n+1,\ldots,2n\}^2,\\
 0 &\mathrm{otherwise.}
\end{array}\right.$$
Then $M'$ is Hermitian. Apply the algorithm $\alg^*$ (whose existence is the premise of~\ref{part:hermite-1} of \expref{Theorem}{thm:hermitian real equivalence}) to $M'$ to get $A,B\in H_{2n}$ such that $\opt_\C^*(M)\le K|M(A,B)|$.  Since $A,B\in H_{2n}$ we have $A_3=A_2^*$ and $B_3=B_2^*$. Because $\max\{\|A\|,\|B\|\}\le 1$ also $\max\{\|\Re A\|,\|\Re B\|\}\le 1$. By \expref{Lemma}{eq:round to orthogonal} we can therefore efficiently find $U,V\in \O_n$ such that
\begin{align*}
KM(U,V) &\ge K|M(\Re A,\Re B)|=K|M'(A,B)|\ge \opt_\C^*(M')\\[1ex]
&=\sup_{\substack{C,D\in H_{2n}\\ \|C\|,\|D\|\le 1}}\left|M'(C,D)\right|
\ge \sup_{S,T\in \O_n} \left|M'\left(\left(\begin{array}{cc} 0 & S\\
        S^* & 0\end{array}\right),\left(\begin{array}{cc} 0 & T\\T^* &
        0\end{array}\right)\right)\right|\\
&=\sup_{S,T\in O_n} M(S,T)=\opt_\R(M)\,.
\end{align*}

To prove the reverse implication \ref{part:hermite-2}~$\implies$~\ref{part:hermite-1}, define $\psi:M_{2n}(\R)\to H_n$ by
\begin{equation}\label{eq:def psi}
\psi\left(\left(\begin{array}{cc} A_{1} & A_{2}\\A_{3} & A_{4}\end{array}\right)\right)\eqdef \frac14\left(A_1+A_1^*+A_4+A_4^*\right)+\frac{i}{4}\left(A_2-A_2^*+A_3^*-A_3\right)\in H_n\,.
\end{equation}
Suppose that $M\in M_n(M_n(\C))$ is Hermitian and define $M''\in M_{2n}(M_{2n}(\R))$ by requiring that $
M''(A,B)= M\left(\psi(A),\psi(B)\right)
$ for every $A,B\in M_{2n}(\R)$.  Note that $M(\psi(A),\psi(B))\in \R$
since $M, \psi(A),\psi(B)$ are all Hermitian. Apply the algorithm
$\alg$ of part $2)$ to $M''$,  obtaining two orthogonal matrices
$U,V\in \O_{2n}$ satisfying 
\[
\opt_\R(M'')\le KM''(U,V)=KM(\psi(U),\psi(V))\,.
\]
By \expref{Lemma}{lem:psi bound} below we have $\max\{\|\psi(U)\|,\|\psi(V)\|\}\le 1$. Moreover, using \expref{Lemma}{lem:hermitian norm identity} below we have
\begin{equation}
  \begin{split}
    \opt_\R(M'') &=\sup_{\substack{A,B\in M_{2n}(\R)\\\|A\|,\|B\|\le
        1}}\big|M(\psi(A),\psi(B))\big|\\
    &\ge \sup_{\substack{X,Y\in H_n\\\|X\|,\|Y\|\le
        1}}\left|M\left(\psi\left(\left(\begin{array}{cc} \Re X & \Im
              X\\-\Im X & \Re
              X\end{array}\right)\right),\psi\left(\left(\begin{array}{cc}
              \Re Y & \Im Y\\-\Im Y& \Im
              Y\end{array}\right)\right)\right)\right|\,,\label{eq:M'' opt}
  \end{split}
\end{equation}
where $\Re(\cdot)$ and $\Im(\cdot)$ denote the real and imaginary parts,
respectively. 
Observe that for every $X\in H_n$ we have
\[
\psi\left(\left(\begin{array}{cc} \Re X & \Im X\\-\Im X & \Re
      X\end{array}\right)\right)=X\,.
\]
Consequently the rightmost term in~\eqref{eq:M'' opt} equals $\opt_\C^*(M)$. Therefore  $\opt_\C^*(M)\le K|M(\psi(U),\psi(V))|$, so that the algorithm that
outputs %
$\psi(U),\psi(V)$ has the desired approximation factor.
\end{proof}

\begin{lemma}\label{lem:psi bound}
Let $\psi: M_{2n}(\R)\to H_n$ be given as in~\eqref{eq:def psi}. Then $\|\psi(Y)\|\le \|Y\|$ for all $Y\in M_{2n}(\R)$.
\end{lemma}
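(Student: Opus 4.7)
The plan is to express $\psi(Y)$ as the Hermitian part of a suitable complex matrix $C\in M_n(\C)$ and then to bound $\|C\|\le \|Y\|$ by a one-step symmetrization trick using an orthogonal conjugation. Writing $Y=\left(\begin{smallmatrix} A_{1} & A_{2}\\A_{3} & A_{4}\end{smallmatrix}\right)$ with $A_1,A_2,A_3,A_4\in M_n(\R)$, I will set
$$C \eqdef \frac{1}{2}\bigl((A_1+A_4)+i(A_2-A_3)\bigr)\in M_n(\C).$$
A short direct computation from the definition~\eqref{eq:def psi}, using $A_k^*=A_k^T$ for real matrices, will show that $\psi(Y)=\tfrac{1}{2}(C+C^*)$. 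Since the Hermitian part of any matrix has operator norm at most that of the matrix itself, this immediately yields $\|\psi(Y)\|\le\|C\|$.

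To bound $\|C\|$ in terms of $\|Y\|$, I will invoke the standard isometric embedding $\iota:M_n(\C)\to M_{2n}(\R)$ defined by $\iota(P+iQ)=\left(\begin{smallmatrix} P & -Q\\Q & P\end{smallmatrix}\right)$ for $P,Q\in M_n(\R)$. Under the identification $\C^n\cong\R^{2n}$ sending $p+iq\mapsto(p,q)$, multiplication by $P+iQ$ on $\C^n$ corresponds exactly to multiplication by $\iota(P+iQ)$ on $\R^{2n}$, and this identification preserves Euclidean norms; it follows that $\iota$ is an isometry for the operator norm, so in particular $\|C\|=\|\iota(C)\|=\|\iota(\overline{C})\|$.

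The core step will be a symmetrization using the orthogonal matrix $J_0\eqdef\left(\begin{smallmatrix} 0 & -I_n\\I_n & 0\end{smallmatrix}\right)\in\O_{2n}$. Since $J_0$ is orthogonal, $\|J_0YJ_0^{-1}\|=\|Y\|$, and hence by the triangle inequality
$$\Big\|\frac{Y+J_0YJ_0^{-1}}{2}\Big\|\le \|Y\|.$$
A direct block computation (using $J_0^{-1}=J_0^T=-J_0$) will identify
$$\frac{Y+J_0YJ_0^{-1}}{2}=\left(\begin{smallmatrix} (A_1+A_4)/2 & (A_2-A_3)/2\\(A_3-A_2)/2 & (A_1+A_4)/2\end{smallmatrix}\right)=\iota(\overline{C}).$$
Combining the isometry of $\iota$, the identity $\|\overline{C}\|=\|C\|$, and the triangle inequality bound above, I obtain $\|C\|\le\|Y\|$, and chaining with the first step yields the desired $\|\psi(Y)\|\le\|C\|\le\|Y\|$.

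The whole argument is really just careful bookkeeping; no single step is difficult. The one place where a mistake would most naturally occur is in matching the signs of the off-diagonal blocks: one needs to see that $(Y+J_0YJ_0^{-1})/2$ equals $\iota(\overline{C})$ rather than $\iota(C)$, and that the particular complex matrix whose Hermitian part recovers $\psi(Y)$ is $C=\tfrac{1}{2}\bigl((A_1+A_4)+i(A_2-A_3)\bigr)$ (with the signs in this precise configuration), so that the two identifications meet on a common object whose norm can be controlled.
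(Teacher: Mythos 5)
Your proof is correct, and it takes a genuinely different route from the paper. The paper's argument first passes to the symmetric part $Z=(Y+Y^*)/2$ of $Y$ (which only decreases the operator norm), observes that $\psi(Y)=\psi(Z)$ because $\psi$ only reads off the symmetrized block combinations, and then bounds $\|\psi(Z)\|$ directly by computing $\Re\langle\psi(Z)z,z\rangle$ for $z=x+iy$ and recognizing it as an average of the two real quadratic forms $\langle Z(x,-y),(x,-y)\rangle$ and $\langle Z(y,x),(y,x)\rangle$; since $\psi(Y)$ is Hermitian this gives $\|\psi(Y)\|\le\|Z\|\le\|Y\|$. Your argument instead packages $\psi(Y)$ as the Hermitian part of the complex matrix $C=\tfrac12\bigl((A_1+A_4)+i(A_2-A_3)\bigr)$, uses the elementary bound for Hermitian parts to get $\|\psi(Y)\|\le\|C\|$, and then controls $\|C\|$ via the isometric realification $\iota:M_n(\C)\to M_{2n}(\R)$ together with the orthogonal symmetrization $Y\mapsto\tfrac12(Y+J_0YJ_0^{-1})$, which you correctly identify with $\iota(\overline{C})$; the signs all check out, and $\|\overline C\|=\|C\|$ closes the loop. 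In essence, both proofs average $Y$ against a norm-preserving conjugation and pass to a Hermitian object, but the paper's symmetrization is the transpose $Y\mapsto Y^*$ with the bound extracted by an explicit quadratic-form computation, whereas yours uses the symplectic conjugation by $J_0$ and the structural isometry $\iota$, which avoids unwinding any bilinear forms. Your route is arguably more conceptual and modular; the paper's is more self-contained and hands-on. Both are clean proofs of comparable length.
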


\begin{proof}
For $Y\in M_{2n}(\R)$ write $Z=(Y+Y^*)/2$. Setting
\[
Z = \left(\begin{array}{cc} Z_{1} & Z_{2}\\Z_{3} &
    Z_{4}\end{array}\right)\,,
\]
where $Z_1,Z_2,Z_3,Z_4\in M_n(\R)$, we then have $
\psi(Y)=(Z_{1}+Z_{4})/2+i(Z_{2}-Z_{3})/2$. Take $z\in \C^n$ and write $z=x+iy$, where $x,y\in \R^n$. Then
\begin{align*}
\Re\big( \langle\psi(Y) z,z\rangle \big) &= \frac{\langle Z_{1} x,x\rangle  + \langle Z_{4} y,y\rangle - \langle Z_{2} y,x\rangle - \langle Z_{3} x,y\rangle}{2} + \frac{\langle Z_{1} y,y\rangle  + \langle Z_{4} x,x\rangle + \langle Z_{2} x,y\rangle + \langle Z_{3} y,x\rangle}{2}\\[1ex]
&= \frac{1}{2} \left\langle Z \begin{pmatrix} x \\ -y\end{pmatrix},\begin{pmatrix} x \\ -y\end{pmatrix}\right\rangle + \frac{1}{2} \left\langle Z \begin{pmatrix} y \\ x\end{pmatrix},\begin{pmatrix} y \\ x\end{pmatrix}\right\rangle\,.
\end{align*}
Since $\psi(Y)$ is Hermitian, it follows that $\|\psi(Y)\|\le \|Z\|\le \|Y\|$.
\end{proof}

\begin{lemma}\label{lem:hermitian norm identity}
For every $X\in H_n$ we have $\displaystyle \left\|\left(\begin{array}{cc} \Re X & \Im X\\-\Im X & \Re X\end{array}\right)\right\|= \|X\|$.
\end{lemma}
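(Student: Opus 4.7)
The plan is to exhibit an explicit unitary conjugation that block-diagonalizes the real matrix $\tilde X := \left(\begin{smallmatrix} \Re X & \Im X\\-\Im X & \Re X\end{smallmatrix}\right)$ into a direct sum of $X$ and its conjugate, from which the norm equality follows immediately since unitary conjugation preserves operator norm and $\|\overline{X}\|=\|X\|$.

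First I would record the structural consequences of $X\in H_n$: writing $A = \Re X$ and $B = \Im X$, the identity $X = X^*$ forces $A^T = A$ and $B^T = -B$. In particular $\tilde X$ is a real symmetric matrix, so its operator norm on $\C^{2n}$ coincides with its operator norm on $\R^{2n}$, and there is no subtlety about which norm we work with.

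Next I would introduce the candidate $U = \tfrac{1}{\sqrt 2}\left(\begin{smallmatrix} I & I \\ iI & -iI\end{smallmatrix}\right) \in \unitary{\C^{2n}}$, check that $U^*U = I$, and verify by a direct $2\times 2$ block multiplication that
\[
U^*\, \tilde X\, U \;=\; \begin{pmatrix} A+iB & 0 \\ 0 & A - iB\end{pmatrix} \;=\; \begin{pmatrix} X & 0 \\ 0 & \overline{X}\end{pmatrix}.
\]
The off-diagonal blocks vanish because, using $A^T = A$ and $B^T = -B$, the cross terms from the $(1,2)$ and $(2,1)$ positions of $\tilde X$ cancel exactly against the $\pm i$ factors coming from $U$; this is the only calculation, and it is short.

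Since $U$ is unitary, conjugation by $U$ preserves the operator norm, so
\[
\|\tilde X\| \;=\; \left\|\begin{pmatrix} X & 0 \\ 0 & \overline X\end{pmatrix}\right\| \;=\; \max\bigl\{\|X\|,\|\overline X\|\bigr\} \;=\; \|X\|,
\]
which is the claim. I do not anticipate a genuine obstacle; the only point that requires care is choosing the correct phases in the four blocks of $U$ so that both off-diagonal blocks vanish simultaneously, and verifying that the signs of $B$ in $\tilde X$ are consistent with $B^T=-B$ so that the transformation indeed produces $X$ and $\overline X$ on the diagonal rather than some other combination.
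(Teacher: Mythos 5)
Your proof is correct, but it takes a genuinely different route from the paper. The paper exploits Hermiticity/symmetry directly: it computes the quadratic form
$\bigl\langle Z\left(\begin{smallmatrix}a\\b\end{smallmatrix}\right),\left(\begin{smallmatrix}a\\b\end{smallmatrix}\right)\bigr\rangle = \Re\bigl(\langle X(a-ib),\,a-ib\rangle\bigr)$
and then invokes the fact that for a symmetric real matrix $Z$ and a Hermitian complex matrix $X$, the operator norm equals the supremum of the absolute value of the quadratic form over unit vectors; the two suprema coincide because $(a,b)\mapsto a-ib$ is a real-linear isometry between unit spheres. You instead block-diagonalize $\tilde X$ by conjugating with the explicit unitary $U$, getting $X\oplus\overline X$. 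Both are short, and yours has the advantage of not actually needing $X$ to be Hermitian: the identity $U^*\tilde X U = X\oplus\overline X$ and the norm equality $\|\overline X\|=\|X\|$ hold for \emph{any} $X\in M_n(\C)$. One small misattribution in your write-up: the vanishing of the off-diagonal blocks of $U^*\tilde X U$ is \emph{not} a consequence of $A^T=A$, $B^T=-B$; it follows purely from the block structure $\left(\begin{smallmatrix}A & B\\ -B & A\end{smallmatrix}\right)$ (the matrix of a $\C$-linear map written in real coordinates), and would hold for arbitrary $A,B$. Hermiticity is also not needed for the preliminary observation that the real and complex operator norms of $\tilde X$ agree --- that is true for every real matrix. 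So your argument is both correct and strictly more general, at the cost of needing to guess and verify the diagonalizing unitary, whereas the paper's computation is entirely elementary and targeted at the Hermitian case at hand.
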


\begin{proof}
Write
\[
Z=\left(\begin{array}{cc} \Re X & \Im X\\-\Im X & \Re
    X\end{array}\right)\in M_{2n}(\R)\,.
\]
Since $X$ is Hermitian, $Z$ is symmetric. For every $a,b\in \R^n$,
\begin{eqnarray*}
\left\langle Z \begin{pmatrix} a \\ b\end{pmatrix},\begin{pmatrix} a \\ b\end{pmatrix}\right\rangle &=&\left\langle (\Re X)a,a\right\rangle + \left\langle (\Im X)b,a\right\rangle - \left\langle (\Im X)a,b
\right\rangle + \left\langle (\Re X)b,b
\right\rangle\\&=&
\Re\left(\langle X(a-ib),a-ib\rangle \right).
\end{eqnarray*}
Since $Z$ is symmetric and $X$ is Hermitian, it follows that $\|Z\|=\|X\|$.
\end{proof}

\section{Direct rounding in the real case}\label{sec:real}

We now describe and analyze a different rounding procedure for the real case of the noncommutative Grothendieck inequality. The argument is based on the proof of the noncommutative Grothendieck inequality due to Kaijser~\cite{Kaijser83}, which itself has a structure similar to the proof of the classical Grothendieck inequality due to Krivine~\cite{Kri73} (see also~\cite{Jam85}, the ``Notes and Remarks'' section of Chapter~1 of~\cite{DJT95}, and the survey article~\cite{JL01}), and uses ideas from~\cite{Pisier78NCGT} to extend that proof to the non-commutative setting.

\begin{figure}[t]
\begin{protocol*}{Real rounding procedure}
\begin{step}
\item Let $X,Y\in M_n(\R^d)$ be given as input, and let $\e\in \{-1,1\}^d$ be chosen uniformly at random.
\item \label{step:real-svd} Set $\Ye \eqdef \langle \e,Y\rangle$. Write the singular value decomposition of $\Ye$ as $\Ye = \sum_{i=1}^n \,t_i(\e)\, u_i(\e) v_i(\e)^*$, where $t_i(\e)\in [0,\infty)$ and $u_i(\e),v_i(\e)\in \R^n$. Define $$(\Ye)_\tau \eqdef \sum_{i=1}^n \min\{t_i(\e),\tau\} u_i(\e) v_i(\e)^*\,,$$
     where $\tau\eqdef \sqrt{3}/2$.
\item Let $X(\e)\in M_n(\R)$ of norm at most $1$ be such that $$M(X(\e),(\Ye)_\tau) = \max_{\substack{X\in M_n(\R)\\\|X\|\leq 1}} |M(X,(\Ye)_\tau)|\,.$$
\item Output the pair $A=X(\e)$ and $B= \frac{1}{\tau}(\Ye)_\tau$.
\end{step}
\end{protocol*}
\caption{The real rounding algorithm takes as input $X,Y\in M_n(\R^d)$. It 
outputs %
two real matrices $A,B\in M_n(\R)$ of norm at most $1$.}
\label{fig:realrounding}
\end{figure}

\begin{theorem}\label{thm:ncgt-real}
Given $n\in \N$, $M \in M_n(M_n(\R))$ and $\eta\in (0,1/2)$, suppose that $X,Y\in \O_n(\R^d)$ are such that
\begin{equation}\label{eq:eta asusmption}
\big|\,M \big( X,\, Y\big)\,\big| \,\geq\, (1-\eta)\SDP_\R(M)\,,
\end{equation}
where $\SDP_\R(M)$ is defined in~\eqref{eq:def sdpR}.
Then the rounding procedure described in \expref{Figure}{fig:realrounding} runs in time polynomial in $n$ and 
outputs %
a pair of real matrices $A,B\in M_n(\R)$ with norm at most $1$ such that
\begin{equation}\label{eq:real desired}
\E\Big[\,\big|\,M \big(A,\, B\big) \,\big|\,\Big] \,\geq\, \frac{(1-2\eta)^2}{3\sqrt{3}}\SDP_\R(M)\,.
\end{equation}
\end{theorem}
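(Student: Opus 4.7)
The strategy follows Kaijser's~\cite{Kaijser83} proof of the noncommutative Grothendieck inequality, adapted to yield an algorithmic rounding. The first step is to replace the vector-valued $X,Y$ by scalar random matrices via a sign projection. Since $\E[\e_r\e_s]=\delta_{rs}$ for independent uniform $\e_r,\e_s\in\{-1,1\}$, a direct expansion gives
\[
\E_\e\big[M(X_\e,Y_\e)\big]\,=\,\sum_{r=1}^d \sum_{i,j,k,l} M_{ijkl}(X_r)_{ij}(Y_r)_{kl}\,=\,M(X,Y)\,\ge\,(1-\eta)\,\SDP_\R(M),
\]
where $X_r,Y_r\in M_n(\R)$ are the $r$-th vector coordinates. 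The same calculation shows that $\E_\e[X_\e X_\e^*]=XX^*=I$, together with the analogous identities for $X_\e^*X_\e$, $Y_\e Y_\e^*$, and $Y_\e^*Y_\e$. These matrix identities are essential for controlling expectations of bilinear forms via Lemma~\ref{lem:equiv-sdp}.

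Next, I decompose $Y_\e=(Y_\e)_\tau+(Y_\e-(Y_\e)_\tau)$ according to singular values at most or above $\tau$, and similarly $X_\e=(X_\e)_1+(X_\e-(X_\e)_1)$. Expanding the bilinear form,
\[
M(X,Y)\,=\,\E_\e M\big((X_\e)_1,(Y_\e)_\tau\big)\,+\,\E_\e M\big(X_\e,Y_\e-(Y_\e)_\tau\big)\,+\,\E_\e M\big(X_\e-(X_\e)_1,(Y_\e)_\tau\big).
\]
For the principal term, since $\|(X_\e)_1\|\le 1$ and $X(\e)$ is by construction the norm-at-most-$1$ maximizer of $|M(\cdot,(Y_\e)_\tau)|$, one has $|M((X_\e)_1,(Y_\e)_\tau)|\le|M(X(\e),(Y_\e)_\tau)|=\tau\,|M(A,B)|$, hence $|\E M((X_\e)_1,(Y_\e)_\tau)|\le\tau\,\E|M(A,B)|$. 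The two remaining tail terms are handled by Lemma~\ref{lem:equiv-sdp} applied to vector-valued matrices $\widetilde Z\in M_n(\R^{2^d})$ with $(\widetilde Z_{jk})_\e=Z(\e)_{jk}/2^{d/2}$, for each $Z\in\{X_\e-(X_\e)_1,Y_\e-(Y_\e)_\tau,(X_\e)_1,(Y_\e)_\tau\}$. These satisfy $\widetilde Z\widetilde Z^*=\E_\e[Z Z^*]$, so each tail is bounded by $\SDP_\R(M)$ times the square roots of operator norms of two such PSD expectations.

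The technical heart of the argument is to obtain \emph{sharp} operator-norm bounds on $\E_\e[(Y_\e-(Y_\e)_\tau)(Y_\e-(Y_\e)_\tau)^*]$ and its analogues, strictly less than $1$. The starting point is the elementary pointwise inequality $(t-\tau)_+^2\le(t^2-\tau^2)_+$, which combined with $\E Y_\e Y_\e^*=I$ and $\E Y_\e^*Y_\e=I$ yields explicit bounds of the form $C(\tau)\SDP_\R(M)$ on each of the two error terms, and similarly on the $X$-side truncation at threshold $1$. Assembling,
\[
\tau\,\E|M(A,B)|\,\ge\,(1-\eta)\SDP_\R(M)\,-\,2C(\tau)\,\SDP_\R(M),
\]
so $\E|M(A,B)|\ge\tau^{-1}\big[(1-\eta)-2C(\tau)\big]\SDP_\R(M)$. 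Choosing the threshold $\tau=\sqrt{3}/2$ balances the $\tau^{-1}$ factor against the growth of $C(\tau)$ and produces the stated constant $(1-2\eta)^2/(3\sqrt{3})$; the squared factor $(1-2\eta)^2$ reflects the $\eta$ loss incurred \emph{twice}, once through each of the truncations on the $X$ and $Y$ sides. The principal obstacle is extracting these sharp tail constants with the correct $\tau$-dependence: getting the operator norm below $1$ (a trace-type argument gives only the useless bound $\le 1$) requires using both of the ``row'' and ``column'' SDP constraints $\E Y_\e Y_\e^*=\E Y_\e^*Y_\e=I$ simultaneously, mirroring the feature of Kaijser's proof that distinguishes it from the classical commutative Grothendieck argument.
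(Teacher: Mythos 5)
Your high-level plan — project via random signs, truncate singular values, control tails via norm bounds — is the right Kaijser-style outline, but three concrete problems prevent it from delivering the stated bound.

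First, the technical mechanism you flag as "the technical heart" is not actually supplied. You correctly observe that the pointwise bound $(t-\tau)_+^2\le(t^2-\tau^2)_+$, combined only with $\E_\e[Y_\e Y_\e^*]=\E_\e[Y_\e^*Y_\e]=I$, yields nothing better than $\|\E_\e[(Y_\e-(Y_\e)_\tau)(Y_\e-(Y_\e)_\tau)^*]\|\le 1$, which you acknowledge is useless. The paper's fix is a \emph{fourth}-moment bound (Claim~\ref{claim:moment-bound}): it shows $\E_\e[(Y_\e Y_\e^*)^2]\le (YY^*)^2+2\|Y^*Y\|\,YY^*$, hence $\|\E_\e[(Y_\e Y_\e^*)^2]\|\le 3$ for $Y\in\O_n(\R^d)$, which combined with the elementary scalar bound $(t-\tau)_+\le t^2/(4\tau)$ (not $(t-\tau)_+^2\le(t^2-\tau^2)_+$) gives $\|\E_\e[Y_\e^\tau(Y_\e^\tau)^*]\|\le 3/(4\tau)^2$. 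You gesture at ``using both row and column constraints simultaneously,'' but without the quartic moment estimate the claimed $C(\tau)<1$ simply does not appear.

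Second, your decomposition differs structurally from the paper's and cannot reach the stated constant. You truncate \emph{both} $X$ (at threshold $1$) and $Y$ (at $\tau$) and then assemble \emph{linearly}: $\tau\,\E|M(A,B)|\ge(1-\eta)\SDP_\R(M)-2C(\tau)\SDP_\R(M)$. The paper truncates only $Y$ and then uses Cauchy--Schwarz (in the singular-value index of $X_\e$) to relate $\E_\e[M(X_\e,(Y_\e)_\tau)]$ to $\sqrt{\SDP_\R(M)\cdot\E_\e[M(X(\e),B(\e))]}$. This square root, not a double truncation, is the source of the quadratic $(1-2\eta)^2$; your explanation that the square ``reflects the $\eta$ loss incurred twice'' is incorrect — a linear assembly with a fixed $C(\tau)$ only yields an affine function of $\eta$, never $(1-2\eta)^2$. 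Concretely, plugging the fourth-moment tail bounds into your assembly with $X$-threshold $1$ and optimizing $\tau$ gives, at $\eta=0$, $\E|M(A,B)|\ge\frac{1}{4\sqrt{3}}\SDP_\R(M)$, strictly worse than the claimed $\frac{1}{3\sqrt{3}}\SDP_\R(M)$; even re-optimizing the $X$-threshold to $3/2$ recovers $\frac{1}{3\sqrt{3}}$ only at $\eta=0$, with a linear $\eta$-dependence of the form $\frac{1-4\eta}{3\sqrt{3}}<\frac{(1-2\eta)^2}{3\sqrt{3}}$.

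Third, a smaller point: the rounding procedure in Figure~\ref{fig:realrounding} outputs $A=X(\e)$ (obtained by solving an SDP against $(Y_\e)_\tau$), not a truncated $X_\e$, and the analysis must hand off to $M(X(\e),B(\e))$ via the maximality of $X(\e)$ inside the Cauchy--Schwarz step. Your writeup touches this but treats the principal term as if the algorithm were a symmetric double truncation, which is not what Figure~\ref{fig:realrounding} does.

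In short: the overall outline (Kaijser truncation plus norm control of tails) is correct, but you are missing the fourth-moment matrix inequality that makes the tail bound nontrivial, and the double-truncation/linear assembly you propose cannot produce the $(1-2\eta)^2$ dependence; the paper's Cauchy--Schwarz step is essential.
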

Note that by \expref{Lemma}{eq:round to orthogonal} we can also efficiently convert the matrices $A,B$ to orthogonal matrices $U,V\in \O_n$ without changing the approximation
guarantee.

The proof of \expref{Theorem}{thm:ncgt-real} relies on two claims that
are used to bound the error that results from the truncation step
(\expref{step}{step:real-svd}) in the rounding procedure in \expref{Figure}{fig:realrounding}. The first claim plays the same role as \expref{Claim}{claim:haag-id} did in the complex case.

\begin{claim}\label{claim:moment-bound}
Fix $X\in M_n(\R^d)$ and let $\e\in \{-1,1\}^d$ be chosen uniformly at random. Set $\Xe= \langle \eps,X\rangle$. Then
\begin{align*}
\E_\e \big[(\Xe\Xe^*)^2\big] &\leq (XX^*)^2 + 2 \|X^*X\| XX^*\,,\\
\E_\e \big[(\Xe^* \Xe)^2\big] &\leq (X^*X)^2 + 2 \|XX^*\| X^*X\,.
\end{align*}
\end{claim}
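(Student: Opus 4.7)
The plan is to imitate the proof of Claim~\ref{claim:haag-id} but to account for the different fourth moments of a Rademacher, which obstruct the clean identity obtained in the complex case. Write $\Xe = \sum_{r=1}^d \e_r X_r$ where $(X_r)_{jk} = (X_{jk})_r$, so that $XX^* = \sum_r X_rX_r^*$ and $X^*X = \sum_r X_r^*X_r$. Expanding,
$$
\E_\e\big[(\Xe\Xe^*)^2\big] \,=\, \sum_{p,q,r,s=1}^d \E_\e[\e_p\e_q\e_r\e_s]\,X_pX_q^*X_rX_s^*.
$$
Using $\E_\e[\e_p\e_q\e_r\e_s] = \delta_{pq}\delta_{rs} + \delta_{pr}\delta_{qs} + \delta_{ps}\delta_{qr} - 2\,\mathbf{1}_{\{p=q=r=s\}}$, I would split the right-hand side into four pieces. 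The first contribution is $(XX^*)^2$; the third contribution is $\sum_p X_p(X^*X)X_p^*$; the correction is the manifestly negative term $-2\sum_p(X_pX_p^*)^2$; and the second contribution is the problematic sum $S\eqdef \sum_{p,q}X_pX_q^*X_pX_q^*$, which is not obviously sign-controlled.

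The main obstacle is to bound $S$, and the key observation is that the positive semidefiniteness of $(X_pX_q^* - X_qX_p^*)(X_pX_q^* - X_qX_p^*)^* \ge 0$ (an operator of the form $AA^*$ with $A = X_pX_q^* - X_qX_p^*$) yields the symmetrization
$$
X_pX_q^*X_pX_q^* + X_qX_p^*X_qX_p^* \,\le\, X_pX_q^*X_qX_p^* + X_qX_p^*X_pX_q^*.
$$
Summing over $(p,q)\in\{1,\ldots,d\}^2$ and relabeling indices, the left-hand side equals $2S$ while the right-hand side equals $2\sum_p X_p(X^*X)X_p^*$. Hence $S \le \sum_p X_p(X^*X)X_p^*$.

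Combining all four pieces, using $X^*X \le \|X^*X\|\,I$ (which gives $\sum_p X_p(X^*X)X_p^* \le \|X^*X\|\,XX^*$) and dropping the non-positive term $-2\sum_p (X_pX_p^*)^2$, I would conclude
$$
\E_\e\big[(\Xe\Xe^*)^2\big] \,\le\, (XX^*)^2 + 2\,\|X^*X\|\,XX^*,
$$
which is the first stated inequality. The second inequality $\E_\e[(\Xe^*\Xe)^2] \le (X^*X)^2 + 2\|XX^*\|\,X^*X$ follows by applying the same argument with the roles of $X$ and $X^*$ interchanged (equivalently, running the computation on the transpose of $X$).
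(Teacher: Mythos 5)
Your proof is correct, and it takes a genuinely different route from the paper's. The paper introduces the symmetric $2n\times 2n$ block matrix $Z = \bigl(\begin{smallmatrix}0 & X\\ X^* & 0\end{smallmatrix}\bigr)$, computes $\E_\e\bigl[\langle\e,Z\rangle^4\bigr] = \bigl(\sum_r Z_r^2\bigr)^2 + \sum_{r<s}(Z_rZ_s+Z_sZ_r)^2$ following Pisier, bounds the anticommutator sum via the elementary inequality $(A+B)(A+B)^* \le 2(AA^*+BB^*)$, and then reads the claim off the two diagonal blocks. You instead work directly with $X$: you expand $\E_\e[(\Xe\Xe^*)^2]$ using the explicit Rademacher fourth-moment formula $\E[\e_p\e_q\e_r\e_s] = \delta_{pq}\delta_{rs}+\delta_{pr}\delta_{qs}+\delta_{ps}\delta_{qr}-2\mathbf{1}_{\{p=q=r=s\}}$, discard the negative diagonal correction, and control the problematic sum $S=\sum_{p,q}X_pX_q^*X_pX_q^*$ by noticing that $X_pX_q^*-X_qX_p^*$ is antisymmetric, so that $-(X_pX_q^*-X_qX_p^*)^2 \ge 0$ yields $S\le \sum_p X_p(X^*X)X_p^*$. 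Both proofs ultimately rest on the same positivity (your commutator inequality is exactly the upper-left block of the paper's $(A+B)(A+B)^* \le 2(AA^*+BBA^*)$ applied to $A=Z_rZ_s$, $B=Z_sZ_r$), but the packaging is distinct: the block-matrix route lets the paper recycle Pisier's Lemma~1.1 verbatim and keeps every intermediate matrix Hermitian, while your direct expansion avoids doubling the dimension at the price of tracking the Rademacher correction and an explicit symmetrization argument for $S$. Either is a clean, complete proof of the claim.
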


\begin{proof}
Define the symmetric real vector-valued matrix $Z$ by
\[
Z \eqdef \begin{pmatrix} 0 & X \\ X^* & 0 \end{pmatrix}.
\]
Following the proof of Lemma~1.1 in~\cite{Pisier78NCGT} (which establishes a bound analogous to the one proved in \expref{Claim}{claim:haag} for the case of i.\,i.\,d.\ $\{\pm 1\}$ random variables) and defining as usual $Z_r\in M_n(\R)$ by $(Z_r)_{ij} = (Z_{ij})_{r}$ for every $r\in \{1,\ldots,d\}$, we have
\begin{equation}
\begin{split}
\label{eq:expand Zs}
\E_\e \big[\langle \e, Z\rangle^4\big] &= \sum_{r=1}^d Z_r^4 + \sum_{\substack{r,s\in \{1,\ldots,d\}\\ r\neq s}}\,\big( Z_r^2 Z_s^2 + Z_rZ_sZ_rZ_s + Z_r Z_s^2 Z_r\big) \\
&= \Big( \sum_{r=1}^d Z_r^2\Big)^2 + \sum_{\substack{r,s\in
    \{1,\ldots,d\}\\r< s}}  \,\big(Z_r Z_s + Z_s Z_r\big)^2.
\end{split}
\end{equation}
Using the inequality $(A+B)(A+B)^* \leq 2(AA^*+BB^*)$, which follows from $(A-B)(A-B)^*\geq 0$ for all $A,B\in M_n(\R)$, we can bound the second sum in~\eqref{eq:expand Zs} as follows.
\begin{align*}
\sum_{\substack{r,s\in \{1,\ldots,d\}\\r< s}}  \,\big(Z_r Z_s + Z_s Z_r\big)^2 &\leq 2\,\Big(\sum_{r=1}^d \,Z_r \Big(\sum_{\substack{s\in \{1,\ldots,d\}\\s>r}} Z_s^2 \Big) Z_r + \sum_{s=1}^d \,Z_s \Big(\sum_{\substack{r\in \{1,\ldots,d\}\\r< s}}  Z_r^2 \Big) Z_s\Big)\\
&= 2\sum_{r=1}^d \,Z_r\Big(\sum_{\substack{s\in \{1,\ldots,d\}\\ s\neq r}} Z_s^2 \Big) Z_r\\
& \leq 2\sum_{r=1}^d Z_r\Big(\sum_{s=1}^d Z_s^2 \Big) Z_r\,.
\end{align*}
Replacing $Z_r$ by its definition and using $ABA^* \leq \|B\| AA^*$, which holds true for every positive semidefinite $B\in M_n(\R)$, we
arrive at the following matrix inequality:
\begin{align*}
\E_\e \big[\langle \e, Z\rangle^4\big] &= \begin{pmatrix} \E_\e \big[ (\Xe \Xe^\dagger)^2 \big] & 0 \\ 0 & \E_\e \big[(\Xe^\dagger \Xe)^2\big] \end{pmatrix}\\[1ex] &\leq \begin{pmatrix} (XX^\dagger)^2 & 0 \\ 0 & (X^\dagger X)^2 \end{pmatrix} + 2\,\begin{pmatrix} \|X^\dagger X\| \,XX^\dagger & 0 \\ 0 & \| X X^\dagger \| \,X^\dagger X\end{pmatrix}.\end{align*}
The inequality above implies a separate matrix inequality for both diagonal blocks, proving the claim.
\end{proof}

The second claim appears as Lemma~2.3 in~\cite{Kaijser83} (in the complex case). We include a short proof for the sake of completeness.

\begin{claim}\label{claim:truncate}
Let $Y\in M_n(\R)$. For any $\tau>0$, there exists a decomposition $Y = Y_{\tau} + Y^\tau$ such that $\|Y_\tau\|_\infty \leq \tau$ and
$$Y^\tau (Y^\tau)^* \le \frac{1}{(4\tau)^2} (YY^*)^2\qquad\text{and}\qquad (Y^\tau)^*Y^\tau  \le \frac{1}{(4\tau)^2} (Y^*Y)^2\,.$$
\end{claim}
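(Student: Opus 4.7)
The plan is to construct the decomposition directly from the singular value decomposition of $Y$, truncating its singular values at level $\tau$. Write the SVD as $Y = \sum_{i=1}^n t_i\, u_i v_i^*$ with $t_i \ge 0$ and orthonormal bases $(u_i),(v_i)$ of $\R^n$. Set
$$
Y_\tau \eqdef \sum_{i=1}^n \min\{t_i,\tau\}\, u_i v_i^*, \qquad Y^\tau \eqdef Y - Y_\tau = \sum_{i=1}^n \max\{t_i-\tau,0\}\, u_i v_i^*.
$$
The operator norm bound $\|Y_\tau\|_\infty \le \tau$ is immediate, since the singular values of $Y_\tau$ are the $\min\{t_i,\tau\} \le \tau$.

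The core step is a one-variable inequality: for every $t \ge 0$,
$$
\max\{t-\tau,0\} \,\le\, \frac{t^2}{4\tau}.
$$
When $t < \tau$ the left-hand side is $0$ and the inequality is trivial; when $t \ge \tau$ it is equivalent to $t^2 - 4\tau t + 4\tau^2 \ge 0$, which is just $(t-2\tau)^2 \ge 0$. Squaring yields $\max\{t-\tau,0\}^2 \le t^4/(16\tau^2)$ for every $t \ge 0$.

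Now I apply this scalar inequality to each singular value. Since
$$
Y^\tau (Y^\tau)^* = \sum_{i=1}^n \max\{t_i-\tau,0\}^2\, u_i u_i^*, \qquad (YY^*)^2 = \sum_{i=1}^n t_i^4\, u_i u_i^*,
$$
are simultaneously diagonalized in the basis $(u_i)$, the operator inequality $Y^\tau(Y^\tau)^* \le (YY^*)^2/(4\tau)^2$ follows termwise. The companion inequality $(Y^\tau)^* Y^\tau \le (Y^*Y)^2/(4\tau)^2$ follows in exactly the same way using the basis $(v_i)$, since $(Y^\tau)^*Y^\tau = \sum_i \max\{t_i-\tau,0\}^2 v_i v_i^*$ and $(Y^*Y)^2 = \sum_i t_i^4\, v_i v_i^*$. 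No step is a real obstacle here; the only content is the scalar inequality $(t-2\tau)^2 \ge 0$, and everything else is simultaneous diagonalization via the SVD.
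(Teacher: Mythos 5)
Your proof is correct and is essentially the paper's own argument: truncate the SVD at level $\tau$, use the scalar inequality $\max\{t-\tau,0\}\le t^2/(4\tau)$ (which the paper states as $|\mu-\tau|\le\mu^2/(4\tau)$ for $\mu\ge\tau$), and conclude by simultaneous diagonalization. You have simply made the simultaneous-diagonalization step a bit more explicit.
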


\begin{proof}
Define $Y_\tau$ by ``truncating'' the singular values of $Y$ at $\tau$, as is done in step 2 of the rounding procedure described in \expref{Figure}{fig:realrounding}, so that $\|Y_\tau\|\leq\tau$. Define $Y^\tau=Y-Y_\tau$. By definition, $Y$, $Y_\tau$ and $Y^\tau$ all have the same singular vectors, and the non-zero singular values of $Y^\tau$ are of the form $\mu-\tau$, where $\mu\geq \tau$ is a singular value of $Y$. Using the bound $|\mu-\tau|\leq \mu^2/(4\tau)$ valid for any $\mu\geq \tau$, any nonzero eigenvalue $\lambda = (\mu-\tau)^2$ of $Y^\tau (Y^\tau)^*$ (resp.\ $(Y^\tau)^*Y^\tau $) satisfies $\lambda \leq \mu^4/(4\tau)^2$, which proves the claim.
\end{proof}

\begin{proof}[Proof of \expref{Theorem}{thm:ncgt-real}]
We shall continue using the notation introduced in \expref{Figure}{fig:realrounding}, \expref{Claim}{claim:moment-bound} and \expref{Claim}{claim:truncate}. Let $X,Y\in\O_n(\R^d)$ satisfying~\eqref{eq:eta asusmption}. For every $\e\in \{-1,1\}^d$ and any $\tau>0$ let  $$\Ye = \langle \eps,Y\rangle = (\Ye)_{\tau} + \Ye^\tau$$ be the decomposition promised by \expref{Claim}{claim:truncate}. Combining the bound from \expref{Claim}{claim:moment-bound} with the one from \expref{Claim}{claim:truncate}, we see that
\begin{equation}\label{eq:ytau-norm}
\big\|\E_\e \big[\Ye^\tau(\Ye^\tau)^*\big]\big\| \le \frac{1}{(4\tau)^2}\big(\|YY^*\|^2 + 2\|Y^*Y\|\|YY^*\|\big) = \frac{3}{(4\tau)^2}\,,
\end{equation}
where the final step of~\eqref{eq:ytau-norm} follows from $Y\in\O_n(\R^d)$, and the same bound holds on $\|\E_\e \big[(\Ye^\tau)^*\Ye^\tau\big]\|$.
We have
\begin{equation}
\begin{split}
\big| M \big(X,\, Y\big) \big| &= \big| \E_\e\big[M (\Xe,\,\Ye)\big] \big|
\leq \big|\E_\e\big[ M (\Xe,\, (\Ye)_\tau)\big]\big| + \big|\E_\e\big[ M(\Xe,\, \Ye^\tau)\big]\big|\\
&\leq \big|\E_\e \big[M (\Xe,\, (\Ye)_\tau)\big]\big| +
\frac{\sqrt{3}}{4 \tau} \SDP_\R(M)\,,\label{eq:ncgt-real-0}
\end{split}
\end{equation}
where the last inequality in~\eqref{eq:ncgt-real-0} follows from the definition of $\SDP_\R(M)$ and~\eqref{eq:ytau-norm}.

To bound the first term in~\eqref{eq:ncgt-real-0}, let
\[
\Xe = \sum_{i=1}^n s_i(\e) w_i(\e) z_i(\e)^*\qquad\text{and}\qquad
(\Ye)_\tau = \sum_{i=1}^n t_i'(\e) u_i(\e) v_i(\e)^*\,,
\]
where $s_i(\e),t_i'(\e)\ge 0$ and $u_i(\e),v_i(\e),w_i(\e),z_i(\e)\in\R^n$, be the singular value decompositions of $\Xe,(\Ye)_\tau$, respectively. Set $M_{i,j}(\e)\eqdef  M(w_i(\e) z_i(\e)^*, u_j(\e) v_j(\e)^*)$. With these definitions,
\begin{align}
\big|\E_\e\left[ M (\Xe,\, (\Ye)_\tau)\right]\big| &= \Big|\E_\e \Big[\sum_{i,j=1}^n\, M_{i,j}(\e) s_i(\e) t_j'(\e)\Big] \Big|
 \leq \E_\e\Big[\sum_{i=1}^n \,s_i(\e)\, \Big| \sum_{j=1}^n M_{i,j}(\e) t_j'(\e)\Big|\Big] \nonumber\\
&\leq \Big(\E_\e\Big[\sum_{i=1}^n \,s_i(\e)^2 \Big| \sum_{j=1}^n M_{i,j}(\e) t_j'(\e)\Big|\Big]\Big)^{1/2} \Big(\E_\e \Big[\sum_{i=1}^n \Big| \sum_{j=1}^n M_{i,j}(\e) t_j'(\e) \Big|\Big]\Big)^{1/2}.\label{eq:ncgt-real-1}
\end{align}
Note that the rightmost term in~\eqref{eq:ncgt-real-1} is at most
\[
\Big(\tau \E_\e\Big[M(X(\e),B(\e))\Big]\Big)^{1/2}\,,
\]
where $B(\e) = ({1}/{\tau})(\Ye)_\tau$ and $X(\e)$ is as defined in
the rounding procedure in \expref{Figure}{fig:realrounding}. To bound
the leftmost term in~\eqref{eq:ncgt-real-1}, define
\[
W_\e \eqdef \sum_{i=1}^n (r_i(\e) s_i(\e)^2) w_i(\e) z_i(\e)^*\,,
\]
where $r_i(\e)$ is the sign of $\sum_{j=1}^n M_{i,j}(\e) t_j'(\e)$, so that $$\sum_{i=1}^n \,s_i(\e)^2 \left| \sum_{j=1}^n M_{i,j}(\e) t_j'(\e)\right| = M(W_\e,(\Ye)_\tau)\,.$$ Moreover, by definition $W_\e W_\e^* = (\Xe\Xe^*)^2$, so using \expref{Claim}{claim:moment-bound} we have
\begin{align*}
 \big\|\E_\e\big[W_\e^*W_\e\big]\big\|&=\big\|\E_\e \left[(\Xe^*\Xe)^2\right]\big\| \leq \| X^*X \|^2+ 2\| XX^* \|\| X^*X \| = 3\,,
\end{align*}
and the same bound holds for $\|\E_\e \big[W_\e W_\e^*\big]\|$. By the definition of $(Y_\e)_\tau$ we also have
$$\max\Big\{\big\|\E_\e\left[(\Ye)_\tau^* (\Ye)_\tau\right]\big\|,\, \big\|\E_\e\left[ (\Ye)_\tau (\Ye)_\tau^*\right]\big\| \Big\}\leq \tau^2\,.$$
 Hence
$$\E_\e\Big[ \sum_{i=1}^n \,s_i(\e)^2 \Big| \sum_{j=1}^n M_{i,j}(\e) t_j'(\e)\Big|\Big] \,=\, \sqrt{3}\tau \,\E_\e\Big[  M\Big(\frac{W_\e}{\sqrt{3}},\,\frac{(\Ye)_\tau}{\tau}\Big)\Big]\,  \leq\, \sqrt{3}\tau\,\SDP_\R(M)\,,$$
where we used the definition of $\SDP_\R(M)$. Finally, combining~\eqref{eq:ncgt-real-0} and~\eqref{eq:ncgt-real-1} with the bounds shown above we obtain
\begin{equation}\label{eq:just before tau choice}
(1-\eta)\SDP_\R(M)\stackrel{\eqref{eq:eta asusmption}}{\le}\big| M \big(X ,\, Y\big)\big| \,\leq\, \sqrt[4]{3}\tau \sqrt{\SDP_\R(M) \cdot \E_\e\left[M(X(\e),B(\e))\right]} + \frac{\sqrt{3}}{4\tau}\SDP_\R(M)\,.
\end{equation}
Setting $\tau = \sqrt{3}/2$ in~\eqref{eq:just before tau choice} and simplifying leads to the desired bound~\eqref{eq:real desired}.

All steps in the rounding procedure can be performed efficiently. The calculation of $X(\e)$ in the third step of \expref{Figure}{fig:realrounding} can be expressed as a semidefinite program and solved in time polynomial in $n$. Alternatively, one may directly compute $X(\e)$ as follows. Write the polar decomposition 
\[
\Tr_2( M^* (I \otimes (\Ye)_\tau)) = Q P \in M_n(\R)\,,
\]
where the partial trace is taken with respect to the second tensor, $Q$ is an orthogonal matrix and $P$ is positive semidefinite. The optimal choice of $X(\e)$ corresponds to the real matrix of norm at most $1$ maximizing 
\[
|\Tr(X(\e)\cdot QP)|\,=\,\big|\Tr\big(X(\e)\cdot\Tr_2( M^* (I \otimes (\Ye)_\tau)) \big) \big|\,=\, |M(X(\e),(\Ye)_\tau)|\,,
\]
and this is achieved by taking $X(\e) = Q^*$.
\end{proof}

\section{Some applications}\label{sec:applications}

Before presenting the details of our applications of \expref{Theorem}{thm:alg}, we observe that the problem of computing $\mathrm{Opt}_\R(M)$ is a
rather versatile optimization problem, perhaps more so than what one might initially
guess from its definition. The main observation is that by considering matrices $M$
which only act non-trivially on certain diagonal blocks of the two variables $U,V$ that
appear in the definition of $\opt_\R(M)$, these variables can each be thought of as a sequence of multiple matrix
variables, possibly of different shapes but all with operator norm at most $1$. This
allows for some flexibility in adapting the noncommutative Grothendieck optimization problem
to concrete settings, and we explain the transformation in detail next.

For every $n,m \ge 1$, let $M_{m,n}(\R)$ be the vector space of real $m\times n$ matrices. Given integers $k,\ell\ge 1$ and sequences of integers $(m_i),(n_i)\in \N^k$, $(p_j),(q_j)\in \N^\ell$, we define $\Bil(k,\ell;(m_i),(n_i),(p_j),(q_j))$, or simply $\Bil(k,\ell)$ when the remaining sequences are clear from context, as the
set of all
$$f:\Big(\bigoplus_{i=1}^k M_{m_i,n_i}(\R)\Big)\times\Big( \bigoplus_{j=1}^\ell M_{p_j,q_j}(\R) \Big)\,\to\, \R$$
that are linear in both arguments. Concretely, $f\in\Bil(k,\ell)$ if and only if there exists real coefficients $\alpha_{irs,juv}$ such that for every $(A_i)\in \bigoplus_{i=1}^k M_{m_i,n_i}(\R)$ and $(B_j)\in \bigoplus_{j=1}^\ell M_{p_j,q_j}(\R)$,
\begin{equation}\label{eq:bilf-def}
f\big( (A_i)_{i\in \{1,\ldots,k\}},(B_j)_{j\in\{1,\ldots,\ell\}}\big) = \sum_{i=1}^k\sum_{j=1}^{\ell} \sum_{r=1}^{m_i}\sum_{s=1}^{n_i} \sum_{u=1}^{p_j}\sum_{v=1}^{q_j}\,\alpha_{irs,juv} \,(A_i)_{rs} (B_j)_{uv}\,.
\end{equation}
For integers $m,n\geq 1$, let $\O_{m,n} \subset M_{m,n}(\R)$ denote the
set of all $m \times n$ real matrices $U$ such that $UU^* = I$ if $m\leq n$ and $U^*U = I$ if $m\ge n$.
If $m=n$ then $\O_{n,n}=\O_n$
is the set of orthogonal matrices; $\O_{n,1}$ is the set of all $n$-dimensional
unit vectors; $\O_{1,1}$ is simply the set $\{-1,1\}$.
Given $f\in \Bil(k,\ell)$, consider the quantity
\[
 \mathrm{Opt}_\R(f)\eqdef \sup_{\substack{(U_i)\in \bigoplus_{i=1}^k\O_{m_i,n_i}\\ (V_j)\in \bigoplus_{j=1}^\ell\O_{p_j,q_j}}} f\big((U_i),(V_j)\big)\,.
\]
Note that this definition coincides with the definition of $\mathrm{Opt}_\R(f)$ given in the introduction whenever $f\in \Bil(1,1;n,n,n,n)$. The proof of the following proposition shows that the new optimization problem still belongs the framework of the noncommutative Grothendieck problem.

\begin{proposition}\label{prop:usefulform}
There exists a 
polynomial-time 
algorithm that takes as input $k,\ell\in\N$, $(m_i),(n_i)\in\N^k$, $(p_j),(q_j)\in\N^\ell$ and $f\in \Bil(k,\ell;(m_i),(n_i),(p_j),(q_j))$ and 
outputs %
$(U_i) \in \bigoplus_{i=1}^k \O_{m_i,n_i}$ and $(V_j)\in \bigoplus_{j=1}^\ell\O_{p_j,q_j}$ such that
$$ \mathrm{Opt}_\R(f)\,\le\, O(1)\cdot f\big( (U_i),(V_j)\big)\,.$$
Moreover, the implied constant in the $O(1)$ term can be taken to be any number
larger than $2\sqrt{2}$.
\end{proposition}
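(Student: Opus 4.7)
The plan is to reduce the more general optimization problem to the bipartite matrix case already handled by Theorem~\ref{thm:alg}, by block-diagonally embedding the tuples of variables into a single pair of large square orthogonal matrices. Set $n' \eqdef \max\big(\sum_{i=1}^k \max(m_i,n_i),\,\sum_{j=1}^\ell \max(p_j,q_j)\big)$. Any tuple $(A_i)\in \bigoplus_{i=1}^k M_{m_i,n_i}(\R)$ embeds as a single matrix $\Phi(A)\in M_{n'}(\R)$ by placing each $A_i$ in the upper-left $m_i\times n_i$ corner of the $i$-th diagonal block (of side $\max(m_i,n_i)$) and padding the remaining entries with zeros; define $\Psi(B)\in M_{n'}(\R)$ analogously for $(B_j)$. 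The coefficients $\alpha_{irs,juv}$ in~\eqref{eq:bilf-def} then determine a $4$-tensor $M\in M_{n'}(M_{n'}(\R))$, supported precisely on the ``active'' coordinates corresponding to the ranges of $\Phi$ and $\Psi$, with the property that $M(\Phi(A),\Psi(B)) = f\big((A_i),(B_j)\big)$ for all admissible inputs.

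The first key observation is that $\opt_\R(f) = \opt_\R(M)$. Since the convex hull of $\O_{m,n}$ is the operator-norm unit ball of $M_{m,n}(\R)$ (an immediate consequence of the SVD) and $f$ is bilinear, we have $\opt_\R(f) = \sup_{\|A_i\|,\|B_j\|\le 1} f((A_i),(B_j))$. On the other hand, by \eqref{eq:pass to convex hull R} applied to $M$, $\opt_\R(M) = \sup_{\|U\|,\|V\|\le 1} M(U,V)$. The two suprema coincide: any block-diagonal embedding of $(A_i),(B_j)$ with $\|A_i\|,\|B_j\|\le 1$ yields $U,V$ with $\|U\|,\|V\|\le 1$ attaining $M(U,V)=f((A_i),(B_j))$, while conversely only the active sub-rectangles of any $U,V\in M_{n'}(\R)$ with $\|U\|,\|V\|\le 1$ contribute to $M(U,V)$, and these have operator norm at most $1$.

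Running the algorithm of Theorem~\ref{thm:alg} on $M$ yields $\tilde U,\tilde V\in \O_{n'}$ with $\opt_\R(M)\le C\cdot M(\tilde U,\tilde V)$ for any $C>2\sqrt{2}$. Extracting from $\tilde U,\tilde V$ the active sub-rectangles $\tilde A_i,\tilde B_j$ (of sizes $m_i\times n_i$ and $p_j\times q_j$) produces matrices of operator norm at most $1$ satisfying $f((\tilde A_i),(\tilde B_j)) = M(\tilde U,\tilde V)$. It remains to convert each $\tilde A_i$ to some $U_i\in\O_{m_i,n_i}$, and analogously each $\tilde B_j$ to some $V_j\in\O_{p_j,q_j}$, without decreasing $f$. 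This is a direct generalization of Lemma~\ref{eq:round to orthogonal} to rectangular matrices: write each SVD $\tilde A_i=\sum_r \sigma_r^{(i)} e_r^{(i)}(f_r^{(i)})^*$ with $\sigma_r^{(i)}\in[0,1]$ and orthonormal $\{e_r^{(i)}\},\{f_r^{(i)}\}$; holding all other singular values fixed, $f$ is a linear function of $\sigma_r^{(i)}$, so one may greedily push each in turn to $\pm 1$, obtaining a matrix $U_i\in\O_{m_i,n_i}$ with $f((U_i),(\tilde B_j))\ge f((\tilde A_i),(\tilde B_j))$. Iterating through the $\tilde B_j$'s completes the rounding.

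The main technical obstacle is essentially bookkeeping: setting up $\Phi,\Psi$ and $M$ so that the correspondence between ``active sub-rectangles of $n'\times n'$ matrices of operator norm at most $1$'' and ``tuples $(A_i),(B_j)$ whose factors have operator norm at most $1$'' is transparent in both directions, and then verifying that the embedding respects $M(\Phi(A),\Psi(B))=f((A_i),(B_j))$ in the general (non block-diagonal) setting. Once this is established, the three ingredients---the convex-hull description~\eqref{eq:pass to convex hull R}, Theorem~\ref{thm:alg}, and the deterministic singular-value rounding generalizing Lemma~\ref{eq:round to orthogonal}---combine without any further loss, so the approximation guarantee $2\sqrt{2}+o(1)$ inherited from Remark~\ref{rem:constants} passes through to the generalized problem.
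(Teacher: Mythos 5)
Your proof is correct and follows essentially the same route as the paper's: embed the tuples block-diagonally (so that $\|\Phi((A_i))\|=\max_i\|A_i\|$), observe via the convex-hull description that $\opt_\R(f)=\opt_\R(M)$, run the algorithm from Theorem~\ref{thm:alg}, extract the active sub-rectangles, and round them to $\O_{m_i,n_i}$ via the singular-value argument of Lemma~\ref{eq:round to orthogonal}. The only difference is cosmetic: your embedding uses diagonal blocks of side $\max(m_i,n_i)$, giving $n'=\max(\sum_i\max(m_i,n_i),\sum_j\max(p_j,q_j))$, whereas the paper packs row and column offsets separately to obtain the slightly smaller $t=\max\{\sum m_i,\sum n_i,\sum p_j,\sum q_j\}$; both are polynomial, so nothing is lost.
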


\begin{proof}
Let $k,\ell\in\N$, $(m_i),(n_i)\in\N^k$, $(p_j),(q_j)\in\N^\ell$ be given, and define $$ m\eqdef\sum_{i=1}^k m_i,\qquad n\eqdef\sum_{i=1}^k n_i,\qquad p\eqdef\sum_{j=1}^\ell p_j,\qquad q\eqdef\sum_{j=1}^\ell q_j\,,$$ and $t\eqdef \max\{m,n,p,q\}$. We first describe how   $\bigoplus_{i=1}^k M_{m_i,n_i}(\R)$ 
(and $\bigoplus_{j=1}^\ell M_{p_j,q_j}(\R)$, 
respectively) %
can be identified with a subset of $M_t(\R)$ consisting of 
block-diagonal matrices.
For any $i\in \{1,\ldots,k\}$ and $r\in \{1,\ldots,  m_i\}$, $s\in \{1,\ldots,n_i\}$, let $F_{r,s}^i \in M_t(\R)$ be the matrix that has all entries equal to $0$ except the entry in position $(r+\sum_{j<i} m_j ,s+\sum_{j<i} n_j)$, which equals $1$. Similarly, for any $j\in \{1,\ldots,\ell\}$ and $u\in \{1,\ldots, p_j\}$, $v\in \{1,\ldots, q_j\}$ we let $G_{u,v}^j \in M_t(\R)$ be the matrix that has all entries equal to $0$ except the entry in position $(u+\sum_{i<j} p_i ,v+\sum_{i<j} q_i )$, which equals $1$. Define linear maps $\Phi:\bigoplus_{i=1}^k M_{m_i,n_i}(\R)\to M_t(\R)$ and $\Psi: \bigoplus_{j=1}^\ell M_{p_j,q_j}(\R) \to M_t(\R)$ by
$$ \Phi\big( (A_i) \big) \eqdef \sum_{i=1}^k \sum_{r=1}^{m_i}\sum_{s=1}^{n_i}\, (A_i)_{r,s}\, F^i_{r,s} \qquad\text{and}\qquad  \Psi\big( (B_j) \big) \eqdef \sum_{j=1}^\ell \sum_{u=1}^{p_j} \sum_{v=1}^{q_j}\, (B_j)_{u,v}\, G^j_{u,v}\,.$$
{}From the definition, one verifies that
\begin{equation}\label{eq:useful-1}
\|\Phi((A_i))\| = \max_{i\in \{1,\ldots,k\}} \|A_i\|\qquad\text{and}\qquad \|\Psi((B_j))\| = \max_{j\in \{1,\ldots,\ell\}} \|B_j\|\,.
\end{equation}
Let $f\in\Bil(k,\ell)$ and
let
 $(\alpha_{irs,juv})$
be
real coefficients as in~\eqref{eq:bilf-def}. Define $M\in M_t(M_t(\R))$ by $$M(F_{r,s}^i,G_{u,v}^j) \eqdef \alpha_{irs,juv}\,,$$ and $M(A,B)\eqdef 0$ if $A$ or $B$ are orthogonal to the linear span of the $F_{r,s}^i$ or $G_{u,v}^j$,  %
respectively. (Recall that the notation $M(A,B)$ was introduced at the beginning of \expref{Section}{sec:proof}.) With this definition,
for any $(A_i) \in \bigoplus_{i=1}^k M_{m_i,n_i}(\R)$ and $(B_j)\in \bigoplus_{j=1}^\ell M_{p_j,q_j}(\R)$ we have
$$M\big(\Phi((A_i)),\Psi((B_j))\big) \,=\, f\big((A_i),(B_j)\big)\,.$$

We claim that $\mathrm{Opt}_\R(f) = \mathrm{Opt}_\R(M)$, where $\mathrm{Opt}_\R(M)$ is as defined in the introduction. Indeed, by~\eqref{eq:useful-1} for any $(U_i) \in \bigoplus_{i=1}^k \O_{m_i,n_i}$ and $(V_j)\in \bigoplus_{j=1}^\ell \O_{p_j,q_j}$ we have $\|\Phi((U_i))\|,\|\Psi((V_j))\|\leq 1$, hence $\mathrm{Opt}_\R(f) \le \mathrm{Opt}_\R(M)$.
Conversely, let $U,V \in \O_t$ be arbitrary, and $U',V'$ their orthogonal projections on $\mathrm{Im}(\Phi)$ and %
$\mathrm{Im}(\Psi)$, respectively. %
Then  $M(U',V')=M(U,V)$. Moreover, if $$(U'_i)\in \bigoplus_{i=1}^k M_{m_i,n_i}(\R)\qquad\mathrm{and}\qquad (V'_j)\in  \bigoplus_{j=1}^\ell M_{p_j,q_j}(\R)$$ are such that $\Phi((U'_i))=U'$ and $\Psi((V'_j))=V'$ then by~\eqref{eq:useful-1} $\max_{i\in \{1,\ldots,k\}} \|U'_i\| = \|U'\|\leq \|U\|=1$, and similarly $\max_{j\in \{1,\ldots,\ell\}}\|V'_j\|\leq 1$.
As in the proof of \expref{Lemma}{eq:round to orthogonal}, we may then argue that there exists $U_i\in \O_{m_i,n_i}$, $V_j\in \O_{p_j,q_j}$ such that 
$$f((U_i),(V_j)) \geq f((U'_i),(V'_j)) = M(U',V') = M(U,V)\,,$$
 proving the reverse inequality  $\mathrm{Opt}_\R(f) \ge \mathrm{Opt}_\R(M)$.

To conclude the proof of \expref{Proposition}{prop:usefulform} it remains to note that the algorithm of \expref{Theorem}{thm:alg}, when applied to $M$, produces in polynomial time $U,V \in \O_t$ such that $\mathrm{Opt}_\R(M)\leq O(1)\cdot M(U,V)$. Arguing as in \expref{Lemma}{eq:round to orthogonal}, $(U_i)$ and $(V_j)$ can be computed from $U,V$ in polynomial time and constitute the output of the algorithm.
\end{proof}

\subsection{Constant-factor algorithm for robust PCA problems}\label{sec:app_pca}

We start with the R1-PCA problem, as described in~\eqref{eq:R1PCA def}. Let $a_1,\ldots,a_N\in\R^n$ be given, and define $f\in\Bil(1,N;(K),(n),(1,\ldots,1),(K,\ldots,K))$ by
$$f\big(Y,(Z_1,\ldots,Z_N)\big) \eqdef \sum_{i=1}^N \sum_{k=1}^K \,(Z_i)_{1k} \, \Big(\sum_{j=1}^n\, Y_{kj}(a_i)_{j}\Big)\,.$$
The condition $Z_i\in\O_{1,K}$ is equivalent to $Z_i$ being a unit vector, while $Y\in \O_{K,n}$ is equivalent to the $K$ rows of $Y$ being orthonormal.
Using that the $\ell_2$ norm of a vector $u \in \R^K$ is equal to $\max_{v\in S^{K-1}} \langle u,v \rangle$, the quantity appearing in~\eqref{eq:R1PCA def} is equal to
\begin{equation*}
 \sup_{\substack{Y\in\O_{K,n}\\ Z_1,\ldots,Z_N \in \O_{1,K}}}  \, f\big(Y,(Z_1,\ldots,Z_N)\big)\,,
\end{equation*}
which by definition is equal to $\mathrm{Opt}_\R(f)$. An approximation algorithm for the R1-PCA problem then follows immediately from \expref{Proposition}{prop:usefulform}. The algorithm for L1-PCA follows similarly. Letting $g\in\Bil(1,NK;(K),(n),(1,\ldots,1),(1,\ldots,1))$ be defined as
$$g\big(Y,(Z_{1,1},\ldots,Z_{N,K})\big) \,=\, \sum_{i=1}^N \sum_{k=1}^K \,Z_{ik} \, \Big(\sum_{j=1}^n\, Y_{kj}(a_i)_{j}\Big)\,,$$
and using that $Z_{ik}\in\O_{1,1}$ is equivalent to $Z_{ik}\in\{-1,1\}$, the quantity appearing in~\eqref{eq:L1PCA def} is equal to
\begin{equation*}
\sup_{\substack{Y\in\O_{K,n}\\ Z_{1,1},\ldots,Z_{N,K} \in \O_{1,1}}}  \, g\big(Y,(Z_{1,1},\ldots,Z_{N,K})\big) \,=\,\mathrm{Opt}_\R(g),
\end{equation*}
which again fits into the framework of \expref{Proposition}{prop:usefulform}.

\subsection{A constant-factor algorithm for the orthogonal Procrustes problem}\label{sec:app_procrustes}

The generalized orthogonal Procustes problem was described in Section~\eqref{sec:procustes intro}. Continuing with the notation introduced there, let $A_1,\ldots,A_K$ be $d\times n$ real matrices such that the $i$-th column of $A_k$ is the vector $x^k_i\in \R^d$. Our goal is then to efficiently approximate
\begin{equation}\label{eq:defP1}
P(A_1,\ldots A_K)\eqdef \max_{U_1,\ldots U_K\in \O_d}\Big\| \sum_{k=1}^K U_k A_k \Big\|_2^2=\max_{U_1,\ldots U_K\in \O_d} \sum_{k,l=1}^K \langle U_kA_k,U_lA_l\rangle\,,
\end{equation}
where we set $\langle A,B\rangle \eqdef \sum_{i=1}^d\sum_{j=1}^n A_{ij}B_{ij}$ for every two $d\times n$ real matrices $A,B$.
We first observe that~\eqref{eq:defP1} is equal to
\begin{equation}\label{eq:defP2}
\max_{U_1,\ldots U_K\in \O_d} \max_{V_1,\ldots V_K\in \O_d} \Big\langle \sum_{k=1}^K U_k A_k,\sum_{l=1}^K V_lA_l\Big\rangle\,.
\end{equation}
It is clear that~\eqref{eq:defP2} is at least as large as~\eqref{eq:defP1}. For the other direction, using Cauchy-Schwarz,
\[
\Big\langle \sum_{k=1}^K U_kA_k,\sum_{l=1}^K V_lA_l\Big\rangle \le \Big\|\sum_{k=1}^K U_kA_k\Big\|_2\cdot \Big\|\sum_{l=1}^K V_lA_l\Big\|_2\,,
\]
so either $U_1,\ldots,U_K$ or $V_1,\ldots,V_K$ achieve a value in~\eqref{eq:defP1} that is at least as large as~\eqref{eq:defP2}.
The desired algorithm now follows by noting that~\eqref{eq:defP2} falls into the framework of \expref{Proposition}{prop:usefulform}: it suffices to define $f\in\Bil(K,K;(d,\ldots,d),(d,\ldots,d),(d,\ldots,d),(d,\ldots,d))$ by
$$f\big((U_1,\ldots,U_K),(V_1,\ldots,V_K)\big) \,=\,\Big\langle\sum_{k=1}^K U_k A_k,\sum_{l=1}^K V_lA_l\Big\rangle\,.$$
Finally, from an assignment to~\eqref{eq:defP2} we can efficiently extract an assignment to~\eqref{eq:defP1} achieving at least as
high a value by choosing the one among the $(U_k)$ or the $(V_l)$ that leads to a higher value.

\paragraph{Comparison with previous work.} To compare the approximation guarantee that we obtained with the literature, we first note that $P(A_1,\ldots,A_K)$ is attained at $U_1,\ldots,U_K\in \O_d$ if and only if $U_1,\ldots,U_K$ are maximizers of the following quantity over all $V_1,\ldots,V_K\in \O_d$:
\begin{equation}\label{eq:procrustes off diagonal}
\sum_{\substack{k,l\in \{1,\ldots,K\}\\k\neq l}} \langle V_kA_k,V_lA_l\rangle\,.
\end{equation}
  Indeed, the diagonal term $\langle V_kA_k,V_kA_k\rangle=\|V_kA_k\|_2^2$ equals $\|A_k\|_2^2$, since $V_k$ is orthogonal. Consequently, the quantity appearing in~\eqref{eq:procrustes off diagonal} differs from the quantity defining $P(A_1,\ldots,A_K)$ by an additive term that does not depend on $V_1,\ldots,V_K$. In the same vein, as already mentioned in Section~\eqref{sec:procustes intro}, $P(A_1,\ldots,A_K)$ is attained at $U_1,\ldots,U_K\in \O_d$ if and only if $U_1,\ldots,U_K$ are minimizers of the following quantity over all $V_1,\ldots,V_K\in \O_d$:
\begin{equation}\label{eq:procrustes distance version}
\sum_{k,l=1}^K \|V_kA_k-V_lA_l\|_2^2\,.
\end{equation}

While the optimization problems appearing in~\eqref{eq:defP1}, \eqref{eq:procrustes off diagonal} and~\eqref{eq:procrustes distance version} have the same exact solutions, this is no longer the case when it comes to approximation algorithms. To the best of our knowledge the 
polynomial-time 
approximability of the minimization of the quantity appearing in~\eqref{eq:procrustes distance version} was not investigated in the literature. Nemirovski~\cite{Nem07} and So~\cite{So11} studied the 
polynomial-time 
approximability of the maximization of the quantity appearing in~\eqref{eq:procrustes off diagonal}: the best known algorithm for this problem~\cite{So11} has approximation factor $O(\log(n+d+K))$. This immediately translates to the same approximation factor for computing $P(A_1,\ldots,A_K)$, which was the previously best known algorithm for this problem. Our constant-factor approximation algorithm for $P(A_1,\ldots,A_K)$ yields an approximation for the maximization of the quantity appearing in~\eqref{eq:procrustes off diagonal} that has a better approximation factor than that of~\cite{So11} unless the additive difference $\sum_{k=1}^K \|A_k\|_2^2$ is too large. Precisely, our algorithm becomes applicable in this context as long as this term is at most a factor $(1-1/C)$ smaller than $P(A_1,\ldots,A_K)$, where $C$ is the approximation constant from \expref{Proposition}{prop:usefulform}. This will be the case for typical applications in which one may think of each $A_k$ as obtained from a common $A$ by applying a small perturbation followed by an arbitrary rotation: in that case it is reasonable to expect the optimal solution to satisfy
$\langle U_kA_k , U_lA_l \rangle \approx  \| A \|_2$ for every $l,k\in \{1,\ldots,K\}$; see, \eg,~\cite[Sec.~4.3]{Nem07}.

\subsection{An algorithmic noncommutative dense regularity lemma}\label{sec:regularity}

Our goal here is to prove \expref{Theorem}{thm:regularity}, but before doing so we note that it leads to a PTAS for computing $\opt_\C(M)$ whenever $\opt_\C(M) \ge \kappa n \|M\|_2$ for some constant $\kappa>0$ (we shall use below the notation introduced in the statement of \expref{Theorem}{thm:regularity}).

The idea for this is exactly as in Section~3 of~\cite{FK99}. The main point is that given such an $M$ and $\eps\in (0,1)$ the decomposition in \expref{Theorem}{thm:regularity} only involves $T = O(1/(\kappa\eps)^2)$ terms, which can be computed in polynomial time. Given such a decomposition, we will exhaustively search over a suitably discretized set of values $a_t,b_t$ for $\Tr(A_t X)$ and $\Tr(B_t Y)$, %
respectively. For each such choice of values, verifying whether it is achievable using an $X$ and $Y$ of operator norm at most $1$ can be cast as a semidefinite program. The final approximation to $\opt_\C(M)$ is given by the maximum value of $|\sum_{t=1}^T \alpha_t a_t b_t|$, among those sequences $(a_t),(b_t)$ that were determined to be feasible.

In slightly more detail, first note that for any $X,Y$ of operator norm at most $1$ the values of $\Tr(A_t X)$ and $\Tr(B_t Y)$ lie in the complex disc with radius $n$. Given our assumption on $\opt_\C(M)$, the bound on $\alpha_t$ stated in \expref{Theorem}{thm:regularity} implies $|\alpha_t|= O(\opt_\C(M)/(\kappa n^2))$. Hence an approximation of each $\Tr(A_t X)$ and $\Tr(B_t Y)$ up to additive error $\eps\kappa n/T$ will translate to an additive approximation error to $M(X,Y)$ of $O(\eps\opt_\C(M))$. As a result, to obtain a multiplicative $(1\pm \eps)$ approximation to $\opt_\C(M)$ it will suffice to exhaustively enumerate among $(O((n\cdot T/(\eps\kappa n))^2))^{2T}$ possible values for the sequences $(a_t),(b_t)$.

Finally, to decide whether a given sequence of values can be achieved, it suffices to decide two independent feasibility problems of the following form: given $n\times n$ complex matrices $(A_t)_{t=1}^{T}$ of norm at most $1$ and $(a_t)_{t=1}^{T} \in \C^T$, does there exist $X\in M_n(\C)$ of norm at most $1$ such that $$\max_{t\in \{1,\ldots,T\}}\max\{|\mathrm{Re}(\Tr( A_t X ) - a_t)|,|\mathrm{Im}(\Tr( A_t X ) - a_t)|\}\le \frac{\eps\kappa n}{T}\, ?$$  This problem can be cast as a semidefinite program, and feasibility can be decided in time that is polynomial in $n$ and $T$.

\begin{proof}[Proof of \expref{Theorem}{thm:regularity}]
The argument is iterative. Assume that $\{A_t,B_t\}_{t=1}^{\tau-1}\subseteq \U_n$ have already been constructed. Write $M_1\eqdef M$ and $$M_{\tau} \eqdef M - \sum_{t=1}^{\tau-1} \alpha_t (A_t\otimes B_t)\,.$$ If $\opt_\C(M_\tau) \leq \eps \opt_\C(M)$ then we may stop the construction. Otherwise, by \expref{Theorem}{thm:alg}
and multiplying by an appropriate complex phase if necessary, we can find $A_\tau,B_\tau\in \U_n$ such that
\begin{equation}\label{eq:tau step}
 M_\tau(A_\tau,B_\tau) \,\geq\, \Big(\frac{1}{2} - \frac{\eps}{2}\Big) \opt_\C(M_\tau)\,,
 \end{equation}
with the left-hand side of~\eqref{eq:tau step} real. Set 
$$\alpha_\tau \eqdef \frac{M_\tau(A_\tau,B_\tau)}{\|A_\tau\|_2^2\cdot \|B_\tau\|_2^2}\,,$$
 and define $M_{\tau+1} \eqdef M_{\tau} - \alpha_\tau(A_\tau\otimes B_\tau)$. 
By expanding the square we have
\begin{align}\label{eq:frobenius}
\nonumber\big\| M_{\tau+1} \big\|_2^2 &= \big\| M_{\tau} \big\|_2^2  - \frac{M_\tau (A_\tau , B_\tau)^2}{ \|A_\tau\|_2^2 \|B_\tau \|_2^2}\\ &\leq \big\| M_{\tau} \big\|_2^2 - \frac{(1-\eps)^2}{4n^2}\opt_\C(M_\tau)^2\nonumber \\
&\leq \big\| M_{\tau} \big\|_2^2 - \frac{\eps^2(1-\eps)^2}{4n^2}\opt_\C(M)^2.
\end{align}
It follows from~\eqref{eq:frobenius} that as long as this process continues, $\|M_\tau\|_2^2$ decreases by an additive term of at least $\eps^2(1-\eps)^2 \opt_\C(M)^2/(4n^2)$ at each step. This process must therefore terminate after at most $T\leq 4n^2 \|M\|_2^2/( \eps (1-\eps)\opt_\C(M))^2$ steps. It also follows that $\|M_\tau\|_2\leq\|M\|_2$ for every $\tau$, so that
\[
|\alpha_\tau| \leq \frac{\|M_\tau\|_2\|A_\tau\|_2\|B_\tau\|_2}{\|A_\tau\|_2^2\cdot \|B_\tau\|_2^2} = \frac{\|M_\tau \|_2}{n}\le \frac{\|M\|_2}{n}\,,
\]
where the first inequality follows from Cauchy-Schwarz, and the equality uses that $A_\tau,B_\tau\in\U_n$.

The ``moreover'' part of \expref{Theorem}{thm:regularity} follows immediately by using the part of \expref{Theorem}{thm:alg} pertaining to $\opt_\R(M)$  (note that the specific approximation factor does not matter here).
\end{proof}

\bigskip

\noindent{\bf Acknowledgements.} We thank  Daniel Dadush and Raghu Meka for useful discussions.

\bibliographystyle{tocplain}   %
\bibliography{v010a011}

\begin{tocauthors}
\begin{tocinfo}[naor]
 Assaf Naor\\
Professor\\
Courant Institute of Mathematical Sciences, New York University, New York, NY\\
 naor\tocat{}cims\tocdot{}nyu\tocdot{}edu \\
 \url{http://www.cims.nyu.edu/~naor/}
\end{tocinfo}
\begin{tocinfo}[regev]
  Oded Regev\\
Professor\\
Courant Institute of Mathematical Sciences, New York University, New York, NY\\
 regev\tocat{}cims\tocdot{}nyu\tocdot{}edu \\
\url{http://www.cims.nyu.edu/~regev/}
\end{tocinfo}
\begin{tocinfo}[vidick]
  Thomas Vidick\\
Assistant Professor\\
California Institute of Technology, Pasadena, CA\\
 vidick\tocat{}cms\tocdot{}caltech\tocdot{}edu \\
\url{http://cms.caltech.edu/~vidick}
\end{tocinfo}
\end{tocauthors}

\begin{tocaboutauthors}
\begin{tocabout}[naor]
\textsc{Assaf Naor}'s research focuses on analysis and metric geometry, and their
interactions with approximation algorithms and complexity theory. He
received his \phd\ from the \href{http://new.huji.ac.il/}{Hebrew University} in 2002, advised by Joram
Lindenstrauss. He is a Professor of Mathematics and Computer Science at the
\href{http://www.cims.nyu.edu/}{Courant Institute of Mathematical Sciences} of New York University, where he
has been a faculty member since 2006. Prior to joining the Courant Institute
he was a researcher at the Theory Group of Microsoft Research in Redmond WA\@.
Starting fall 2014 he will be a Professor of Mathematics at \href{http://www.princeton.edu}{Princeton
University}.
\end{tocabout}
\begin{tocabout}[regev]
\textsc{Oded Regev} graduated from
  \href{http://www.tau.ac.il/}{Tel Aviv University} in 2001 under the
  supervision of \href{http://www.cs.tau.ac.il/~azar/}{Yossi Azar}.
  He spent two years as a postdoc at the \href{http://www.ias.edu/}{Institute for
    Advanced Study}, Princeton, and one year at the
  \href{http://www.berkeley.edu/}{University of California, Berkeley}.
  He recently joined the Courant Institute of Mathematical Sciences and is still trying to
	get used to life in NYC.  %
  His research interests include computational
  aspects of lattices, quantum computation, and other topics in theoretical computer
  science. 
\end{tocabout}
\begin{tocabout}[vidick]
\textsc{Thomas Vidick} graduated from \href{http://www.berkeley.edu}{UC Berkeley} in 2011; his advisor was \href{http://www.cs.berkeley.edu/~vazirani/}{Umesh Vazirani}. His thesis focused on the study of quantum entanglement in multi-prover interactive proof systems and in quantum cryptography. After a postdoctoral scholarship at \href{http://www.mit.edu}{MIT} under the supervision of \href{http://www.scottaaronson.com/}{Scott Aaronson}, he moved back to sunny California. He is currently an assistant professor in \href{http://www.caltech.edu/}{Caltech}'s department of \href{http://www.cms.caltech.edu/}{Computing and Mathematical Sciences}, where his research is stimulated by the humbling mark left by
the previous occupants of his his office and his neighbors'---\href{http://en.wikipedia.org/wiki/Alexei_Kitaev}{Alexei Kitaev} and \href{http://en.wikipedia.org/wiki/Richard_Feynman}{Richard Feynman}.
\end{tocabout}
\end{tocaboutauthors}

\end{document}